\newcommand{\subjclass}[2][2020]{%
  \let\@oldtitle\@title%
  \gdef\@title{\@oldtitle\footnotetext{#1 \emph{Mathematics subject classification.} #2}}%
}
\newcommand{\keywords}[1]{%
  \let\@@oldtitle\@title%
  \gdef\@title{\@@oldtitle\footnotetext{\emph{Key words and phrases.} #1.}}%
}
\newtheorem{theorem}{Theorem}[section]
\newtheorem{lemma}[theorem]{Lemma}
\newtheorem{proposition}[theorem]{Proposition}
\newtheorem{corollary}[theorem]{Corollary}
\theoremstyle{definition}
\newtheorem{definition}[theorem]{Definition}
\theoremstyle{remark}
\newtheorem{remark}[theorem]{Remark}
\newcommand{\build}[3]{\mathrel{\mathop{\kern 0pt#1}\limits_{#2}^{#3}}}
\newcommand\SU{{\mathrm{SU}}}
\newcommand\SO{{\mathrm{SO}}}
\newcommand\Sp{{\mathrm{Sp}}}
\newcommand\GL{{\mathrm{GL}}}
\newcommand\U{{\mathrm U}}
\newcommand\Z{{\mathbb Z}}
\newcommand\Q{{\mathbb Q}}
\newcommand\N{{\mathbb N}}
\newcommand\R{\mathbb{R}}
\newcommand\C{\mathbb{C}}
\newcommand\Tr{\mathrm{Tr}}
\newcommand\vol{\mathrm{vol}}
\newcommand\Pbb{\mathbb{P}}
\newcommand\Pfr{\mathscr{P}}
\newcommand\E{\mathbb{E}}
\newcommand\Gfr{\mathscr{G}}
\newcommand\Ufr{\mathscr{U}}
\newcommand\YM{\mathrm{YM}}
\newcommand\Hom{\mathrm{Hom}}
\newcommand\Tbb{\mathbb{T}}
\newcommand\ev{{\mathrm{ev}}}
\title{The central heat trace on large compact classical groups}
\author{Thibaut Lemoine\thanks{Universit\'e de Strasbourg, CNRS, UMR 7501 -- Institut de Recherche Math\'ematique Avanc\'ee, 7 rue Ren\'e Descartes, 67000 Strasbourg, France. thibaut.lemoine@math.unistra.fr}, Myl\`ene Ma\"ida\thanks{Université de Lille, CNRS, Inria, UMR 8524 - Laboratoire Paul Painlevé, F-59000 Lille, France}}
\keywords{Heat kernel on Lie groups, asymptotic representation theory, random partitions, random surfaces, Yang--Mills theory, gauge/string duality}
\subjclass{Primary 05A17, 14H30, 43A75, 58J50, Secondary 81T13, 81T35} 
\begin{document}

\maketitle

\begin{abstract}
We study the large-$N$ asymptotics of the central trace of the heat kernel on compact classical groups. For every classical family $G_N\subset \mathrm{GL}_N(\C)$, we prove a full large-$N$ asymptotic expansion, using a highest weights/partitions correspondence adapted to the large-rank regime, under which the eigenvalues of the Laplace--Beltrami operator stabilize as observables in the algebra of shifted symmetric functions. Then, we prove a random surface representation of the trace in terms of ramified coverings of the torus. We provide two independent applications: an explicit large-rank counting law for the Casimir spectrum, with exponential Hardy--Ramanujan-type growth in contrast with the polynomial behavior of Weyl's law at fixed rank, and a rigorous probabilistic formulation of the Yang--Mills/Hurwitz duality on a two-dimensional torus initiated by Gross and Taylor \cite{GT,GT2}, completing a previous work \cite{LM2} of the authors. We also extend this duality to a Yang--Mills/Gromov--Witten duality by expressing the coefficients of the central heat trace as explicit functionals of the generating function of Gromov--Witten invariants.
\end{abstract}

\tableofcontents

\section{Introduction}

Let $G_N\subset\GL_N(\C)$ be a compact matrix Lie group, such that its Lie algebra $\mathfrak{g}_N$ is endowed with an invariant inner product (the explicit choice will be given in~\eqref{eq:inner_product}). It yields a Riemannian structure on $G_N$, so that the heat kernel $p_t=e^{\frac{t}{2}\Delta_{G_N}}$ is expressed in terms of the corresponding Laplace--Beltrami operator $\Delta_{G_N}.$ By standard results of representation theory, the characters $\{\chi_\lambda,\lambda\in\widehat{G}_N\}$ of the irreducible representations of $G_N$ form an orthonormal family of eigenfunctions of $\Delta_{G_N}$, with associated eigenvalues $\{-c_2(\lambda),\lambda\in\widehat{G}_N\}$:
\[
\Delta_{G_N}\chi_\lambda(g)=-c_2(\lambda)\chi_\lambda(g),\quad \forall \lambda\in\widehat{G}_N \text{ and } \forall g\in G_N.
\]
The number $c_2(\lambda)\geq 0$ is called \emph{Casimir number}, because the Laplace--Beltrami operator $\Delta_{G_N}$ can be identified (up to a sign) with the quadratic Casimir operator on the enveloping algebra of the Lie algebra of $G_N$.

For any fixed time $t>0$, $p_t$ is a convolution operator, acting in particular on two Hilbert spaces: $L^2(G_N)$ on the one hand, and $\mathcal{H}_{G_N}=\{f\in L^2(G_N):f(hgh^{-1})=f(g),\ \forall g,h\in G_N\}$ on the other hand. As an operator on $L^2(G_N)$, its trace is
\[
\Tr_{L^2(G_N)}(p_t)=\sum_{\lambda\in\widehat{G}_N}d_\lambda^2e^{-\frac{t}{2}c_2(\lambda)},
\]
whereas as an operator on $\mathcal{H}_{G_N}$ it is
\begin{equation}\label{eq:hk_trace_ini}
\Tr_{\mathcal{H}_{G_N}}(p_t)=\sum_{\lambda\in\widehat{G}_N}e^{-\frac{t}{2}c_2(\lambda)},\quad \forall t>0.
\end{equation}
In this paper, we will denote by $\Tr(p_t)=\Tr(e^{\frac{t}{2}\Delta_{G_N}})$ the trace over $\mathcal{H}_{G_N}$, which will be the main protagonist of our results (Theorem~\ref{thm:asympt_expansion}, Theorem~\ref{thm:main}, Theorem~\ref{thm:coef_Hurwitz} and Theorem~\ref{thm:Cardy}). To make a distinction with the usual heat trace $\Tr_{L^2(G_N)}(p_t)$, we shall call it the \emph{central heat trace} on $G_N$, because we restrict to square integrable \emph{central} functions.

\subsection{Large-$N$ expansion of the trace}

The trace of the heat kernel acting on $L^2(G_N)$ is known to admit a short-time asymptotic expansion \cite{Feg}, and large deviations when $G_N=\U(N)$ and $N\to\infty$ \cite{LevMai2}, but there are fewer results about the one acting on $\mathcal{H}_{G_N}$: for instance, as far as we are aware, its short-time asymptotics are unknown, and its large-$N$ asymptotics have been mainly studied by theoretical physicists at the level of formal power series \cite{GT,GT2,Dij}. The purpose of this paper is to investigate the large-$N$ asymptotics of the central heat trace at a quantitative level. In the sequel, we shall use the following modified version of Landau's asymptotic notations: if $f,g:(0,\infty)\times \N\to\R$ are functions that depend on a real number $t>0$ and an integer $N$, we write $f(t,N)=O_t(g(t,N))$ as $N\to\infty$ if for any $t$, there exists a constant $C(t)$ \emph{that may depend on} $t$ and an integer $N_0$, such that
\[
\vert f(t,N)\vert \leq C(t)\vert g(t,N)\vert,\quad \forall t>0,\ \forall N\geq N_0.
\]

The fact that $\Tr(e^{\frac{t}{2}\Delta_{G_N}})$ converges to a finite value as $N\to\infty$ has been first conjectured in physics \cite{Gro,Dou}, then proved in \cite{Lem,DL}. Our first result is a far-reaching extension of the results of \cite{DL}, namely a full asymptotic expansion of the trace of the heat kernel for compact classical groups (including unitary, special unitary, special orthogonal and symplectic groups). 

\begin{theorem}[See Theorem~\ref{thm:asympt_expansion_detailed}]\label{thm:asympt_expansion}
Let $(G_N)_{N\geq 1}$ be a sequence of compact classical groups with $G_N\subset\GL_N(\C)$. For any $t>0$, there is an explicit family of coefficients $(a_k(t))_{k\geq 0}$ (depending on the group type) such that for any $p\geq 1$, the following asymptotic expansion holds as $N$ tends to infinity:
\begin{equation}\label{eq:asympt_expansion}
\Tr(e^{\frac{t}{2}\Delta_{G_N}})=a_0(t)+\frac{a_1(t)}{N}+\ldots+\frac{a_p(t)}{N^p}+O_t\left(N^{-p-1}\right).
\end{equation}
\end{theorem}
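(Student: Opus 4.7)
The plan is to read the central heat trace~\eqref{eq:hk_trace_ini} as a sum of Boltzmann weights indexed by the combinatorial parametrization of $\widehat{G}_N$ (signatures for $\U(N)$, partitions with length restrictions for $\SU(N)$, $\SO(N)$, $\Sp(N)$) and to perform a controlled $1/N$ expansion of each term. The first input is the Freudenthal formula $c_2(\lambda)=\langle\lambda,\lambda+2\rho_N\rangle$, with $\rho_N$ the half sum of positive roots; in the normalization fixed by~\eqref{eq:inner_product}, this is a polynomial in the parts of $\lambda$ admitting an \emph{exact, finite} expansion
\[
c_2(\lambda) = \sum_{k=0}^{K} \frac{c_2^{(k)}(\lambda)}{N^k}
\]
with small $K$ (depending on the group family), where $c_2^{(0)}(\lambda)$ is $N$-independent and grows at least linearly in $|\lambda|$, while the higher $c_2^{(k)}(\lambda)$ are polynomial functionals of $\lambda$ stemming from the $\rho_N$-shift.

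I would next split the sum according to a threshold $M_N$: signatures with $|\lambda|>M_N$ contribute negligibly, since a lower bound $c_2(\lambda)\geq \alpha|\lambda|$ valid for $N$ large enough, combined with the Hardy--Ramanujan estimate $p(n)\leq e^{K'\sqrt{n}}$ on the number of partitions of $n$, yields
\[
\sum_{|\lambda|>M_N} e^{-\frac{t}{2}c_2(\lambda)} = O_t\!\left(N^{-p-1}\right)
\]
as soon as $M_N$ is chosen of order $\log N$ with a sufficiently large multiplicative constant (depending on $t$ and $p$). For $|\lambda|\leq M_N$, which for $N$ large enough automatically respects the length constraint defining $\widehat{G}_N$, I would Taylor expand $\exp(-\frac{t}{2}c_2(\lambda))$ in powers of $1/N$ up to order $p$, with a Lagrange remainder of the form $N^{-p-1}$ times a polynomial in the parts of $\lambda$. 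Regrouping by powers of $1/N$ and extending the truncated sums back to all partitions (the tail contribution being already absorbed in the first step) yields a closed-form expression for the coefficients $a_k(t)$ as absolutely convergent sums over partitions.

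The main obstacle is to synchronize the two cutoffs so that the total error behaves as $O_t(N^{-p-1})$: the Lagrange remainder contains powers of $c_2^{(k)}(\lambda)$ that grow polynomially in $|\lambda|$ with degree increasing in $p$, while the tail bound only yields a polynomial power of $N$ once $M_N=C\log N$. Tuning $C$ as a function of $p$ and $t$ is enough to absorb both, but the bookkeeping must be handled carefully to guarantee uniform control. A secondary, more structural, difficulty is the uniform treatment of the four families: although the analytic steps above are group-independent, the Weyl vectors, the length constraints, and the parity issues of $\SO(2n)$ versus $\SO(2n{+}1)$ force a case-by-case computation of $c_2^{(k)}$, and for $\SU(N)$ one must additionally quotient by the determinant character, which amounts to a mild reindexing of the sum.
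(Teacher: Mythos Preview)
Your strategy is correct and would yield the theorem. It is close in spirit to the paper's proof but differs in one structural choice that is worth making explicit.

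The paper also writes $c_2(\lambda)$ as a leading $N$-independent part plus $1/N$-corrections (Lemma~\ref{lem:casimirs}) and also Taylor-expands the exponential of the correction. Where you introduce a size cutoff $|\lambda|\le M_N=C(p,t)\log N$ and treat the tail separately, the paper keeps the full sum and controls the integral Lagrange remainder by a H\"older inequality: it splits $\E[F^{p+1}e^{-\frac{s}{N}F}\mathbf 1_{\Lambda_N}]$ into a moment factor $\E[F^{3p+3}]^{1/3}$ (finite because $|F|\le C|\lambda|^2$ has all moments under the $q_t$-uniform measure) and an exponential factor $\E[e^{-\frac{3t}{2N}F}\mathbf 1_{\Lambda_N}]^{2/3}$, the latter bounded uniformly in $N$ via the Casimir lower bounds of Lemma~\ref{lem:lb-casimirs}. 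The indicator on the length constraint is then removed at the end by the deviation estimate of Lemma~\ref{lem:dev_ineq_bis}. This avoids introducing any auxiliary parameter depending on $p$, at the price of proving the uniform Casimir lower bound (which you would also need for your tail estimate $c_2(\lambda)\ge\alpha|\lambda|$). Your size-cutoff idea does appear in the paper, but for a different purpose: in Theorem~\ref{thm:main_alt} a cutoff $|\lambda|\le N^\gamma$ is used to obtain the alternative random-surface representation with exponentially small remainder.

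Two remarks on details. First, your ``main obstacle'' is less severe than you suggest: once you keep the Boltzmann factor $e^{-\frac t2 c_2^{(0)}(\lambda)}$ in front of the Lagrange remainder, the sum over $|\lambda|\le M_N$ of $e^{-\frac t2 c_2^{(0)}(\lambda)}|F(\lambda)|^{p+1}$ is bounded by an $N$-independent convergent series, so no real tension with the tail arises; the cutoff is only needed to tame the factor $e^{sx}$ inside the remainder, and $M_N=o(N^{1/2})$ already suffices. Second, $\SO(2N)$ is genuinely special: the highest weights allow $\lambda_N<0$, and under the paper's parametrization by $(\mu,m,n)$ the Casimir acquires a term $\frac{N-1}{2}m$ that is not $O(1/N)$; the paper handles this by a separate ``Step~1~bis'' showing that the $m\ge1$ contribution is exponentially small in $N$. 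In your scheme this is absorbed automatically, because $|\lambda|\le M_N\ll N$ forces $\lambda_{N-1}=\lambda_N=0$, so the surviving weights are ordinary partitions; but you should flag that the tail bound for $\SO(2N)$ must cover the weights with $\lambda_{N-1}>0$, for which $|\lambda|\ge N-1$.
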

The case of $\U(N)$ was recently proved by the same authors \cite[Theorem 1.2]{LM2}, and the proof of Theorem~\ref{thm:asympt_expansion} relies on similar tools: a description of $\Tr(e^{\frac{t}{2}\Delta})$ in terms of random partitions, and a precise control of the remainder in terms of tail probabilities with respect to the corresponding measure. Although the expansion involves only even powers of $\frac1N$ in the (special) unitary case, all powers of $\frac1N$ appear for the special orthogonal and symplectic cases. By taking $p=0$ and letting $N\to\infty$, we also obtain that $\Tr(e^{\frac{t}{2}\Delta_{G_N}})\to a_0(t)$, and the limits agree with the results of \cite{Lem,DL}, see Corollary~\ref{cor:lim_pf}. All coefficients of~\eqref{eq:asympt_expansion} are explicitly expressed as expectations with respect to the $q$-uniform measure on partitions, which reveals a clear relationship with integrable hierarchies \cite{BO,OP2}.

We should emphasize the fact that it is only an \emph{asymptotic expansion}, in the sense that there is no hope to let $p$ tend to infinity with fixed $N$; indeed, our estimates only work for $N$ large enough, which is a condition that actually depends on $p$. However, the coefficients of the expansion do not depend on $N$ or $p$. Although Theorem~\ref{thm:asympt_expansion} is stated and proved for the heat kernel, its proof can be readily adapted to study more general measures, such as the central symmetric probability measures induced by infinitely divisible probability measures on $\R$ introduced by Applebaum \cite{App} --- see Remark~\ref{rmk:extension_Levy}.

The asymptotic expansion of Theorem~\ref{thm:asympt_expansion} serves as the analytic backbone of the paper, from which we derive several structural consequences ranging from enumerative geometry to spectral theory. In particular, we will provide:
\begin{enumerate}
\item an interpretation of the heat trace as a model of random surfaces, which is a probabilistic counterpart of the Yang--Mills/Hurwitz duality on a torus, which was  only known at the level of formal power series \cite{GT,GT2} (see Theorems \ref{thm:main} and  \ref{thm:mainbis}),
\item an interpretation of the coefficients of the expansion in terms of the generating function of Gromov--Witten (GW) invariants on an elliptic curve (see Theorem \ref{thm:coef_Hurwitz}),
\item an expression of the counting statistics of Casimir eigenvalues which differs from Weyl's law on compact Riemannian manifolds (see Theorem \ref{thm:Cardy}).
\end{enumerate}
The structure behind the first two consequences can be summarized as follows:
\[
\text{Classical group} \build{\leftrightarrow}{}{} \text{Highest weights} \build{\leftrightarrow}{}{} \text{partitions} \build{\leftrightarrow}{}{} \text{permutations} \build{\leftrightarrow}{}{} \text{coverings} \build{\leftrightarrow}{}{} \text{GW invariants}.
\]


More precisely, the asymptotic expansions are based on representations of highest weights of classical groups in terms of integer partitions that not only stabilize in the large-$N$ limit, but also produce asymptotic results that can be controlled. It can be roughly summarized in the following equation, that we will make more precise in Proposition~\ref{prop:shifted-casimir}: for any sequence $(G_N)_{N\geq 1}$  of compact classical groups with $G_N\subset\GL_N(\C),$ there is a commutative algebra $\mathcal{A}$ and a realization $\kappa_N$ of $-\Delta_{G_N}$ for all $N$ such that $\kappa_N$ admits an expansion of the form
\[
\kappa_N = E +\frac{1}{N}L + O(N^{-2}),
\]
where the algebra $\mathcal{A}$ and the observables $E$ and $L$ depend on the type of the group.  The leading term $E$ corresponds to some \emph{energy operator} while the first-order correction $L$ is governed by the eigenvalue of the well-known \emph{cut-and-join operator} acting on the algebra of symmetric functions \cite{Gou94}. In Hurwitz theory, this operator appears as the generator of the evolution equation satisfied by generating series of branched coverings, and plays a central role in the Bouchard--Mari\~no conjecture and its proof via topological recursion \cite{EMS11}. The appearance of the same content statistic suggests a deeper connection between large-rank harmonic analysis on compact groups and the algebraic structures underlying Hurwitz theory, and our next result is actually a representation of such a connection.

Let $n\geq 1$ and $k\geq 0$ be two fixed integers, and $\Sigma_g$ be a compact connected orientable surface of genus $g\geq 1$. The \emph{Hurwitz space} $\mathcal{H}_g(n,k)$ is the set of equivalence classes of ramified coverings $X\to\Sigma_g$ of degree $n$ with $k$ ramification points, all assumed to be generic. It is finite (its size is the Hurwitz number $H_g(n,k)$), and can be endowed with the uniform probability measure. Associated random models have already been studied, in particular in the large-$n$ limit: for $g=1$ \cite{EO,Agg}, or $g\geq 2$ and $k=0$ \cite{MNP,MagPud}. In the present paper, we shall fix $g=1$, but the novelty compared to the previously mentioned references is that we assume that $n$ and $k$ are random (following respectively a geometric distribution of parameter $1-e^{-\frac{t}{2}}$ and an even Poisson distribution of parameter $t$). Conditionally to $n$ and $k$, and assuming only generic ramifications, the random (equivalence classes of) ramified coverings are taken uniformly in $\mathcal{H}_1(n,k)$. The corresponding measure $\rho_t$ (for a fixed real parameter $t>0$) will be defined and studied in Section~\ref{sec:Hurwitz_integration}.

\begin{theorem}[See Theorem \ref{thm:mainbis}]\label{thm:main}
Let $(G_N)_{N\geq 1}$ be a sequence of compact classical groups with $G_N\subset\GL_N(\C)$. For any $p\geq 1$ and any $t>0$:
\begin{itemize}
\item If $G_N=\U(N)$ or $\SU(N)$, there is an explicit function $\Phi_{t,N}^p$ (that depends on the group type) such that, as $N$ tends to infinity,
\begin{equation}
\Tr(e^{\frac{t}{2}\Delta_{G_N}})=\int \Phi_{t,N}^{p}(X_1,X_2)N^{\chi(X_1)+\chi(X_2)}d\rho_t^{\otimes 2}(X_1,X_2)+O_t(N^{-2p-2}),
\end{equation}
where $\chi(X)$ is the Euler characteristic of the covering $X.$
\item Otherwise, there is an explicit function $\Phi_{t,N}$ (that depends on the group type) such that, as $N$ tends to infinity,
\begin{equation}
\Tr(e^{\frac{t}{2}\Delta_{G_N}})=\int \Phi_{t,N}(X)N^{\chi(X)}d\rho_t(X)+O_t(N^{-p-1}).
\end{equation}
\end{itemize}
\end{theorem}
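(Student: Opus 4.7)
The natural starting point is the character expansion $\Tr(e^{t\Delta_{G_N}/2}) = \sum_{\lambda \in \widehat{G}_N} e^{-tc_2(\lambda)/2}$ that already underlies Theorem~\ref{thm:asympt_expansion}. The plan is to rewrite this sum combinatorially in terms of symmetric-group data and then recognise it as an integral against $\rho_t$. Parametrizing $\widehat{G}_N$ by (tuples of) partitions and grouping terms by $n = |\lambda|$, the Casimir admits, for each classical group, an expansion whose leading part scales linearly in $n$ and whose subleading corrections in $1/N$ are symmetric functions of the contents of $\lambda$. By classical content identities, these symmetric functions can be expressed as normalized characters $f_\lambda(\sigma) = |\mathcal{C}_\sigma|\chi_\lambda(\sigma)/d_\lambda$ of the symmetric group at specific conjugacy classes, most notably at transpositions---which, via the Frobenius formula, count simple branch points.

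Concretely, I would Taylor-expand $e^{-tc_2(\lambda)/2}$ relative to its leading part up to order $p$ in $1/N$, producing polynomial combinations of the $f_\lambda$'s. Summing over $\lambda \vdash n$ and invoking the Frobenius formula converts each such polynomial into a weighted factorization count in $S_n$, and hence, via the standard monodromy correspondence with $\pi_1(\Sigma_1 \setminus \{k\text{ points}\})$, into a weighted count of ramified coverings of the torus. Resumming the Taylor series as an exponential in the normalized characters produces the generating function in $k$ matching the even-Poisson structure of $\rho_t$, while the exponential of the leading Casimir term summed over $n$ reproduces the geometric weight $(1-e^{-t/2})e^{-tn/2}$; the uniform measure on each $\mathcal{H}_1(n,k)$ arises because the Frobenius sum treats representatives of a given equivalence class symmetrically. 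The $N^{\chi(X)} = N^{-k}$ factor comes from the $1/N$ power carried by each transposition character, consistent with Riemann--Hurwitz. The residual $N$-dependence not absorbed into $N^{\chi(X)}$ is packaged into $\Phi_{t,N}$, and the truncation error is controlled by the random-partition tail estimates already used in Theorem~\ref{thm:asympt_expansion}.

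The main obstacle is the uniform treatment of all group types. For $\U(N)$ and $\SU(N)$, irreducible representations are labeled by pairs of partitions (a polynomial part and a dual part), so the construction above factorizes into two independent copies and yields both the tensor measure $\rho_t^{\otimes 2}$ and the additive Euler characteristic $\chi(X_1)+\chi(X_2)$; only even powers of $1/N$ survive the character expansion, giving the sharper $O_t(N^{-2p-2})$ remainder. For $\SO(N)$ and $\Sp(N)$, a single-partition labeling suffices, but the Casimir contains terms of both parities in $1/N$, and the relevant symmetric-group characters are evaluated at a slightly different set of conjugacy classes (involving products of two-cycles that reflect the orthogonal/symplectic involution); this breaks the even-only expansion and yields the coarser $O_t(N^{-p-1})$ remainder. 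Verifying case by case that all these sums reassemble into a single Hurwitz integral with the claimed weights, and correctly identifying the parity constraint on $k$ as encoding either the dual doubling (for $\U(N), \SU(N)$) or the orthogonal/symplectic sign (otherwise), is the main technical content.
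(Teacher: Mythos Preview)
Your overall strategy---express the $1/N$ corrections to the Casimir through content data, convert powers of the content to Hurwitz numbers, and recognise the result as an integral over coverings---is the right one, and matches the paper in spirit. The paper's execution is however more direct, and your treatment of the orthogonal/symplectic cases contains a genuine error.

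The paper does not pass through normalized characters $f_\lambda(\sigma)$ and the Frobenius formula. After Lemma~\ref{lem:casimirs} the $1/N$ correction to the Casimir is literally the total content $K(\lambda)$ plus a term linear in $|\lambda|$, and the paper uses the elementary identity $H_1(n,k)=\sum_{\alpha\vdash n}K(\alpha)^k$ directly. Moreover, the proof of Theorem~\ref{thm:mainbis} does not start from the character expansion at all: it starts from the coefficients $a_k^\tau(t)$ already computed in Theorem~\ref{thm:asympt_expansion_detailed}, expands them multinomially, and reshuffles the sums so that~\eqref{eq:conversion_integral_Hurwitz} applies. For $\SO(N)$ there is one extra step you do not mention: the finite sum $\sum_{k_2\le p-2k_1}\frac{(tn/2N)^{k_2}}{k_2!}$ must be completed to the full exponential $e^{tn/(2N)}$, and one has to check separately that the added tail is $O_t(N^{-p-1})$.

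Your claim that for $\SO(N)$ and $\Sp(N)$ ``the relevant symmetric-group characters are evaluated at a slightly different set of conjugacy classes (involving products of two-cycles that reflect the orthogonal/symplectic involution)'' is wrong, and would lead to the wrong integrand. In all four types $B,C,D$ and $A,A'$ the only Hurwitz numbers appearing are the simple ones $H_1(n,2k)$; the odd powers of $1/N$ in the orthogonal/symplectic expansion come not from new monodromy types but from the term $\pm\tfrac12|\mu|$ in $F^B,F^C,F^D$ (Lemma~\ref{lem:casimirs}). That term depends only on the degree of the covering, not on its ramification, and is absorbed verbatim into the explicit function $\Phi_{t,N}^\tau(X)=e^{\pm t\deg(X)/(2N)}$ of~\eqref{eq:Phi_BD}--\eqref{eq:Phi_C}. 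Likewise, for $\U(N)$ the coupling $\Phi_{t,N}^{p,A'}$ is not generic ``residual $N$-dependence'': it comes specifically from the cross term $n(|\alpha|-|\beta|)$ in $F^{A'}$, with $n\in\Z$ the shift parameter carrying the integer-Gaussian weight $q_t^{n^2}$, and for $\SU(N)$ from $(|\alpha|-|\beta|)^2$ in $F_2^A$. Without identifying these terms correctly you cannot write down the functions $\Phi$ that the statement asserts to be explicit.
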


In the case of $\U(N)$ or $\SU(N)$, $\Phi_{t,N}^{p}$ represents a coupling of coverings which vanishes asymptotically as $N\to\infty$, and it reflects the coupling of random partitions obtained in \cite{LM2} for $\U(N)$\footnote{Note that the random surface representation is new: even in the case of $\U(N)$, it does not appear in \cite{LM2}.}. When $G_N$ is special orthogonal or symplectic, the function $\Phi_{t,N}$ does not depend on $p$ . We refer to the complete statement in Theorem~\ref{thm:mainbis} and the discussion thereafter for more explanations. We will also provide an alternate version of the random surface representation (Theorem~\ref{thm:main_alt}) with slightly different conditions, where the remainder is not a power of $\frac1N$ but exponentially small in $N$, and which is not directly related to the asymptotic expansion.

\subsection{Gauge/string duality on a torus}

Theorem~\ref{thm:main} is a particular case of a physical phenomenon known as gauge/string duality, based on the idea that observables of gauge theory can be expressed in terms of string theory. In the context of two-dimensional Yang--Mills theory, this duality has been first studied by Gross and Taylor \cite{Gro,GT,GT2}. Their work is stated in terms of formal matrix integrals whereas ours truly lies at the level of convergent matrix integrals (see e.g. the preface of \cite{Eyn} for a discussion on this distinction and on the challenge of going from the former to the latter). In Section~\ref{sec:string}, we will develop this point of view and explain more carefully how Theorem~\ref{thm:main} is a rigorous probabilistic counterpart of Gross--Taylor's theory. A natural question is whether this duality also holds for higher genus surfaces, where it is already known that the partition function admits a first-order expansion (see \cite{Lem,DL,Lem3}), or also for surfaces with boundaries. It is  stated  in \cite{GT,GT2}, of course at a formal level and we believe it is the case, at least in terms of a ``sum-over-surfaces" point of view. In this direction, we can mention the refined version of the Weingarten calculus developed by Magee \cite{Mag2} and Dahlqvist\footnote{We mention this paper by Antoine Dahlqvist that came up after ours, but which was partly conducted in parallel to ours.} \cite{Dah26} based on the Koike--Schur--Weyl duality. In some sense, our paper controls the duality at the level of \emph{spectral data} whereas the other ones control it at the level of \emph{topological data}, which are two independent tasks. It also explains why, in the present paper, we do not discuss higher genus --- it is not because it is out of reach, but because it would require combining our main theorems with other independent results, and we try to keep this paper condensed and self-contained. The study of Yang--Mills partition functions on general surface will be carried out in a separate work, mainly as a consequence of this paper and \cite{Dah26}. Let us mention another notion of gauge/string duality, which is more combinatorial: for the discrete Yang--Mills measure on $\Z^d$, Wilson loop expectations can be rewritten as signed sums over combinatorial surfaces \cite{Jaf16,Cha19,CPS,BCSK,Lem26a}. Note that some of the results presented in this paper can also be used in this context \cite{Lem26mf}, but we will not develop this aspect here.

In addition to Gross--Taylor's Yang--Mills/Hurwitz duality, we will also describe a Yang--Mills/Gromov--Witten duality: indeed, all coefficients of the expansion~\eqref{eq:asympt_expansion} can be expressed in terms of the generating function of Gromov--Witten invariants $\langle \tau_{k_1}\ldots\tau_{k_n}\rangle_d^\Tbb$ on a torus, defined for arbitrary parameters $q,t_1,\ldots,t_n$ by
\[
\mathcal{Z}_q^\Tbb(t_1,\ldots,t_n)=\sum_{d=1}^\infty q^d\sum_{k_1,\ldots,k_n}\langle\tau_{k_1}\ldots\tau_{k_n}\rangle_d^\Tbb t_1^{k_1+1}\ldots t_n^{k_n+1}.
\]

\begin{theorem}\label{thm:coef_Hurwitz}
Under the assumptions of Theorem \ref{thm:asympt_expansion}, the coefficients of \eqref{eq:asympt_expansion} are explicit functionals of $\mathcal{Z}_{q_t}^\Tbb$, where $q_t=e^{-\frac{t}{2}}$.
\end{theorem}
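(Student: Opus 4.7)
The plan is to combine the partition-level expansion underlying Theorems~\ref{thm:asympt_expansion} and~\ref{thm:main} with the Okounkov--Pandharipande Gromov--Witten/Hurwitz correspondence on an elliptic curve. By \eqref{eq:hk_trace_ini} the central heat trace is a sum over highest weights $\lambda$ of $e^{-\frac{t}{2}c_2(\lambda)}$; once the highest-weight-to-partition correspondence of \S\ref{sec:hw_partition_corresp} has been invoked, this becomes $q_t^{|\lambda|}$ times an exponential whose argument is a group-dependent polynomial in the shifted power sums
\[
p_r(\lambda)=\sum_{i\geq 1}\Bigl[\bigl(\lambda_i-i+\tfrac{1}{2}\bigr)^{r}-\bigl(-i+\tfrac{1}{2}\bigr)^{r}\Bigr].
\]
Reading off the coefficient of $N^{-k}$ in the expansion established in Theorem~\ref{thm:asympt_expansion_detailed}, each $a_k(t)$ is therefore a finite linear combination of $n$-point shifted power-sum partition functions
\[
F_{r_1,\ldots,r_n}(q_t)=\sum_{\lambda}q_t^{|\lambda|}\,p_{r_1}(\lambda)\cdots p_{r_n}(\lambda),
\]
with rational coefficients that depend only on $t$ and the group type.

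The second step is to invoke the Okounkov--Pandharipande identification of stationary Gromov--Witten theory on an elliptic curve with partitions weighted by completed-cycle symmetric functions, which for the chosen normalisation reads
\[
\mathcal{Z}_{q}^{\Tbb}(t_1,\ldots,t_n)=\sum_{\lambda}q^{|\lambda|}\prod_{i=1}^{n}\sum_{k\geq 0}\frac{t_i^{k+1}}{(k+1)!}\,\overline{p}_{k+1}(\lambda),
\]
where each completed cycle $\overline{p}_{k+1}$ is an upper-triangular linear combination of the $p_r$ with nonzero leading coefficient. Inverting this triangular relation writes every $F_{r_1,\ldots,r_n}(q_t)$ as a finite linear combination of iterated Taylor coefficients of $\mathcal{Z}_{q_t}^{\Tbb}$ at $t_1=\cdots=t_n=0$. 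Composition with the first step exhibits each $a_k(t)$ as an explicit polynomial functional of $\mathcal{Z}_{q_t}^{\Tbb}$ and its derivatives at the origin, as required.

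The main obstacle I anticipate lies in the non-unitary cases. For $\SO(N)$ and $\Sp(N)$, highest weights are not ordinary partitions (there are half-integer shifts, a doubling of the weight lattice, and sign choices), and $c_2(\lambda)$ carries additional linear-in-$\lambda$ corrections responsible for the odd powers of $1/N$ that do not occur for $\U(N)$ and $\SU(N)$. These corrections must be absorbed into the polynomial expression in the $p_r$ in such a way that no residual $N$-dependence remains inside the $F_{r_1,\ldots,r_n}$. Carefully matching the group-theoretic data (half-integer shifts, sign conventions, and the normalisation of the invariant inner product chosen in \eqref{eq:inner_product}) with the Bloch--Okounkov/Okounkov--Pandharipande framework is where the bookkeeping is most delicate; once this matching is done, the functional dependence on $\mathcal{Z}_{q_t}^{\Tbb}$ falls out by the triangular inversion above.
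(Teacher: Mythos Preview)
Your outline is correct in principle but considerably more elaborate than what is actually needed, and the obstacle you anticipate does not arise. The paper's proof is almost a two-line reduction: by Theorem~\ref{thm:asympt_expansion_detailed} each coefficient $a_k^\tau(t)$ is already a finite linear combination of moments $\E[K(\alpha)^{j}\,\vert\alpha\vert^{m}]$ (or products of two such) under the $q_t$-uniform measure, i.e.\ of $\mathcal{D}^{m}\mathcal{F}_{1,j}(q_t)$ where $\mathcal{F}_{1,j}$ is the generating function of Hurwitz numbers (Lemma~\ref{lem:coef_Hurwitz}). Then the single identity $\mathbf{p}_2=2K$ gives $H_1(n,2k)=\langle\tau_1(\omega)^{2k}\rangle_n^{\Tbb}$, hence $\mathcal{F}_{1,k}$ is an explicit partial derivative of $\mathcal{Z}_q^{\Tbb}$ at the origin (equation~\eqref{eq:Hurwitz_GW_gf}). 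No general $p_r$, no completed cycles, and no triangular inversion are required, because only $K=\tfrac12\mathbf{p}_2$ and the size $\vert\lambda\vert$ ever enter the Casimir formulae of Lemma~\ref{lem:casimirs}.

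Your worry about $\SO(N)$ and $\Sp(N)$ is misplaced. The half-integer shifts, sign choices and inner-product normalisation have all been absorbed upstream in Proposition~\ref{prop:dual_sets}, Lemma~\ref{lem:casimirs} and Theorem~\ref{thm:asympt_expansion_detailed}: the coefficients for types $B,C,D$ come out as $\E[(K(\mu)\pm\tfrac12\vert\mu\vert)^k]$, still a polynomial in $K$ and $\vert\mu\vert$ alone. The linear term $\pm\tfrac12\vert\mu\vert$ is what produces the odd powers of $1/N$, but at the stage of passing to $\mathcal{Z}_{q_t}^{\Tbb}$ it simply contributes extra $\mathcal{D}$-derivatives of $\mathcal{F}_{1,2k_2}$ and nothing more delicate.
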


This relationship does not seem to appear anywhere in the literature, either on the mathematical or on the physical side. We will prove it in Section~\ref{sec:string}, as well as its application to gauge/string duality (Corollary~\ref{cor:gauge_string}) on a two-dimensional torus, where $\Tr(e^{\frac t2\Delta})$ and $\mathcal{Z}_{q_t}^\Tbb(t_1,\ldots,t_n)$ are respectively identified to the Yang--Mills partition function and a correlation function of a topological sigma-model coupled to gravity. While Hurwitz numbers are known to be related to Gromov--Witten invariants through the work of Okounkov--Pandharipande \cite{OP2}, a direct link between the Yang--Mills partition function on a compact surface and the full generating function of Gromov--Witten invariants on the same surface has not been observed previously. We believe that it might uncover more explicitly the integrable structure that lies beneath Gross--Taylor's original approach. This would lead to a more geometric and less combinatorial counterpart of the gauge/string duality.

\subsection{Asymptotic spectral theory beyond Weyl's law}

From a more analytical viewpoint, the central heat trace is the Laplace transform of the counting measure of eigenvalues of the Casimir operator : let
\[
\mu_N=\sum_{\lambda\in\widehat{G}_N}\delta_{c_2(\lambda)},
\]
so that the central heat trace can be rewritten as the Laplace transform of $\mu_N,$
\[
\Tr(e^{\frac{t}{2}\Delta_{G_N}})=\int_0^\infty e^{-\frac{t}{2}E}d\mu_N(E).
\]

 Recall that for a compact Riemannian manifold $(M,g)$ of dimension $n$, if we denote by $0=\lambda_0<\lambda_1\leq\lambda_2\leq\ldots\to\infty$ the eigenvalues of the Laplace--Beltrami operator, the short-time asymptotic expansion
\[
\Tr(e^{t\Delta})=(4\pi t)^{-n/2}(\vol(M)+\text{higher terms})
\]
of the heat kernel, combined with a Tauberian theorem, yields the asymptotic equivalent, known as \emph{Weyl's law}
\[
\mathcal{N}(\Lambda):=\#\{i:\ \lambda_i\leq \Lambda\}\sim\frac{\pi^{n/2}}{(2\pi)^n\Gamma(n/2+1)}\vol(M)\Lambda^{n/2},\quad \Lambda\to\infty,
\]
A natural question is whether similar results hold when counting the Casimir numbers. In our setting, the manifold is not  fixed as we consider $G_N$ as $N\to\infty.$ It is similar to the regime of spectral theory of high genus hyperbolic surfaces for instance, where a uniform version of Weyl's law \cite{Mon22} has been shown recently. In our case, the asymptotic behavior is  different and is summarized in the following theorem.

\begin{theorem}[Asymptotic counting law for large-rank Casimir spectrum, see Theorem \ref{thm:Cardy-detailed}]\label{thm:Cardy}
Let $(G_N)_{N\geq1}$ be a sequence of compact classical groups  with $G_N\subset\GL_N(\C)$. There exists a locally finite measure $\mu$ on $[0,\infty)$ such that, at every
continuity point $\Lambda$ of $\mu$,
\begin{equation}\label{eq:cv_continuitypoint}
\mu_N([0,\Lambda])\longrightarrow \mu([0,\Lambda]).
\end{equation}
Moreover, there exist $a,b,c >0$ such that
\[ \mathcal{N}_\tau(\Lambda):=\mu_\tau([0,\Lambda]) \sim a \Lambda^{-b} e^{c\sqrt \Lambda},\]
where the constants $a,b,c$ depend on the type $\tau$ of the group.
\end{theorem}

A detailed statement of  Theorem~\ref{thm:Cardy}, with explicit expressions of the constants  will be provided and proved in Section~\ref{sec:Cardy}; it is roughly a consequence of the fact that the large-$N$ limit of the trace, obtained by taking the first coefficient of the expansion from Theorem~\ref{thm:asympt_expansion} is expressed in terms of modular forms, combined with a Tauberian theorem due to Ingham \cite{Ing41} and revisited recently by Bringmann--Jennings-Shaffer--Mahlburg \cite{BJSM23}. Although modularity properties of the heat kernel on Lie groups are known to hold since the work of Fegan \cite{Feg},  the explicit spectral counting law for the large-rank Casimir spectrum has not appeared in the literature yet --- to the best of our knowledge. The exponential growth is reminiscent of Hardy--Ramanujan's estimate of the number $p(n)$ of partitions of $n$ (see \cite{Apo} for instance), and to Cardy's celebrated formula in the context of CFT on a torus \cite{Car86}. This result, or at least its proof, may be familiar to number theorists\footnote{In fact, $Z(t)$ is a particular case of eta--theta quotient, which is described for instance in \cite{FGKST16}.}, but is strikingly unusual from a spectral theory perspective. Indeed, it appears to be a new phenomenon in contrast to the polynomial growth of Weyl's law. Although we did not investigate it in this paper, we believe that a double scaling in $N$ and $t$ might also be of interest, eventually in order to find a phase transition between the regime of finite-$N$ that gives Weyl's law, and the regime of large-$N$ that gives this Cardy growth.\\

{\bf Acknowledgements.} TL acknowledges Elba Garcia-Failde for giving him insights on the cut-and-join equation, as well as Nalini Anantharaman and Jonathan Novak for fruitful discussions. MM  acknowledges support from the Labex CEMPI (ANR-11-LABX-0007-01) and the support of the CDP C2EMPI, together with the French State under the France-2030 programme, the University of Lille, the Initiative of Excellence of the University of Lille, the European Metropolis of Lille for their funding and support of the R-CDP-24-004-C2EMPI project. 

\section{Highest weights, partitions and stable Casimir observables}\label{sec:prelim}

The purpose of this section is to introduce the representation-theoretic coordinates in which the large-rank behavior of the central heat trace becomes tractable. We first recall the normalizations of the Laplace--Beltrami operator and the corresponding Casimir eigenvalues for compact classical groups. We then replace highest weights by partition data adapted to large $N$. In these coordinates, the quadratic Casimir admits a stable expansion whose first two terms are governed by shifted symmetric polynomials. This stable Casimir structure is the algebraic mechanism behind the asymptotic expansion proved in Section~\ref{sec:asympt_exp}.

\subsection{Heat kernel on compact classical groups}

The complex classical groups are special linear groups over the spaces of real, complex or quaternionic numbers. Their compact real forms, called \emph{compact classical groups}, correspond to the \emph{special unitary group}
\[
\SU(N)=\{U\in\GL_N(\C): U^*=U^{-1},\det(U)=1\},
\]
the \emph{special orthogonal group}
\[
\SO(N)=\{O\in\GL_N(\R):O^t=O^{-1},\det(O)=1\},
\]
and the \emph{compact symplectic group}
\[
\Sp(N)=\{S\in\GL_{2N}(\C):S^*=S^{-1},S^tJS=J\},
\]
where $J\in\GL_{2N}(\C)$ is defined by
\[
J=\begin{pmatrix}
0 & I_{N}\\
-I_{N} & 0
\end{pmatrix}.
\]
Following the convention of \cite{DL}, we shall also include the unitary group $\U(N)=\{U\in\GL_N(\C):U^*=U^{-1}\}$ to the list of compact classical groups. 

The corresponding Lie algebras are endowed with the following invariant inner products\footnote{A justification of this choice of inner product is given in \cite[\S 2.1.]{DL}.}:
\begin{equation}\label{eq:inner_product}
\langle X,Y\rangle = \frac{\beta k_N}{2}\Tr(X^*Y),\quad \forall X,Y\in\mathfrak{g}_N,
\end{equation}
where $k_N=N$ for $\U(N),\SU(N),\SO(N)$ and $k_N=2N$ for $\Sp(N)$, and $\beta$ is a parameter depending on the type of Lie algebra (sometimes called the \emph{Dyson index} when referring to Gaussian $\beta$-ensembles in random matrix theory):
\[
\begin{array}{|c|c|c|c|}
\hline \mathfrak{g}_N & \mathfrak{so}(N) & \mathfrak{su}(N),\ \mathfrak{u}(N) & \mathfrak{sp}(N)\\
\hline \beta & 1 & 2 & 4\\
\hline
\end{array}
\]
{\bf Notation.} In the rest of this paper, we shall denote respectively by $A',$ $A,$ $B,$ $C,$ $D$ the type\footnote{We borrow this convention from \cite{DL}, and it mimicks the notations of the underlying root systems.} of unitary groups, special unitary groups, odd orthogonal groups, symplectic groups and even orthogonal groups respectively. It will be convenient to specify several quantities related to the whole family $(G_N)_{N\geq 1}$, such as the coefficients of the expansion \eqref{eq:asympt_expansion}.

Let $\{X_1,\ldots,X_n\}$ be an orthonormal basis of $\mathfrak{g}_N$ for the prescribed inner product; the Laplace--Beltrami operator is defined as
\[
\Delta_{G_N}f(g) = \sum_{i=1}^n \frac{d^2}{dt^2}\bigg\vert_{t=0}f(ge^{tX_i}),\quad \forall f\in\mathscr{C}^\infty(G_N) \text{ and } \forall g\in G_N.
\]
When there is no ambiguity, we will drop the subscript and simply denote by $\Delta$ the Laplace--Beltrami operator. The \emph{heat kernel} on $G_N$ (starting from identity) is the unique solution $p:(0,\infty)\times G_N\to\R_+,(t,g)\mapsto p_t(g)$ of the heat equation\footnote{This is the heat equation used for the Brownian motion, which differs by a factor $\frac12$ from another popular definition of the heat kernel; one can switch between both conventions by a simple rescaling of the time parameter.}
\[
\frac{d}{dt}p_t(g)=\frac12\Delta_{G_N} p_t(g), \quad \lim_{t\to 0^+}p_t(g)dg=\delta_{1_{G_N}}.
\]
By standard results of representation theory, the heat kernel admits a Fourier expansion in terms of irreducible representations:
\begin{equation}
p_t(g)=\sum_{\lambda\in\widehat{G}_N}e^{-\frac{t}{2}c_2(\lambda)}d_\lambda\chi_\lambda(g),
\end{equation}
where $d_\lambda$ is the dimension of the irreducible representation associated to $\lambda$, $\chi_\lambda$ is the character of the representation and $c_2(\lambda)\geq 0$ is the Casimir number, that is, the eigenvalue of $-\Delta_{G_N}$ associated to $\chi_\lambda$. As stated in the introduction, we denote by $\Tr(e^{\frac{t}{2}\Delta_{G_N}})$ the trace of the central heat trace~\eqref{eq:hk_trace_ini}. In order to compute this trace, we need to describe $\widehat{G}_N$ for all compact classical groups, as well as $c_2(\lambda)$ for all $\lambda\in\widehat{G}_N$.

\begin{itemize}
\item The set $\widehat{G}_N$ of irreducible representations has the following form depending on the group:
\[
\widehat{\U}(N)=\{(\lambda_1,\ldots,\lambda_N)\in\Z^N:\lambda_1\geq\ldots\geq\lambda_N\},
\]
\[
\widehat{\SU}(N)=\{(\lambda_1,\ldots,\lambda_N)\in\N^N:\lambda_1\geq\ldots\geq\lambda_N=0\},
\]
\[
\widehat{\SO}(2N+1)=\widehat{\Sp}(N)=\{(\lambda_1,\ldots,\lambda_N)\in\N^N:\lambda_1\geq\ldots\geq\lambda_N\},
\]
\[
\widehat{\SO}(2N)=\{(\lambda_1,\ldots,\lambda_N)\in\N^{N-1}\times \Z:\lambda_1\geq\ldots\geq\lambda_{N-1}\geq\vert\lambda_N\vert\}
\]
\item Given $\lambda\in\widehat{G}_N$, if $G_N$ has type $\tau\in\{A',A,B,C,D\}$, the Casimir number $c_2(\lambda)$ admits an explicit expression (see \cite{DL} for instance, where the same inner product is put on $\mathfrak{g}_N$), as displayed in Table~\ref{tab:Casimirs} below:
\renewcommand\arraystretch{1.5}
\begin{table}[h!]
    \centering
    \begin{tabular}{|c|c|c|}
     \hline $\tau$ & $G_N$ & $c_2(\lambda)$\\
\hline $A'$ & $\U(N)$ & $\frac{1}{N}\sum_{i=1}^N\lambda_i(\lambda_i+N+1-2i)$ \\
\hline
$A$ & $\SU(N)$ & $\frac{1}{N}\sum_{i=1}^N\lambda_i(\lambda_i+N+1-2i)-\frac{1}{N^2}\left(\sum_{i=1}^N\lambda_i\right)^2$\\ \hline
$B$ & $\SO(2N+1)$ & $\frac{1}{2N+1}\sum_{i=1}^N\lambda_i(\lambda_i+2N+1-2i)$\\ \hline
$C$ & $\Sp(N)$ & $\frac{1}{2N}\sum_{i=1}^N\lambda_i(\lambda_i+2N+2-2i)$\\ \hline
$D$ & $\SO(2N)$ & $\frac{1}{2N}\sum_{i=1}^N\lambda_i(\lambda_i+2N-2i)$.\\ 
\hline
    \end{tabular}
    \caption{Casimir numbers for all group types.}
    \label{tab:Casimirs}
\end{table}
\end{itemize}

\subsection{Random partitions}\label{sec:random_partitions}

An integer partition is a finite family $\alpha=(\alpha_1\geq\ldots\geq\alpha_r)$ of nonincreasing positive integers. Its size is $\vert\alpha\vert=\sum_i\alpha_i$, its length is $\ell(\alpha)=r$, and its parts are the coefficients $\alpha_i$. By convention, we denote by $\varnothing$ the empty partition, which has zero size and length. If $n\geq 0$ is an integer, we write $\alpha\vdash n$ if $\alpha$ is a partition such that $\vert\alpha\vert=n$. Besides, we denote by $\Pfr_n$ the set of partitions $\alpha\vdash n$, and $\Pfr=\bigsqcup_{n\geq 0}\Pfr_n$ the set of all integer partitions. The generating function of the numbers $p(n)$ of partitions of $n$ is given by
\[
\sum_{n\geq 0}p(n)q^n=\prod_{m\geq 1}(1-q^m)^{-1},
\]
which is absolutely convergent for $\vert q\vert <1$ and corresponds to the inverse of Euler's function $\phi:q\mapsto\prod_{m\geq 1} (1-q^m)$ (see \cite[\S 14.3]{Apo}).

Partitions are in bijection with Young diagrams, which are left-justified tables, such that the $i$-th line of the Young diagram of $\alpha$ contains $\alpha_i$ cells. See Fig.~\ref{fig:YD} for an illustration.
\begin{figure}[h!]
    \centering
    \includegraphics{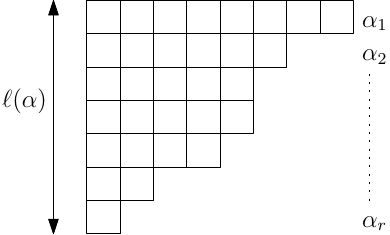}
    \caption{The Young diagram representing the partition $\alpha=(8,6,5,5,4,2,1)$.}
    \label{fig:YD}
\end{figure}
Integer partitions are ubiquitous in mathematics, and random partitions have been used to solve various problems ranging from the distribution of the longest increasing subsequence of a random permutation \cite{BOO} to the asymptotics of volumes of moduli spaces of holomorphic differentials \cite{EO}. Let $\Ufr(q)$ be the $q$-\emph{uniform measure} on $\Pfr$ for $q\in(0,1)$, which is defined by
\begin{equation}
\label{def:qunif}
\Pbb(\alpha)=\phi(q)q^{\vert\alpha\vert},\quad \forall \alpha\in\Pfr.
\end{equation}
Its construction is analogous to the Poissonized Plancherel measure \cite{BOO}, which has been more extensively studied.

Given any partition $\lambda\vdash n$, we assign to any cell $\square\in\lambda$ of its Young diagram a \emph{content} $c(\square)=j(\square)-i(\square)$, which is the difference of its coordinates in the diagram. The \emph{total content} of $\lambda$ is defined by
\[
K(\lambda)=\sum_{\square\in\lambda}c(\square).
\]
It can be expressed explicitly as a function of the length of the rows of the diagram:
\begin{equation}\label{eq:content_explicit}
K(\lambda) = \frac{1}{2} \sum_{i=1}^{\ell(\lambda)} \lambda_i\left(\lambda_i +1-2i\right).
\end{equation}
The typical observables that we will consider in the present paper are polynomials of $K(\lambda)$ and $\vert\lambda\vert$.
It will be useful to compare the total content to the size and length of a given partition.
\begin{proposition}\label{prop:bound_content}
For any $\alpha\in\Pfr$,
\begin{equation}\label{eq:bound_content_1}
\vert K(\alpha)\vert\leq \vert\alpha\vert^2,
\end{equation}
and
\begin{equation}\label{eq:bound_content_2}
-\ell(\alpha)\vert\alpha\vert\leq 2K(\alpha) \leq \alpha_1\vert\alpha\vert.
\end{equation}
\end{proposition}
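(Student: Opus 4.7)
The plan is to establish the refined bounds (\ref{eq:bound_content_2}) first, and then derive (\ref{eq:bound_content_1}) as an easy corollary. Both bounds follow from the explicit formula (\ref{eq:content_explicit}), rewritten as
\[
2K(\alpha) = \sum_{i=1}^{\ell(\alpha)} \alpha_i^2 \;+\; |\alpha| \;-\; 2\sum_{i=1}^{\ell(\alpha)} i\,\alpha_i.
\]

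First I would prove the upper bound $2K(\alpha) \leq \alpha_1|\alpha|$. Since $\alpha_i \leq \alpha_1$ for every $i$, the first term is bounded by $\alpha_1\sum_i \alpha_i = \alpha_1|\alpha|$. For the remaining contribution $|\alpha| - 2\sum_i i\alpha_i$, the fact that $i \geq 1$ for each part gives $\sum_i i\alpha_i \geq \sum_i \alpha_i = |\alpha|$, so $|\alpha| - 2\sum_i i\alpha_i \leq -|\alpha| \leq 0$. Adding these two contributions yields $2K(\alpha) \leq \alpha_1|\alpha|$.

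Next, to obtain the lower bound $2K(\alpha) \geq -\ell(\alpha)|\alpha|$, I would invoke the conjugate partition $\alpha'$, whose Young diagram is the transpose of that of $\alpha$. Under this transposition, a cell at position $(i,j)$ is mapped to $(j,i)$, so its content $j-i$ is negated. Summing over cells gives the anti-symmetry $K(\alpha') = -K(\alpha)$. Combined with $|\alpha'| = |\alpha|$ and $\alpha'_1 = \ell(\alpha)$, applying the upper bound already established to $\alpha'$ gives $-2K(\alpha) = 2K(\alpha') \leq \ell(\alpha)|\alpha|$, which is the desired lower bound.

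Finally, (\ref{eq:bound_content_1}) is immediate: since trivially $\alpha_1 \leq |\alpha|$ and $\ell(\alpha) \leq |\alpha|$, the bounds (\ref{eq:bound_content_2}) yield $|K(\alpha)| \leq \tfrac{1}{2}|\alpha|^2 \leq |\alpha|^2$. There is no substantial obstacle here: the only mildly non-obvious ingredient is the transposition trick, which reduces the lower bound to the upper bound; everything else is a direct reading of (\ref{eq:content_explicit}).
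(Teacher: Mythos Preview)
Your proof is correct. The upper bound argument is essentially the same idea as the paper's (both rely on $\alpha_i\leq\alpha_1$), though your decomposition into $\sum_i\alpha_i^2$ and the remainder $|\alpha|-2\sum_i i\alpha_i$ is cleaner and more transparent than the paper's one-line manipulation.

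The genuine difference lies in the lower bound. The paper argues directly by columns: the sum of contents in column $j$ equals $n_j\bigl(j-\tfrac12-\tfrac12 n_j\bigr)$ where $n_j$ is the column height, and since $n_j\leq\ell(\alpha)$ and $j\geq 1$ this is at least $-\tfrac12 n_j\,\ell(\alpha)$; summing over columns gives $K(\alpha)\geq -\tfrac12\ell(\alpha)|\alpha|$. You instead exploit the transposition symmetry $K(\alpha')=-K(\alpha)$, $\alpha'_1=\ell(\alpha)$, $|\alpha'|=|\alpha|$ to deduce the lower bound from the upper bound applied to $\alpha'$. Your route is more elegant and avoids repeating a second computation; the paper's route is more self-contained in that it does not invoke the conjugate partition (though the paper does use exactly your transposition idea later, in the proof of Proposition~\ref{prop:symmetry_partitions}). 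Both reach the same conclusion, and the deduction of~\eqref{eq:bound_content_1} from~\eqref{eq:bound_content_2} is identical in the two proofs.
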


\begin{proof}
On the one hand, by~\eqref{eq:content_explicit},
\[
2K(\alpha)=\sum_{i=1}^{\ell(\alpha)}\alpha_i(\alpha_i+1-2i) =\alpha_1 \sum_i (\alpha_i+1-2i)\leq \alpha_1\vert\alpha\vert.
\]
On the other hand, the sum of contents of the column $j$ of $\alpha$ is
\[
(j-1)+(j-2)+\ldots+(j-n_j)\geq n_j\left(j-\frac12(n_j+1)\right),
\]
where $n_j\leq \ell(\alpha)$ is the height of the column $j$. It follows that
\[
K(\alpha)= \sum_j n_j\left(j-\frac12-\frac12n_j\right)\geq -\sum_j \frac{n_j}{2}\ell(\alpha)\geq -\frac12\vert\alpha\vert\ell(\alpha),
\]
which proves \eqref{eq:bound_content_2}. Then, \eqref{eq:bound_content_1} can be deduced from the fact that $\ell(\alpha)\leq \vert\alpha\vert$ and $\alpha_1\leq\vert\alpha\vert$.
\end{proof}

The final result of this subsection is that only even powers of the total content contribute to the expectations of the observables. 

\begin{proposition}\label{prop:symmetry_partitions}
Let $\alpha,\beta$ two independent $q$-uniform random partitions and $P\in\R[X,Y]$ be a bivariate polynomial. For any integers $k_1,k_2,k_3,k_4\geq 0$, if $k_1$ or $k_2$ is odd, then
\begin{equation}
\E[K(\alpha)^{k_1}K(\beta)^{k_2}P(\vert\alpha\vert,\vert\beta\vert)]=0.
\end{equation}
\end{proposition}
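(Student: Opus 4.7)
The plan is to exploit the involution on partitions given by conjugation (transposition): to each partition $\alpha$ we associate $\alpha^T$, whose Young diagram is obtained by reflecting that of $\alpha$ across the main diagonal, so $\alpha^T_j=\#\{i:\alpha_i\geq j\}$. The proof will follow from two elementary observations about this map combined with independence.

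First I would check that transposition preserves size but reverses the total content. Indeed $|\alpha^T|=|\alpha|$, and each cell at position $(i,j)$ in $\alpha$ corresponds to a cell at position $(j,i)$ in $\alpha^T$, so its content $c(\square)=j-i$ becomes $i-j=-c(\square)$; summing gives
\[
K(\alpha^T)=-K(\alpha).
\]
Next, the $q$-uniform measure $\Ufr(q)$ is size-biased only (see \eqref{def:qunif}), so $\Pbb(\alpha^T)=\phi(q)q^{|\alpha^T|}=\phi(q)q^{|\alpha|}=\Pbb(\alpha)$. Hence $\alpha^T$ has the same law as $\alpha$ under $\Ufr(q)$.

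Now by independence of $\alpha$ and $\beta$, the pair $(\alpha^T,\beta)$ has the same distribution as $(\alpha,\beta)$. Applying this change of variable and using $K(\alpha^T)=-K(\alpha)$ together with $|\alpha^T|=|\alpha|$, I would obtain
\[
\E[K(\alpha)^{k_1}K(\beta)^{k_2}P(|\alpha|,|\beta|)]
=(-1)^{k_1}\E[K(\alpha)^{k_1}K(\beta)^{k_2}P(|\alpha|,|\beta|)],
\]
which forces the expectation to vanish whenever $k_1$ is odd. The symmetric argument, applied to the pair $(\alpha,\beta^T)$, handles the case where $k_2$ is odd.

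There is essentially no obstacle here; the only point that requires mild care is to transpose only one of the two partitions at a time, so that the case where both $k_1$ and $k_2$ are odd is still covered (applying transposition to both simultaneously would produce a factor $(-1)^{k_1+k_2}=+1$ and give no information).
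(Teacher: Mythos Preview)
Your proof is correct and follows essentially the same approach as the paper: both rely on the transposition involution $\alpha\mapsto\alpha'$, which preserves $|\alpha|$ and sends $K(\alpha)$ to $-K(\alpha)$, so that the sum (or expectation) picks up a factor $(-1)^{k_1}$. The paper phrases it by expanding the expectation as a sum over partitions and noting $\sum_{\alpha\vdash n_1}K(\alpha)^{k_1}=0$ for odd $k_1$, while you phrase it via distributional invariance of $\Ufr(q)$ under transposition; these are the same argument.
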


\begin{proof}
By symmetry of the roles of $\alpha$ and $\beta$, we can assume that $k_1$ is odd. By definition,
\begin{align*}
\E[K(\alpha)^{k_1}K(\beta)^{k_2}&P(\vert\alpha\vert,\vert\beta\vert)]= \phi(q)^2\sum_{n_1,n_2\geq 0}q^{n_1+n_2}P(n_1,n_2)\sum_{\alpha\vdash n_1}K(\alpha)^{k_1}\sum_{\beta\vdash n_2}K(\beta)^{k_2}.
\end{align*}
However, for any $\alpha\vdash n_1$, $K(\alpha)=-K(\alpha')$, where $\alpha'$ is the diagram obtained by transposing the diagram of $\alpha$, so that $\sum_\alpha K(\alpha)^{k_1}=0$ if $k_1$ is odd. The result follows.
\end{proof}

\subsection{Highest weights/partitions correspondence}\label{sec:hw_partition_corresp}

For any compact classical group $G_N$, we want to describe $\widehat{G}_N$ in terms of partitions instead of highest weights, because it is more adapted to the study of the large-$N$ asymptotics. Indeed, the expression of $c_2(\lambda)$ for $\lambda\in\widehat{G}_N$ has a nontrivial dependence on $N$ that becomes fully explicit in this new description, and it is the key of the asymptotic expansion of $\Tr(e^{\frac{t}{2}\Delta})$. As we will see below, the relationship between highest weights of special orthogonal and symplectic groups is quite straightforward, but we shall explain more carefully the procedure for unitary groups. There is a well-established construction \cite{Sta,Koi} that produces a highest weight of $\GL_N(\C)$ from a couple of partitions. Namely, if $\alpha$ and $\beta$ are two arbitrary partitions, for any $N\geq \ell(\alpha)+\ell(\beta)$ one can construct the highest weight
\[
[\alpha,\beta]_N=(\alpha_1,\ldots,\alpha_{\ell(\alpha)},0,\ldots,0,-\beta_{\ell(\beta)},\ldots,-\beta_1).
\]
It corresponds to a rational representation of $\GL_N(\C)$ but also an irreducible representation of $\U(N)$. It has the convenience to be well-defined for any $N$ large enough, and it is suitable for large-$N$ asymptotics, hence it is named \emph{stable representation}. What we shall consider is a slightly modified version of this construction, because the map $(\alpha,\beta)\mapsto [\alpha,\beta]_N$ does not produce all highest weights: we will rather consider a shift of these rational representations by an arbitrary integer $n\in\Z$, and introduce the map $\lambda_N:\Pfr\times\Pfr\times\Z\to\widehat{\U}(N)$ defined by
\[
\lambda_N(\alpha,\beta,n)=(\alpha_1+n,\ldots,\alpha_{\ell(\alpha)}+n,n,\ldots,n,n-\beta_{\ell(\beta)},\ldots,n-\beta_1).
\]
Now this map is surjective for any fixed $N$, but not injective, as explained in \cite[\S 2.2]{Lem}. In order to get injectivity, we have to force some kind of ``discontinuity" between both partitions, by setting $\lambda_k=n$ for a particular $1\leq k\leq N$, and imposing that $\ell(\alpha)\leq k-1$ and $\ell(\beta)\leq N-k$. The choice of this $k$ is not unique, but not completely arbitrary either: as long as $k$ is of order $N/2,$ everything will work fine. We choose $k=\lfloor(N+1)/2\rfloor$, so that the cutoffs for lengths of $\alpha$ and $\beta$ are respectively
\[
A_N=\lfloor(N+1)/2\rfloor-1,\quad B_N=N-\lfloor(N+1)/2\rfloor.
\]

\begin{proposition}\label{prop:dual_sets}
Let $G_N$ be a compact classical group of type $\tau\in\{A',A,B,C,D\}$. Its dual $\widehat{G}_N$ can be described in terms of integer partitions. More precisely,
\[
\widehat{\U}(N)\simeq \Lambda_N^{(1)}:=\{(\alpha,\beta,n)\in\Pfr\times\Pfr\times\Z:\ell(\alpha)\leq A_N,\ell(\beta)\leq B_N\},
\]
\[
\widehat{\SU}(N)\simeq \Lambda_N^{(2)}:= \{(\alpha,\beta)\in\Pfr\times\Pfr:\ell(\alpha)\leq A_N,\ell(\beta)\leq B_N\},
\]
\[
\widehat{\SO}(2N+1)\simeq\widehat{\Sp}(N)\simeq \Lambda_N^{(3)}:=\{\mu\in\Pfr:\ell(\mu)\leq N\},
\]
\[
\widehat{\SO}(2N)\simeq\Lambda_N^{(4)}:=\{(\mu,m,n)\in\Pfr\times\N\times\Z: \ell(\mu)\leq N-2, \vert n\vert\leq m\}.
\]
\end{proposition}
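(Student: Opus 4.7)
The plan is to handle each group type separately by constructing an explicit bijection. The simplest cases are types $B$ and $C$: the identifications $\widehat{\SO}(2N+1)\simeq\widehat{\Sp}(N)\simeq\Lambda_N^{(3)}$ are essentially tautological, since the descriptions recalled in the previous subsection say that a highest weight is a tuple $(\lambda_1,\ldots,\lambda_N)$ of nonnegative integers with $\lambda_1\geq\ldots\geq\lambda_N$, which is exactly a partition of length at most $N$.

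The substantive work lies in type $A'$. Here I would exhibit an explicit inverse to the restriction of $\lambda_N$ to $\Lambda_N^{(1)}$. Given $\lambda=(\lambda_1,\ldots,\lambda_N)\in\widehat{\U}(N)$, set $k=\lfloor(N+1)/2\rfloor$ and $n=\lambda_k$, then define $\alpha_i=\lambda_i-n$ for $1\leq i\leq k-1$ and $\beta_j=n-\lambda_{N+1-j}$ for $1\leq j\leq N-k$. Monotonicity of $\lambda$ together with the pivot identity $\lambda_k=n$ immediately gives that $\alpha$ and $\beta$ (after dropping trailing zeros) are partitions satisfying $\ell(\alpha)\leq k-1=A_N$ and $\ell(\beta)\leq N-k=B_N$; direct substitution then shows that $\lambda\mapsto(\alpha,\beta,n)$ is a two-sided inverse of $\lambda_N$. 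The type $A$ case reduces to this by quotienting: two irreducible $\U(N)$-representations restrict to the same $\SU(N)$-representation iff their highest weights differ by a uniform shift $(c,\ldots,c)$ with $c\in\Z$ (twisting by a power of the determinant character), and this shift is exactly the $\Z$-action $(\alpha,\beta,n)\mapsto(\alpha,\beta,n+c)$ in $\Lambda_N^{(1)}$, yielding $\widehat{\SU}(N)\simeq\Lambda_N^{(1)}/\Z\simeq\Lambda_N^{(2)}$.

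Finally, for type $D$, I would define the map $(\mu,m,n)\in\Lambda_N^{(4)}\mapsto(\mu_1+m,\ldots,\mu_{N-2}+m,m,n)$; its image lies in $\widehat{\SO}(2N)$ because $\mu_{N-2}\geq 0$ and $|n|\leq m$ furnish the two missing decreasing inequalities, and the inverse sends $\lambda$ to $(\mu,m,n)$ with $m=\lambda_{N-1}$, $n=\lambda_N$ and $\mu_i=\lambda_i-m$, where $\lambda_{N-2}\geq\lambda_{N-1}$ guarantees $\mu_{N-2}\geq 0$. The main obstacle is not the correctness of any individual bijection, which is routine, but the careful bookkeeping of the asymmetric cutoffs $A_N$ and $B_N$ in the unitary case: one must check that the choice $k=\lfloor(N+1)/2\rfloor$ yields a consistent construction for both parities of $N$ and produces precisely the advertised length bounds.
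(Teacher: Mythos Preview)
Your proposal is correct and follows essentially the same route as the paper: explicit case-by-case bijections, with the types $B,C$ tautological, the type $D$ map and its inverse exactly as you wrote, and the type $A'$ inverse recovered by reading off the pivot $n=\lambda_k$ at $k=\lfloor(N+1)/2\rfloor$. The only minor difference is in type $A$: the paper writes down the explicit formula $\lambda_N^A(\alpha,\beta)=(\alpha_1+\beta_1,\ldots,\alpha_{\ell(\alpha)}+\beta_1,\beta_1,\ldots,\beta_1,\beta_1-\beta_{\ell(\beta)},\ldots,0)$, whereas you obtain the same bijection more conceptually by quotienting $\Lambda_N^{(1)}$ by the $\Z$-shift---these agree since the paper's formula is precisely $\lambda_N^{A'}(\alpha,\beta,\beta_1)$, the unique representative in each $\Z$-orbit with last coordinate zero.
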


\begin{proof}
The cases of $\U(N)$ and $\SU(N)$ are treated in \cite{Lem,LM2}, and are respectively described by bijections $\lambda_N^{A'}:\Lambda_N^{(1)}\to\widehat{\U}(N)$, $\lambda_N^A:\Lambda_N^{(2)}\to\widehat{\SU}(N)$ given by
\[
\lambda_N^{A'}(\alpha,\beta,n)= (\alpha_1+n,\ldots,\alpha_{\ell(\alpha)}+n,n,\ldots,n,n-\beta_{\ell(\beta)},\ldots,n-\beta_1),
\]
\[
\lambda_N^A(\alpha,\beta)=(\alpha_1+\beta_1,\ldots,\alpha_{\ell(\alpha)}+\beta_1,\beta_1,\ldots,\beta_1,\beta_1-\beta_{\ell(\beta)},\ldots,\beta_1-\beta_2,0).
\]
The parameters $A_N$ and $B_N$ ensure that $\lambda_{A_N+1}=n$ is independent of $\alpha$ and $\beta$, and that $N\geq \ell(\alpha)+\ell(\beta)+1$, so that there is a unique way to recover $\alpha,\beta,n$ in the case of $\U(N)$, or $\alpha,\beta$ in the case of $\SU(N)$. For $\SO(2N+1)$, there is a much more straightforward bijection $\lambda_N^B:\Lambda_N^{(3)}\to\widehat{\SO}(2N+1)$ given by
\[
\lambda_N^B(\mu)=(\mu_1,\ldots,\mu_{\ell(\mu)},0,\ldots,0),
\]
and the same happens for $\Sp(N)$: $\lambda_N^C:\Lambda_N^{(3)}\to\Sp(N)$. Finally, in the case of $\SO(2N)$, the bijection $\lambda_N^D:\Lambda_N^{(4)}\to\SO(2N)$ is defined by
\[
\lambda_N^D(\mu,m,n)=(\mu_1+m,\ldots,\mu_{N-2}+m,m,n).
\]
In the formula above, we have set $\mu_i=0$ for all $i>\ell(\mu)$ by convention. From this, it is clear that we have a bijection, whose inverse is given by $n_\lambda=\lambda_N$, $m_\lambda=\lambda_{N-1}$ and $\mu_i=\lambda_i-m$ for all $i\leq N-1$.
\end{proof}

Let us introduce several intermediary functions that we shall use to describe the Casimir numbers of all group types (cf. Table~\ref{tab:Casimirs}) more conveniently:
\[
F^{A'}(\alpha,\beta,n)=K(\alpha)+K(\beta)+n(\vert\alpha\vert-\vert\beta\vert),\quad \forall(\alpha,\beta,n)\in\Lambda_N^{(1)},
\]
\[
F_1^{A}(\alpha,\beta)=K(\alpha)+K(\beta),\quad F_2^A(\alpha,\beta)=(\vert\alpha\vert-\vert\beta\vert)^2,\quad \forall (\alpha,\beta)\in\Lambda_N^{(2)},
\]
\[
F^B(\mu)=K(\mu)-\frac12\vert\mu\vert,\quad  F^C(\mu)=K(\mu)+\frac12\vert\mu\vert,\quad \forall \mu\in\Lambda_N^{(3)},
\]
\[
F_1^D(\mu)=K(\mu)-\frac12\vert \mu\vert,\quad  F_{2,N}^D(\mu,m,n)=\frac{N-1}{2}m+\frac{N-1}{2N}m^2+\frac{\vert\mu\vert}{N}m+\frac{n^2}{2N},\quad \forall (\mu,m,n)\in\Lambda_N^{(4)}.
\]
\begin{lemma}\label{lem:casimirs}
The Casimir numbers of the compact classical groups can be rewritten in terms of integer partitions and the previously defined functions:
\begin{align*}
c_2(\lambda_N^{A'}(\alpha,\beta,n))&=  \vert\alpha\vert+\vert\beta\vert+n^2+\frac{2}{N}F^{A'}(\alpha,\beta,n),\quad \forall (\alpha,\beta,n)\in\Lambda_N^{(1)},\\
c_2(\lambda_N^A(\alpha,\beta))&=  \vert\alpha\vert+\vert\beta\vert+\frac{2}{N}F_1^A(\alpha,\beta)-\frac{1}{N^2}F_2^A(\alpha,\beta),\quad \forall(\alpha,\beta)\in\Lambda_N^{(2)},\\
c_2(\lambda_N^B(\mu))&=  \vert\mu\vert+\frac{2}{2N+1}F^B(\mu),\quad \forall\mu\in\Lambda_N^{(3)},\\
c_2(\lambda_N^C(\mu))&=\vert\mu\vert+\frac{2}{2N}F^C(\mu),\quad\forall\mu\in\Lambda_N^{(3)},\\
c_2(\lambda_N^D(\mu,m,n))&=\vert\mu\vert+\frac{2}{2N}F_1^D(\mu) +F_{2,N}^D(\mu,m,n),\quad \forall (\mu,m,n)\in\Lambda_N^{(4)}.
\end{align*}

\end{lemma}

\begin{proof}
The proofs in the cases of types $A$ and $A'$ can be found in \cite{Lem}. For $\mu\in\Lambda_N^{(3)}$,
\[
c_2(\lambda_N^B(\mu))=\frac{1}{2N+1}\sum_{i=1}^{\ell(\mu)}\mu_i(\mu_i+2N+1-2i)=\vert\mu\vert + \frac{1}{2N+1}\sum_{i=1}^{\ell(\mu)}\mu_i(\mu_i-2i),
\]
and the result follows from~\eqref{eq:content_explicit}. The computation for $c_2$ is exactly the same, now let us treat the case of $c_2$. Let $(\mu,m,n)\in\Lambda_N^{(4)}$. Recall that
\[
\lambda_N^D(\mu,m,n)=(\mu_1+m,\ldots,\mu_{N-2}+m,m,n),
\]
so that
\[
c_2(\lambda_N^D(\mu,m,n))=\frac{1}{2N}\sum_{i=1}^{N-1}(\mu_i+m)(\mu_i+m+2N-2i) + \frac{1}{2N}n^2.
\]
The result follows from simple identifications.
\end{proof}

The preceding formulas have two consequences. First, they turn the central heat trace into an expectation with respect to explicit measures on partitions. Second, they reveal the stable
large-rank structure of the Casimir. We start with
the probabilistic reformulation of the trace. If we combine Proposition~\ref{prop:dual_sets} and Lemma~\ref{lem:casimirs}, we obtain a new formula for the central heat trace, involving $q_t$-uniform random partitions as defined in \eqref{def:qunif}, for $q_t = e^{-t/2}$. In the unitary case, they also involve the integer Gaussian distribution $\Gfr(q_t)$ on $\Z$, given by $\Pbb(n)=\theta(q_t)e^{-\frac{t}{2}n^2}$ for $n\in\Z$, with $\theta(q)=\sum_{n\in\Z}q^{n^2}$ (see \cite{LM2}).

\begin{corollary}\label{cor:heat_trace_expectation}
For all $N$ and all classical groups $G_N$, the central heat trace on $G_N$ can be rewritten as an expectation over random partitions.
\begin{enumerate}
\item If $G_N=\U(N)$ or $\SU(N)$, let $(\alpha,\beta,n)$ be independent random variables such that $\alpha,\beta\sim\Ufr(q_t)$ and $n\sim\Gfr(q_t)$. We have
\begin{align}
\begin{split}
\Tr(e^{\frac{t}{2}\Delta_{\U(N)}}) & =\frac{\theta(q_t)}{\phi(q_t)^2}\E\left[e^{-\frac{t}{N}F^{A'}(\alpha,\beta,n)}\mathbf{1}_{\Lambda_N^{(1)}}(\alpha,\beta,n)\right]\\
\Tr(e^{\frac{t}{2}\Delta_{\SU(N)}}) & = \frac{1}{\phi(q_t)^2}\E\left[e^{-\frac{t}{N}F_1^A(\alpha,\beta)+\frac{t}{2N^2}F_2^A(\alpha,\beta)}\mathbf{1}_{\Lambda_N^{(2)}}(\alpha,\beta)\right].
\end{split}
\end{align}
\item If $G_N=\SO(2N+1),$ $\Sp(N)$ or $\SO(2N)$, let $\mu$ be a single random partition such that $\mu\sim\Ufr(q_t)$. We have
\begin{align}
\begin{split}
\Tr\left(e^{\frac{t}{2}\Delta_{\SO(2N+1)}}\right) & = \frac{1}{\phi(q_t)}\E\left[e^{-\frac{t}{2N+1}F^B(\mu)}\mathbf{1}_{\{\ell(\mu)\leq N\}}\right],\\
\Tr\left(e^{\frac{t}{2}\Delta_{\Sp(N)}}\right) & = \frac{1}{\phi(q_t)}\E\left[e^{-\frac{t}{2N}F^C(\mu)}\mathbf{1}_{\{\ell(\mu)\leq N\}}\right],\\
\Tr(e^{\frac{t}{2}\Delta_{\SO(2N)}}) & =\frac{1}{\phi(q_t)}\E\left[e^{-\frac{t}{2N}F_1^D(\mu)}\sum_{m=0}^\infty\sum_{n=-m}^m e^{-\frac{t}{2}F_{2,N}^D(\mu,m,n)}\mathbf{1}_{\{\ell(\mu)\leq N-2\}}\right].
\end{split}
\end{align}
\end{enumerate}
\end{corollary}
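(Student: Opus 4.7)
The plan is simply to assemble the two preceding structural results, Proposition~\ref{prop:dual_sets} and Lemma~\ref{lem:casimirs}, inside the spectral formula
\[
\Tr\bigl(e^{\frac{t}{2}\Delta_{G_N}}\bigr)=\sum_{\lambda\in\widehat{G}_N}e^{-\frac{t}{2}c_2(\lambda)},
\]
and to recognize the result as an expectation against $\Ufr(q_t)$ (and $\Gfr(q_t)$ in the unitary case). First, for each type $\tau\in\{A',A,B,C,D\}$ I would use the bijection $\lambda_N^{\tau}:\Lambda_N^{(i)}\to\widehat{G}_N$ to reindex the sum, with the length constraints $\ell(\alpha)\leq A_N$, $\ell(\beta)\leq B_N$, or $\ell(\mu)\leq N$ (resp.\ $N-2$) recorded as the indicator functions that appear in the statement.

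Next, I would insert the explicit expression of $c_2^\tau$ from Lemma~\ref{lem:casimirs}. The key point is that in each case the exponential factorizes as a product of an ``$N$-independent'' piece depending only on $\vert\alpha\vert$, $\vert\beta\vert$, $\vert\mu\vert$ (and $n^2$ for $\U(N)$) and a ``finite-$N$'' correction of the form $e^{-\frac{t}{N}F}$ (with an extra $e^{+\frac{t}{2N^2}F_2^A}$ in the $\SU(N)$ case, and an $e^{-\frac{t}{2}F_{2,N}^D}$ in the $\SO(2N)$ case). Since $q_t=e^{-t/2}$, the factor $e^{-\frac{t}{2}\vert\alpha\vert}=q_t^{\vert\alpha\vert}$ coincides up to $\phi(q_t)$ with the density~\eqref{def:qunif} of $\Ufr(q_t)$, and likewise $e^{-\frac{t}{2}n^2}$ coincides up to $\theta(q_t)$ with the density of $\Gfr(q_t)$. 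Absorbing two independent copies of $\phi(q_t)$ (for $\alpha$ and $\beta$) and one factor of $\theta(q_t)$ (for $n$, in the $\U(N)$ case) then transforms the sum into the stated expectation, producing the prefactors $\phi(q_t)^{-2}\theta(q_t)$, $\phi(q_t)^{-2}$ or $\phi(q_t)^{-1}$ depending on the group.

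The only case that deserves extra care is $G_N=\SO(2N)$: the auxiliary variables $(m,n)\in\N\times\Z$ with $\vert n\vert\leq m$ are \emph{not} randomized, so the sum $\sum_{m=0}^\infty\sum_{n=-m}^m$ is kept explicit inside the expectation against $\mu\sim\Ufr(q_t)$. This is consistent with Lemma~\ref{lem:casimirs}: the term $F_{2,N}^D$ is of order $1$, not $1/N$, so there is no natural probabilistic weight to pull out for $(m,n)$, and the dependence on $N$ must remain inside the expression. Overall I do not expect any real difficulty: the result is a direct computation once the two preceding statements are in hand, and the only ``obstacle'' is bookkeeping---tracking the five cases simultaneously and matching the normalization constants against those announced in the corollary.
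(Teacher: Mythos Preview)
Your proposal is correct and matches the paper's approach exactly: the corollary is presented there as an immediate combination of Proposition~\ref{prop:dual_sets} and Lemma~\ref{lem:casimirs}, with no further argument given. The bookkeeping you describe (reindexing via $\lambda_N^\tau$, factoring out the $q_t^{\vert\cdot\vert}$ and $q_t^{n^2}$ weights to produce the $\phi(q_t)$ and $\theta(q_t)$ normalizations, and keeping the $(m,n)$-sum explicit for $\SO(2N)$) is precisely what is needed.
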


In order to perform an asymptotic expansion of the formulas from Corollary~\ref{cor:heat_trace_expectation}, which will be essentially the mechanism behind the proof of Theorem~\ref{thm:asympt_expansion}, we need two important lemmas: a suitable lower bound on Casimir numbers (or more precisely on certain terms involved in the expression of the Casimir numbers), and deviation inequalities for $q_t$-uniform random partitions.

\begin{lemma}\label{lem:lb-casimirs}
Let $G_N$ be a compact classical group of type $\tau\in\{A',A,B,C,D\}$. We have the lower bounds
\[
c_2(\lambda_N^{A'}(\alpha,\beta,n)) \geq \frac12(\vert\alpha\vert+\vert\beta\vert)+\left(n+\frac{\vert\alpha\vert-\vert\beta\vert}{N}\right)^2,\quad \forall (\alpha,\beta,n)\in\Lambda_N^{(1)},
\]
\[
\frac{2}{N}F_1^A(\alpha,\beta)-\frac{1}{N^2}F_2^A(\alpha,\beta)\geq -\frac12(\vert\alpha\vert+\vert\beta\vert),\quad  \forall (\alpha,\beta)\in\Lambda_N^{(2)},
\]
\[
\frac{2}{2N+1}F^B(\mu)\geq -\frac{N+1}{2N+1}\vert\mu\vert \geq -\frac{2}{3}\vert\mu\vert ,\quad \forall \mu\in\Lambda_N^{(3)},
\]
\[
\frac{1}{N}F^C(\mu)\geq-\frac12\vert\mu\vert,\quad \forall\mu\in\Lambda_N^{(3)},
\]
\[
\frac{1}{N}F_1^D(\mu)\geq-\frac12\vert\mu\vert,\quad \forall \mu\in\Pfr\ \text{s.t.}\ \ell(\mu)\leq N-2.
\]
\end{lemma}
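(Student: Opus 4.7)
The plan is to treat the five cases of the lemma in increasing order of difficulty, reducing each to Lemma~\ref{lem:casimirs}'s explicit formulas combined with a suitable lower bound on the total content $K$. For the three cases $B, C, D$, the naive bound $2K(\mu) \geq -\ell(\mu)|\mu|$ from Proposition~\ref{prop:bound_content}, combined with the length constraint $\ell(\mu) \leq N$ (respectively $\ell(\mu) \leq N-2$ in type $D$), is already enough. For instance, in case $B$ one writes $2F^B(\mu) = 2K(\mu) - |\mu| \geq -(\ell(\mu)+1)|\mu| \geq -(N+1)|\mu|$; dividing by $2N+1$ yields the first inequality, and the second follows from $\frac{N+1}{2N+1} \leq \frac{2}{3}$ for $N \geq 1$. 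Cases $C$ and $D$ are exactly analogous.

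For type $A$ the bound is more subtle because the correction $-\frac{(|\alpha|-|\beta|)^2}{N^2}$ in Lemma~\ref{lem:casimirs}(ii) cannot be absorbed by the Proposition~\ref{prop:bound_content} bound when $|\alpha|$ and $|\beta|$ are very unequal. I would therefore first establish a strengthening: for any partition $\alpha$ and any real $L \geq \ell(\alpha)$,
\begin{equation*}
2K(\alpha) \geq \frac{|\alpha|^2}{L} - |\alpha|\,L.
\end{equation*}
The strategy is a smoothing argument on Young diagrams. Moving a box from the end of row $i$ to the end of row $j > i$, whenever the result is still a partition, changes $K$ by $(\alpha_j - \alpha_i) - (j - i - 1) \leq 0$, so iterating cannot increase $K$. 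The iteration stabilises at the near-rectangular shape $((q+1)^r, q^{\ell-r})$ with $|\alpha| = q\ell + r$ and $0 \leq r < \ell$, for which a direct calculation gives $2K = |\alpha|^2/\ell - |\alpha|\ell + (\ell+1)r(\ell-r)/\ell$, establishing the claim with $L = \ell(\alpha)$; the extension to any real $L \geq \ell(\alpha)$ follows from the monotonicity of $L \mapsto |\alpha|^2/L - |\alpha|L$.

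With this refinement in hand, I apply it with $L = N/2$ (which majorises both $\ell(\alpha) \leq A_N$ and $\ell(\beta) \leq B_N$) to each of $\alpha, \beta$, so that $\frac{2K(\alpha)}{N} \geq \frac{2|\alpha|^2}{N^2} - \frac{|\alpha|}{2}$ and similarly for $\beta$. Substituting into Lemma~\ref{lem:casimirs}(ii) and using the algebraic identity $2(|\alpha|^2+|\beta|^2) - (|\alpha|-|\beta|)^2 = (|\alpha|+|\beta|)^2$ gives
\begin{equation*}
c_2^A \geq \frac{|\alpha|+|\beta|}{2} + \frac{(|\alpha|+|\beta|)^2}{N^2} \geq \frac{|\alpha|+|\beta|}{2},
\end{equation*}
which is equivalent to the stated bound. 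The type $A'$ bound then follows from the algebraic identity $c_2^{A'}(\alpha,\beta,n) = c_2^A(\alpha,\beta) + (n + (|\alpha|-|\beta|)/N)^2$, verified by expanding Lemma~\ref{lem:casimirs}(i)-(ii).

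The main obstacle is the strengthened lower bound on $K$: the smoothing argument is elementary but requires some care to identify the near-rectangular shape as the global minimiser (not merely a local one under row-adjacent moves), which is why I use the general move between rows $i$ and $j > i$. Once this refinement is in hand, all five cases reduce to straightforward manipulations of the Casimir formulas from Lemma~\ref{lem:casimirs}.
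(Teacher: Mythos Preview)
Your argument is correct. For types $B,C,D$ it coincides with the paper's proof, which also invokes $2K(\mu)\geq -\ell(\mu)\vert\mu\vert$ together with the length constraints.

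The only genuine difference is in type $A$. The paper obtains the same strengthened inequality $2K(\alpha)\geq \vert\alpha\vert^2/\ell(\alpha)-\ell(\alpha)\vert\alpha\vert$ in one line, via the algebraic identity
\[
2K(\alpha)=\sum_i\alpha_i^2+\sum_{i<j}(\alpha_i-\alpha_j)-\ell(\alpha)\vert\alpha\vert,
\]
dropping the nonnegative middle term and applying Cauchy--Schwarz to $\sum_i\alpha_i^2\geq\vert\alpha\vert^2/\ell(\alpha)$. Your smoothing argument reaches the same endpoint (indeed it identifies the near-rectangular minimiser, so it is sharp), but is considerably longer and needs the auxiliary check that the process always terminates at a near-rectangle. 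The Cauchy--Schwarz route buys brevity; your route buys the explicit extremal shape, which is not needed here. For type $A'$, your reduction via the identity $c_2^{A'}(\alpha,\beta,n)=c_2^{A}(\alpha,\beta)+\bigl(n+(\vert\alpha\vert-\vert\beta\vert)/N\bigr)^2$ is a clean observation; the paper simply cites the earlier reference for this case.
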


\begin{proof}
The case of $\U(N)$ is done in \cite[Lemma 4.1]{LM2}. Let us treat the case of $\SU(N)$ (the result is stated in \cite[Lemma 2.9]{Lem} but the proof is incomplete due to an error of sign, hence we provide the right proof here). Let $\alpha$ be a partition of length $\ell(\alpha)\leq A_N\leq N/2$. A quick computation shows that
\[
2K(\alpha)=\sum_{i=1}^{\ell(\alpha)}\alpha_i(\alpha_i+1-2i)=\sum_{i=1}^{\ell(\alpha)}\alpha_i^2+\sum_{1\leq i<j\leq \ell(\alpha)}(\alpha_i-\alpha_j)-\ell(\alpha)\vert\alpha\vert.
\]
However, $\alpha_i\geq \alpha_j$ for all $i<j$ and
\[
\sum_{i=1}^{\ell(\alpha)}\alpha_i^2-\frac1N\vert\alpha\vert^2 \geq \sum_{i=1}^{\ell(\alpha)}\alpha_i^2-\frac{1}{\ell(\alpha)}\vert\alpha\vert^2,
\]
because $\ell(\alpha)\leq N$, and the RHS is nonnegative by Cauchy--Schwarz inequality. Hence,
\[
2K(\alpha)-\frac1N\vert\alpha\vert^2 \geq -\ell(\alpha)\vert\alpha\vert\geq-\frac{N}{2}\vert\alpha\vert.
\]
The same holds for $\beta$ such that $\ell(\beta)\leq B_N\leq N/2$, therefore
\begin{align*}
\frac2N F_1^A(\alpha,\beta)-\frac{1}{N^2}F_2^A(\alpha,\beta) = & \frac1N\left(2K(\alpha)-\frac1N\vert\alpha\vert^2 + 2K(\beta)-\frac1N\vert\beta\vert^2+\frac{2}{N}\vert\alpha\vert\vert\beta\vert\right)\\
\geq & -\frac12(\vert\alpha\vert+\vert\beta\vert).
\end{align*}
All other cases follow from the estimate
\begin{equation}\label{eq:content-size}
2K(\mu)-\vert\mu\vert\geq -(\ell(\mu)+1)\vert\mu\vert,
\end{equation}
which is a consequence of \eqref{eq:bound_content_2}.
\end{proof}

\begin{lemma}\label{lem:dev_ineq_bis}
For any integers $n\in\Z$ and $p\geq 1$ and any random $q_t$-uniform $\alpha,\beta$ with $t>0$, we have as $N\to\infty$
\[
\Pbb((\alpha,\beta,n)\notin\Lambda_N^{(1)})=O_t(N^{-p}).
\]
\[
\Pbb((\alpha,\beta)\notin\Lambda_N^{(2)})=O_t(N^{-p}).
\]
\[
\Pbb(\alpha\notin\Lambda_N^{(i)})=O_t(N^{-p}),\quad i\in\{3,4\}.
\]
Furthermore, for any $\gamma\in(0,\infty)$,
\[
\Pbb(\vert\alpha\vert>N^\gamma)=O_t(e^{-\frac{t}{4}N^\gamma}).
\]
\end{lemma}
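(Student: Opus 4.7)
The plan is to reduce every estimate to a single exponential tail bound on the size $|\alpha|$. Observe that $\ell(\alpha)\le|\alpha|$ for every partition, and that for $N$ large enough each of the length cutoffs $A_N$, $B_N$, $N$, $N-2$ is at least $cN$ for some $c\in(0,1)$. Consequently the event $\ell(\alpha)>A_N$ sits inside $\{|\alpha|>cN\}$, and similarly for $\beta$ or $\mu$. Since $n\in\Z$ is a deterministic parameter in the statement, the condition $(\alpha,\beta,n)\in\Lambda_N^{(1)}$ is really a pure length condition on $(\alpha,\beta)$, and $\alpha\notin\Lambda_N^{(i)}$ for $i\in\{3,4\}$ similarly reduces to a length condition on the single random partition. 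A union bound, splitting cleanly over the independent $\alpha$ and $\beta$ in cases $i=1,2$, then collapses the first three assertions into estimates of $\Pbb(|\alpha|>cN)$.

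Everything therefore reduces to proving
\[
\Pbb(|\alpha|\ge M)=O_t\bigl(e^{-tM/4}\bigr)
\]
for all $M$ large enough depending on $t$, which immediately yields the fourth assertion at $M=N^\gamma$ and is far stronger than the polynomial $O_t(N^{-p})$ needed for the first three (applied at $M=cN$). The cleanest route is Markov's inequality applied to the moment generating function of $|\alpha|$. Using the explicit density $\Pbb(|\alpha|=n)=\phi(q_t)\,p(n)\,q_t^n$ and the classical identity $\sum_{n\ge 0}p(n)\,r^n=1/\phi(r)$ valid for $|r|<1$, one computes
\[
\E\bigl[e^{\delta|\alpha|}\bigr]=\phi(q_t)\sum_{n\ge 0}p(n)\,(q_t e^{\delta})^n=\frac{\phi(q_t)}{\phi(q_t e^{\delta})},
\]
which is finite for any $\delta<t/2$. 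Choosing $\delta=t/4$ and applying Markov produces the claimed exponential tail with constant $\phi(q_t)/\phi(q_t e^{t/4})$ depending only on $t$.

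I expect no real obstacle: the argument is a half-page calculation once the reduction to the size tail is in place, and the $O_t$ constant comes out uniform in $N$ automatically. The only mild point of care is that the same estimate has to be applied to both $\alpha$ and $\beta$ for the joint events defining $\Lambda_N^{(1)}$ and $\Lambda_N^{(2)}$; independence and the identical distribution make this costless. As an alternative to the moment generating function approach, one could argue directly from the Hardy--Ramanujan estimate $p(n)\le C\,e^{\pi\sqrt{2n/3}}$, absorbing the subexponential correction into a quarter of the linear term $-tn/2$ of the exponent once $n$ is large enough depending on $t$, but the Markov route is shorter and more transparent.
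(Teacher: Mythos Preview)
Your argument is correct. The reduction $\ell(\alpha)\le|\alpha|$ together with the moment generating function computation
\[
\E\bigl[e^{t|\alpha|/4}\bigr]=\frac{\phi(q_t)}{\phi(q_{t/2})}<\infty
\]
and Markov's inequality yields the exponential size tail, which in turn dominates every length cutoff $A_N,B_N,N,N-2$ once $N$ is large. The union bound over $\alpha$ and $\beta$ in cases $i=1,2$ and the observation that the conditions defining $\Lambda_N^{(1)}$ and $\Lambda_N^{(4)}$ are pure length conditions (the integer parameters $n$, $m$ being irrelevant here) are handled correctly.

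The paper proceeds slightly differently: rather than passing through the size, it invokes a tail bound directly on the length $\ell(\alpha)$ of a $q$-uniform partition, quoted from \cite[Proposition~2.5]{LM2}, which gives $\Pbb(\ell(\alpha)>M)\le C_1(p,q)M^{-p}$. The union bound step is identical. Your route is more self-contained (no external citation needed) and in fact produces a stronger, exponential-in-$N$ bound for the first three assertions; the paper's route would give sharper constants if one cared about the length rather than the size, but that refinement is irrelevant for the application. Your alternative suggestion via the Hardy--Ramanujan bound on $p(n)$ is essentially what underlies the cited proposition, so the two arguments are closer than they first appear.
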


\begin{proof}
All cases follow easily from \cite[Proposition 2.5]{LM2}. For instance,
\[
\Pbb((\alpha,\beta,n)\notin\Lambda_N^{(1)})\leq \Pbb(\ell(\alpha)>A_N)+\Pbb(\ell(\beta)>B_N)\leq 2 C_1(p,q)N^{-p},
\]
by checking that $A_N$ and $B_N$ are greater or equal to $\lfloor (N-1)/2\rfloor$.
\end{proof}

\subsection{Stable realization of the Casimir}

Let us explain how the Casimir number $c_2(\lambda)$ becomes a stable observable in the algebra of shifted symmetric functions. Recall first the operator which underlies the terminology. Let $G_N$ be one of the compact classical groups considered above, and let $\mathfrak g_N$ be its real Lie algebra, endowed with the invariant inner product fixed in~\eqref{eq:inner_product}. If $(X_i)_i$ is an orthonormal basis of $\mathfrak g_N$, then the Laplace--Beltrami operator is the left-invariant differential operator associated with
\[
        \sum_i X_i^2 .
\]
Equivalently, after complexifying the Lie algebra, this element belongs to the center of the universal enveloping algebra $Z\bigl(\mathcal U(\mathfrak g_{N,\C})\bigr),$ where $\mathfrak g_{N,\C}=\mathfrak g_N\otimes_{\mathbb R}\C$ is the complexification of $\mathfrak{g}_N$. Here $\mathcal U(\mathfrak g_{N,\C})$ denotes the associative algebra generated by $\mathfrak g_{N,\C}$, with relations
\[
        XY-YX=[X,Y],\qquad X,Y\in\mathfrak g_{N,\C}.
\]
Through the standard identification between bi-invariant differential operators on $G_N$ and $Z(\mathcal U(\mathfrak g_{N,\C}))$ \cite[Chap.~10]{FegBook}, the negative Laplacian $-\Delta_{G_N}$ is represented by the positive quadratic Casimir.

If $(\rho_\lambda,V_\lambda)$ is an irreducible representation of $G_N$, Schur's lemma implies that every element of $Z(\mathcal U(\mathfrak g_{N,\C}))$ acts on $V_\lambda$ by a scalar. In particular, the quadratic Casimir acts by the scalar $c_2(\lambda)$.

The Harish--Chandra isomorphism says that the eigenvalues of central elements are polynomial functions of shifted highest weights. In type $A$, after the usual identification of highest weights with partitions, this leads to the algebra of shifted symmetric functions. We use the normalization of Okounkov--Olshanski \cite{OO97,OO98} and Okounkov--Pandharipande \cite[Section~0.4.3]{OP2}. For a partition $\lambda$, define, for $k\geq 1$,

\begin{equation}\label{eq:pk}
\mathbf{p}_k:\lambda\in\Pfr\mapsto\sum_{i=1}^\infty\left((\lambda_i+\frac12-i)^k-(\frac12-i)^k\right)+(1-2^{-k})\zeta(-k).
\end{equation}
The functions $\mathbf{p}_k$ are zeta-regularizations\footnote{An enlightening explanation of the necessity of this regularization can be found in \cite[\S 0.4.3]{OP2}.} of the power sums
\[
p_k(\lambda)=\sum_{i=1}^{\ell(\lambda)}\left(\lambda_i+\frac12-i\right)^k,
\]
and they can be described by means of exponential generating functions
\begin{equation}\label{eq:exp_pk}
\mathbf{e}(\lambda,z)=\sum_{i=1}^\infty e^{z(\lambda_i+\frac12-i)} =\sum_{k=0}^\infty\frac{1}{k!}\mathbf{p}_k(\lambda)z^k.
\end{equation}
For any $\lambda\in\Pfr$, we have
\begin{equation}\label{eq:p_1p_2}
\mathbf{p}_1(\lambda)=\vert\lambda\vert,\quad \mathbf{p}_2(\lambda)=\sum_{i=1}^{\ell(\lambda)}\lambda_i(\lambda_i+1-2i)=2K(\lambda).
\end{equation}

Let us first recall the constructions:
\[
\lambda_N^{A'}(\alpha,\beta,n)=[\alpha,\beta]_N+(n,\ldots,n),\quad \forall (\alpha,\beta,n)\in\Pfr\times\Pfr\times\Z: \ell(\alpha)\leq A_N,\ell(\beta)\leq B_N.
\]
\[
\lambda_N^A(\alpha,\beta)=[\alpha,\beta]_N+(\beta_1,\ldots,\beta_1),\quad \forall (\alpha,\beta)\in\Pfr\times\Pfr,\ \ell(\alpha)\leq A_N,\ \ell(\beta)\leq B_N.
\]
\[
\lambda_N^B(\mu)=\lambda_N^C(\mu)=(\mu_1,\ldots,\mu_{\ell(\mu)},0,\ldots,0),\quad \forall \mu\in\Pfr:\ell(\mu)\leq N.
\]
\[
\lambda_N^D(\mu,m,n)=(\mu_1+m,\ldots,\mu_{N-2}+m,m,n),\quad \forall (\mu,m,n)\in\Pfr\times\N\times\Z, \ell(\mu)\leq N-2,\vert n\vert\leq m.
\]

We then define the commutative algebra $\mathcal{A}^\tau$ in terms of the algebra $\Lambda^*=\C[\mathbf{p}_1,\mathbf{p}_2,\ldots]$ of shifted symmetric functions \cite{OO97,OO98}, generated by the shifted symmetric power sums defined in~\eqref{eq:pk}:
\begin{itemize}
\item In type $A'$, one has $\mathcal A^{A'}=\Lambda^*\otimes \Lambda^*\otimes \C[n]$;
\item In type $A$, one has $\mathcal A^{A}=\Lambda^*\otimes \Lambda^*$;
\item In types $B,C$, and in type $D$ restricted to the sector $m=n=0$, one has $\mathcal A^\tau=\Lambda^*$.\\
\end{itemize}

The quadratic Casimir operator on $G_N$ then defines an element $\kappa_N^\tau\in \mathcal A^\tau,$ in the sense that $\kappa_N^\tau(\lambda_N^\tau) = c_2(\lambda_N^\tau).$ Then Lemma \ref{lem:casimirs} can be rewritten as follows~:

\begin{proposition}\label{prop:shifted-casimir}
For any type $\tau\in\{A',A,B,C,D\}$, the operator $\kappa_N^\tau$ admits an expansion
\begin{equation}
\label{eq:kappa-expansion}
\kappa_N^\tau
=
E^\tau+\frac{1}{N}L^\tau+O(N^{-2}),
\qquad N\to\infty,
\end{equation}
where $E^\tau,L^\tau\in \mathcal A^\tau$ are explicit polynomials in the shifted power sums $\mathbf{p}_1,\mathbf{p}_2$.

More precisely

\begin{itemize}
\item In type $A'$,  the expansion is exact, in the sense that $\kappa_N^{A'}=E^{A'}+\frac{1}{N}L^{A'},$  with
\[
E^{A'}(\alpha,\beta,n)=\mathbf{p}_1(\alpha)+\mathbf{p}_1(\beta)+n^2, \quad L^{A'}(\alpha,\beta,n)
=
\mathbf{p}_2(\alpha)+\mathbf{p}_2(\beta)
+2n\bigl(\mathbf{p}_1(\alpha)-\mathbf{p}_1(\beta)\bigr),
\]
\item In type $A$, the expansion is again exact, with
\[
E^{A}(\alpha,\beta)=\mathbf{p}_1(\alpha)+\mathbf{p}_1(\beta),
\qquad
L^{A}(\alpha,\beta)=\mathbf{p}_2(\alpha)+\mathbf{p}_2(\beta),
\]
and there exists an explicit polynomial $Q^A$ in $\mathbf{p}_1$ such that
\[
\kappa_N^{A}
=
E^{A}+\frac{1}{N}L^{A}+\frac{1}{N^2}Q^{A}.
\]

\item In types $B$ and $C$,
\[
E^\tau(\mu)=\mathbf{p}_1(\mu),\qquad
L^{B}(\mu)=\mathbf{p}_2(\mu)-\mathbf{p}_1(\mu),\qquad
L^{C}(\mu)=\mathbf{p}_2(\mu)+\mathbf{p}_1(\mu).
\]
\item In type $D$, the expansion \eqref{eq:kappa-expansion} holds on the sector $m=n=0$; outside this sector, the Casimir contains a branch of order $mN$ which does not admit a stable limit, although it does not contribute in the end, as we have seen in the proof of the expansion.\\
\end{itemize}
\end{proposition}

The leading term $E^\tau$ can be interpreted as an energy observable, while the first correction $L^\tau$ is governed by $p_2=2K$, up to the type-dependent linear terms displayed above. As we will see later, this is the algebraic bridge between the spectral problem studied here and the Hurwitz-theoretic expansions developed below. One can also reformulate Corollary  \ref{cor:heat_trace_expectation} under the following form~:

\begin{proposition}\label{prop:heat-shifted}
Under the assumptions of Theorem~\ref{thm:asympt_expansion_detailed} we have
\[
\Tr\left(e^{\frac t2\Delta_{G_N}}\right)=\mathbb E\left[\exp\left(-\frac t2 E^\tau-\frac{t}{2N}L^\tau+O(N^{-2})\right)\mathbf 1_{\Lambda_N}\right].
\]
In particular, the large-$N$ asymptotic expansion of the central heat trace is generated by the
stable observable $L^\tau$, while the leading term is controlled by the energy observable $E^\tau$.
\end{proposition}

\section{Asymptotic expansion of the heat trace}\label{sec:asympt_exp}

We now turn from the algebraic structure of the Casimir to the analytic expansion of the central heat trace. The starting point is the expectation formula of Corollary~\ref{cor:heat_trace_expectation}. The proof consists in expanding the exponential perturbation of the leading energy term, while controlling the remainders uniformly by the lower bounds and tail estimates of Section~\ref{sec:prelim}. Let us start with a more detailed version of Theorem~\ref{thm:asympt_expansion}, which provides a first expression of the coefficients $a_k^\tau (t).$ Note that in the following statement, and other asymptotic statements in the sequel, when we consider $(G_N)_{N \geq 1}$ be a sequence of compact classical groups of type $\tau$ such that $G_N\subset\GL_N(\C),$ this makes sense for type $B$ only if $N$ is odd, so we can write $N=2k+1$ and $G_N=\SO(2k+1)$, and for types $C$ and $D$ only if $N$ is even, so we can write $N=2k$ and $G_N=\SO(2k)$ or $G_N=\Sp(k)=\Sp(N/2)$. The important thing to keep in mind is that the coefficients of the expansion depend on the type of $G_N$ but not on $N$.

\begin{theorem}
 \label{thm:asympt_expansion_detailed}
 Let $(G_N)_{N \geq 1}$ be a sequence of compact classical groups of type $\tau \in \{A, A', B, C, D\},$ such that $G_N\subset\GL_N(\C)$. Then $\Tr(e^{\frac{t}{2}\Delta_{G_N}})$ admits the following asymptotic expansion as $N$ tends to infinity~: let $(\alpha, \beta,n)$ be independent random variables such that $\alpha, \beta \sim \Ufr(q_t)$ and $n \sim  \Gfr(q_t),$ we have
 \begin{itemize}
  \item if $G_N = \U(N)$ or $\SU(N),$ ($\tau \in \{A', A\}$ respectively)
  \[ \Tr(e^{\frac{t}{2}\Delta_{G_N}}) = \sum_{k=0}^p \frac{a_{2k}^\tau(t)}{N^{2k}} + O_t(N^{-2p-2}), \]
  with, in the case of $\U(N),$
  \[ a_{2k}^{A'}(t) = \frac{\theta(q_t) t^{2k}}{\phi(q_t)^2 (2k)!} \mathbb E\left[\left(K(\alpha) + K(\beta) +n (|\alpha|-|\beta|)\right)^{2k}\right], \]
  and, in the case of $\SU(N),$
   \[ a_{2k}^A(t) = \frac{1}{\phi(q_t)^2} \sum_{k_1+k_2+k_3 = k} t^{2k-k_3}\frac{\mathbb E\left[K(\alpha)^{2k_1} K(\beta)^{2k_2}(|\alpha|-|\beta|)^{2k_3}\right]}{2^{k_3}(2k_1)!(2k_2)!k_3!}. \]
 \item if $G_N = \SO(N)$ ($\tau \in\{B,D\}$) or $G_N=\Sp(N/2)$ ($\tau=C$),
\[ \Tr(e^{\frac{t}{2}\Delta_{G_N}}) = \sum_{k=0}^p \frac{a_{k}^\tau(t)}{N^{k}} + O_t(N^{-p-1}), \]
with, in the case of $\SO(N)$,
\[  a_{k}^{B}(t) = a_k^D(t)= \frac{(-t)^{k}}{\phi(q_t) k!} \mathbb E\left[\left(K(\alpha) - \frac 1 2 |\alpha|\right)^{k}\right],\]
and, in the case of $\Sp(N/2)$,
\[  a_{k}^{C}(t) = \frac{(-t)^{k}}{\phi(q_t) k!} \mathbb E\left[\left(K(\alpha) + \frac 1 2 |\alpha|\right)^{k}\right].\]
\end{itemize}
\end{theorem}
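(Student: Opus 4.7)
The starting point is Corollary~\ref{cor:heat_trace_expectation}, which rewrites $\Tr(e^{\frac{t}{2}\Delta_{G_N}})$ as a type-dependent prefactor times $\E\bigl[e^{-\frac{t}{N}F^\tau}\mathbf 1_{\Lambda_N^{(i)}}\bigr]$, augmented by a second exponential in types $A$ and $D$. The plan is to Taylor-expand these exponentials in powers of $1/N$ and identify the resulting coefficients with the announced formulas. Concretely, I would use the integral form of Taylor's theorem,
\[
e^{-x}=\sum_{k=0}^{m-1}\frac{(-x)^k}{k!}+r_m(x),\qquad |r_m(x)|\le\frac{|x|^m}{m!}e^{\max(0,-x)},
\]
applied with $x=\frac{t}{N}F^\tau$ and $m=2p+2$ in types $A'$, $A$ (and $m=p+1$ otherwise); in types $A$ and $D$ the second exponential is expanded in the same way, and the two truncations are multiplied out.

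After exchanging the finite sum with the expectation, each polynomial term $\frac{(-t)^k}{N^k k!}\E\bigl[(F^\tau)^k\mathbf 1_{\Lambda_N^{(i)}}\bigr]$ is freed of its indicator at the cost of an $O_t(N^{-2p-2})$ (resp.\ $O_t(N^{-p-1})$) error, by combining Lemma~\ref{lem:dev_ineq_bis} with the fact that $\E[|F^\tau|^k]$ is finite for all $k$ (a consequence of Proposition~\ref{prop:bound_content} and the finite exponential moments of the $q_t$-uniform and integer-Gaussian priors). In types $A'$ and $A$, one then expands $(F^{A'})^k$ or $(F_1^A)^k$ multinomially and invokes Proposition~\ref{prop:symmetry_partitions} together with $\E[n^k]=0$ for $k$ odd (the integer Gaussian $\Gfr(q_t)$ being symmetric) to kill all odd-$k$ contributions; the surviving sum reorganises via a direct multinomial identity into the claimed formulas for $a_{2k}^{A'}$ and $a_{2k}^A$. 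For types $B$ and $C$, the functional $F^\tau$ depends on a single partition and the formula reads off directly. For type $D$, a preliminary truncation of the sum $\sum_{m,n}$ in Corollary~\ref{cor:heat_trace_expectation} is required: the linear-in-$N$ term $\frac{N-1}{2}m$ inside $F_{2,N}^D$ produces a factor $e^{-\frac{t(N-1)}{4}m}$ that makes all contributions with $m\ge1$ exponentially small in $N$, leaving only $(m,n)=(0,0)$; since $F_1^D=F^B$ in that case, this explains the identity $a_k^D=a_k^B$.

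The main technical obstacle is to bound the Taylor remainder $\E\bigl[r_m\bigl(\tfrac{t}{N}F^\tau\bigr)\mathbf 1_{\Lambda_N^{(i)}}\bigr]$ uniformly in $N$, because the factor $e^{\max(0,-tF^\tau/N)}$ threatens integrability whenever $F^\tau$ is very negative. This is exactly what Lemma~\ref{lem:lb-casimirs} supplies: in the $\U(N)$ case, for instance, the lower bound there gives $-\tfrac{t}{N}F^{A'}\le \tfrac{t}{4}(|\alpha|+|\beta|)+(\text{non-positive terms involving }n)$, so after multiplication by the prior density proportional to $q_t^{|\alpha|+|\beta|}e^{-\frac{t}{2}n^2}$ the integrand is dominated by the integrable envelope $e^{-\frac{t}{4}(|\alpha|+|\beta|)}e^{-\frac{t}{2}n^2}$. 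Combined with the polynomial moment bounds on $|F^\tau|$ coming from Proposition~\ref{prop:bound_content}, this yields $\E\bigl[r_m(\tfrac{t}{N}F^\tau)\mathbf 1_{\Lambda_N^{(i)}}\bigr]=O_t(N^{-m})$, hence the advertised errors. Analogous envelopes built from the corresponding estimates in Lemma~\ref{lem:lb-casimirs} handle the remaining types, with the same scheme.
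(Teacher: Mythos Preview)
Your proposal is correct and follows essentially the same route as the paper: start from Corollary~\ref{cor:heat_trace_expectation}, Taylor-expand the exponential, control the remainder via Lemma~\ref{lem:lb-casimirs}, remove the indicator via Lemma~\ref{lem:dev_ineq_bis}, and use the parity argument (Proposition~\ref{prop:symmetry_partitions} and symmetry of $\Gfr(q_t)$) to retain only even powers in types $A',A$; the preliminary reduction for type $D$ (killing $m\ge1$) is exactly the paper's ``Step~1~bis''.

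One minor slip: for $\U(N)$ you write that $-\tfrac{t}{N}F^{A'}\le \tfrac{t}{4}(|\alpha|+|\beta|)+(\text{non-positive terms in }n)$, but the residual term $\tfrac{t}{2}\bigl[n^2-(n+\tfrac{|\alpha|-|\beta|}{N})^2\bigr]$ is not always non-positive, so your envelope $e^{-\frac{t}{4}(|\alpha|+|\beta|)}e^{-\frac{t}{2}n^2}$ is not literally valid. The fix is immediate: multiply the prior density by $e^{-\frac{t}{N}F^{A'}}$ to recover $e^{-\frac{t}{2}c_2^{A'}}$ and apply Lemma~\ref{lem:lb-casimirs} directly, obtaining instead the envelope $e^{-\frac{t}{4}(|\alpha|+|\beta|)}e^{-\frac{t}{2}(n+\delta)^2}$ with $\delta=(|\alpha|-|\beta|)/N$, whose sum over $n$ is uniformly bounded. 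The paper handles this point slightly differently, separating the polynomial moment from the exponential via H\"older's inequality rather than a single envelope, but both arguments are equivalent in spirit.
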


The proof of Theorem \ref{thm:asympt_expansion_detailed} will be adapted from the case of $\U(N)$ given in \cite{LM2}, where the case of  $\U(N)$ was treated in detail, resulting with the expression of the coefficients $a_{2k}^{A'}(t)$ as above.

\begin{proof}[Proof of Theorem~\ref{thm:asympt_expansion_detailed} for $\SU(N)$]\mbox{}\\

\emph{Step 1.} From Corollary~\ref{cor:heat_trace_expectation}, one can write  $\Tr(e^{\frac{t}{2}\Delta_{\SU(N)}})$ as an expectation of an exponential function of $\frac t N$ under the $q_t$-uniform measure:
\[
\Tr(e^{\frac{t}{2}\Delta_{\SU(N)}})=\frac{1}{\phi(q_t)^2}\E[e^{- \frac t N F_{N}(\alpha,\beta)}\mathbf{1}_{\Lambda_N^{(2)}}(\alpha,\beta)], \text{ with } F_{N}(\alpha,\beta)=F_1^A(\alpha,\beta)-\frac{1}{2N}F_2^A(\alpha,\beta).
\]

\emph{Step 2.} We perform a Taylor expansion of the exponential term for fixed $\alpha$ and $\beta$  and show that the remainder is negligible.
The key point will be to use the lower bound on $F_N$ obtained  Lemma \ref{lem:lb-casimirs}. In this case,
\[
e^{- \frac t N F_{N}(\alpha,\beta)}=\sum_{k=0}^{2p+1} \frac{(-t)^k F_{N}(\alpha,\beta)^k}{k!N^k}+ \frac{F_N(\alpha, \beta)^{2p+2}}{N^{2p+2}(2p+1)!} \int_0^t (t-s)^{2p+1} e^{-\frac{s}{N} F_{N}(\alpha,\beta)}ds.
\]
We have to show that $\forall s \leq t,$
\[
I_{N,p}(s):=\mathbb E\left(F_{N}(\alpha,\beta)^{2p+2} e^{-\frac{s}{N} F_{N}(\alpha,\beta)}\mathbf{1}_{\Lambda_N^{(2)}}(\alpha,\beta)\right) =O_t(1),
\]
in the sense that it is bounded by a constant depending on $t$ and $p$ but not depending on $N.$ We have the following upper bounds:
\begin{eqnarray*}
 I_{N,p}(s) & \leq & \mathbb E\left(F_{N}(\alpha,\beta)^{2p+2}\left( 1 +  e^{-\frac{t}{N} F_{N}(\alpha,\beta)}\right)\mathbf{1}_{\Lambda_N^{(2)}}(\alpha,\beta)\right) \\
 & \leq& \mathbb E\left(F_{N}(\alpha,\beta)^{2p+2}\right) +  \mathbb E\left(F_{N}(\alpha,\beta)^{6p+6}\right)^{1/3} \mathbb E \left(  e^{-\frac{3t}{2N} F_{N}(\alpha,\beta)}  \mathbf{1}_{\Lambda_N^{(2)}}(\alpha,\beta) \right)^{2/3}.
\end{eqnarray*}

From Proposition \ref{prop:bound_content}, we have that
\[
 |F_{N}(\alpha,\beta)|  \leq |K(\alpha)| + |K(\beta)| + |\alpha|^2 + |\beta|^2   \leq 2 |\alpha|^2 + 2 |\beta|^2
\]
and it is easy to check that, for any $t >0$ and any $k\in\mathbb N,$ $\sum_{\alpha \in \Pfr} |\alpha|^{k} q_t^{|\alpha|} <\infty$. This leads to a control, depending only on $p,$ on the two quantities $\mathbb E\left(F_{N}(\alpha,\beta)^{2p+2}\right)$ and $\mathbb E\left(F_{N}(\alpha,\beta)^{6p+6}\right)$. Using the lower bound from Lemma \ref{lem:lb-casimirs} which is valid for any $(\alpha, \beta) \in \Lambda_N^{(2)},$ we get
\[   \mathbb E \left(  e^{-\frac{3t}{2N} F_{N}(\alpha,\beta)}  \mathbf{1}_{\Lambda_N^{(2)}}(\alpha,\beta) \right) \leq \mathbb E \left(  e^{\frac{3t}{8}(|\alpha|+|\beta|)}   \right) = \frac{1}{\phi(q_t)^2}\left(\sum_{\alpha \in \Pfr} (q_{t/4})^{|\alpha|}\right)^2.\]
The latter is finite and independent of $N.$

\emph{Step 3.} We write $\mathbf{1}_{\Lambda_N^{(2)}} = 1 - \mathbf{1}_{(\Lambda_N^{(2)})^c}$ in each term of the sum and then use deviation inequalities to discard the terms involving $\mathbf{1}_{(\Lambda_N^{(2)})^c}.$

For each $k \leq2p+1,$  we have
\[
\left\vert\E[F_{N}(\alpha,\beta)^k(1-\mathbf{1}_{\Lambda_N^{(2)}}(\alpha,\beta))]\right\vert\leq \E[F_{N}(\alpha, \beta)^{2k}]^{1/2}   \Pbb((\alpha,\beta)\notin\Lambda_N^{(2)}).
\]
By Lemma~\ref{lem:dev_ineq_bis}, we therefore have that
\[
\E[F_{N}(\alpha,\beta)^k\mathbf{1}_{\Lambda_N^{(2)}}(\alpha,\beta)]=\E[F_{N}(\alpha,\beta)^k]+O_t(N^{-2p-2}).
\]
Altogether,
\[
\Tr(e^{\frac{t}{2}\Delta_{\SU(N)}})=\sum_{k=0}^{2p+1}\frac{ (-t)^k  }{N^k k!\phi(q_t)^2}  \E[F_{N}(\alpha,\beta)^k]+O_t(N^{-2p-2}).
\]
Let us expand the powers of $F_{N}^k$ in order to extract the right powers of $N$ and check that, for any $k\geq 0,$ $\E[F_{N}(\alpha,\beta)^{2k+1}] =0.$ We have
\begin{align*}
\sum_{k=0}^{2p+1}\frac{(-t)^k\E[F_{N}(\alpha,\beta)^k]}{k!N^k}= &  \sum_{k_1+ 2k_2\leq 4p+2} \frac{1}{k_1!k_2!}\frac{(-1)^{k_1}t^{k_1+k_2}}{2^{k_2}N^{k_1+2k_2}}\E[(K(\alpha)+K(\beta))^{k_1}(\vert\alpha\vert-\vert\beta\vert)^{2k_2}]\\
= & \sum_{k_1+2k_2\leq 2p+1}\frac{1}{k_1!k_2!}\frac{(-1)^{k_1}t^{k_1+k_2}}{2^{k_2}N^{k_1+2k_2}}\\
&\times \E[(K(\alpha)+K(\beta))^{k_1}(\vert\alpha\vert-\vert\beta\vert)^{2k_2}] + O_t(N^{-2p-2})\\
= & \sum_{\ell_1+\ell_2+2k_2\leq 2p+1}\frac{(-1)^{\ell_1+\ell_2}t^{\ell_1+\ell_2+k_2}}{\ell_1!\ell_2!k_2!2^{k_2}N^{\ell_1+\ell_2+2k_2}}\\
&\times \E[K(\alpha)^{\ell_1}K(\beta)^{\ell_2}(\vert\alpha\vert-\vert\beta\vert)^{2k_2}]+O_t(N^{-2p-2}).
\end{align*}
In the second equality, we used the fact that the terms such that $2p+2 \leq k_1+2k_2 \leq4p+2$ are absorbed in the $O_t(N^{-2p-2})$, and in the last equality we used the binomial formula. Finally, by Proposition~\ref{prop:symmetry_partitions}, the terms vanish unless both $\ell_1$ and $\ell_2$ are even, and we obtain the statement of the theorem.
\end{proof}

The proofs for $G_N=\SO(2N+1)$ or $G_N=\Sp(N)$ go along the same steps, although the computations are a bit different, and we omit them for the sake of conciseness. However, we will detail the case of $\SO(2N)$, which is a bit more involved and requires an additional step.

\begin{proof}[Proof of Theorem~\ref{thm:asympt_expansion_detailed} for $\SO(2N)$]
\emph{Step 1.} By Corollary~\ref{cor:heat_trace_expectation},
\[
\Tr(e^{\frac{t}{2}\Delta_{\SO(2N)}})=   \frac{1}{\phi(q_t)}\E\left[e^{-\frac{t}{2N}F_1^D(\mu)} L_N(\mu)\mathbf{1}_{\{\ell(\mu)\leq N-2\}}\right],
\]
with
\begin{equation}\label{def:LN}
F_1^D(\mu) = K(\mu)-\frac{1}{2}\vert\mu\vert,\ \text{and}\ L_N(\mu) = \sum_{m=0}^\infty\sum_{n=-m}^m q_t^{\frac{N-1}{2} m + \frac{m^2}{2} + \frac{m|\mu|}{N} -\frac{m^2}{2N}+ \frac{n^2}{2N}}.
\end{equation}

\emph{Step 1 bis.} As announced above, we need a preliminary step, which is to prove that $L_N$ does not contribute to the asymptotic expansion. The contribution of the term $m=0$ is 1, so that $L_N(\mu)\geq 1$ for all $\mu,$ and
\[
0 \leq L_N(\mu) -1 \leq \sum_{m=1}^\infty \sum_{n=-m}^m q_t^{\frac{N-1}{2}m} \leq\sum_{m=1}^\infty(2m+1)q_t^{\frac{N-1}{2}m}.
\]
In particular, as $\frac{N-1}{2}m\geq\frac{N-1}{4}+\frac{N-1}{4}m\geq\frac{N-1}{4}+\frac14m$, for all $m\geq 1$ and $N\geq 2$, we have
\[
0\leq L_N(\mu)-1\leq q_t^{\frac{N-1}{4}}\sum_{m=0}^\infty(2m+1)q_t^{\frac{1}{4}m}=q_t^{\frac{N-1}{4}}\frac{1+q_t^{1/4}}{(1-q_t^{1/4})^2}=O_t(N^{-p-1}),
\]
uniformly in $\mu$. We finally deduce that
\[
\Tr(e^{\frac{t}{2}\Delta_{\SO(2N)}})=  \frac{1}{\phi(q_t)}\E\left[ e^{-\frac{t}{2N}(K(\mu)-\frac{1}{2}\vert\mu\vert)} \mathbf{1}_{\{\ell(\mu)\leq N-2\}}\right] + O_t(N^{-p-1}).
\]
From there, \emph{Step 2.} and \emph{Step 3.} can be proved by identical arguments as the case of $\SU(N)$.
\end{proof}

\begin{remark}\label{rmk:extension_Levy}
If one wishes to replace the heat kernel by the distribution of a more general L\'evy process, here is what can be done: consider a symmetric infinitely divisible probability measure $\rho$ on $\R$, with L\'evy--Khinchine formula
\[
\int_\R e^{i\xi x}d\rho(x)=e^{-\eta(\xi)},\quad \eta(\xi)=\frac12\sigma^2 \xi^2+\int_{\R^*}(1-\cos(\xi y))d\nu(y),
\]
where $\nu$ is the associated L\'evy measure. Recall that for any measure $\mu$ on a compact group $G$, its Fourier transform $\widehat{\mu}:\widehat{G}\to\bigoplus_{\lambda\in\widehat{G}}\mathrm{End}(V_\lambda)$ is defined by taking the integral of any representation $(\pi_\lambda,V_\lambda)$ associated to $\lambda\in\widehat{G}$:
\[
\widehat{\mu}(\lambda)=\int_G\pi_\lambda(g)d\mu(g)\in\mathrm{End}(V_\lambda).
\]
According to \cite{App}, a central probability measure $\mu$ on a compact group $G$ is induced by $\rho$ if
\[
\widehat{\mu}(\lambda)=e^{-\eta(\sqrt{c_2(\lambda)})}I_\lambda,\quad \forall \lambda\in\widehat{G}.
\]
The case of $p_t$ corresponds to $\eta_t(\xi)=\frac{t}{2}\xi^2$. The central trace becomes
\[
\Tr(\mu)=\sum_{\lambda\in\widehat{G}}e^{-\eta(\sqrt{c_2(\lambda)})}.
\]
If $\eta$ is regular enough, then combining its Taylor expansion with the proof of Theorem~\ref{thm:asympt_expansion} yields the asymptotic expansion of $\Tr(\mu)$.
\end{remark}

As stated in the introduction, Theorem~\ref{thm:asympt_expansion} enables us to deduce the following, simply by taking $p=1$ and $N\to\infty$.

\begin{corollary}[\cite{DL}, Theorem 3.1, (1)]\label{cor:lim_pf}
For any $t>0$ and any sequence $(G_N)$ of compact classical groups of the same type, the trace of the heat kernel on $G_N$ converges to a finite value given by Table~\ref{tab:limits} below.
\end{corollary}

\begin{table}[h!]
    \centering
    \begin{tabular}{|c|c|c|c|}
    \hline $\tau$ & $A'$ & $A$ & $B, C, D$\\
     \hline $G_N$ & $\U(N)$ & $\SU(N)$ & $\SO(N), \Sp(N)$ \\
\hline $\lim_N \Tr(e^{\frac{t}{2}\Delta_{G_N}})$ & $\frac{\theta(q_t)}{\phi(q_t)^2}$ & $\frac{1}{\phi(q_t)^2}$ & $\frac{1}{\phi(q_t)}$ \\
\hline
    \end{tabular}
    \caption{Limits of central heat traces of compact classical groups.}
    \label{tab:limits}
\end{table}

\section{Random surface representation}

The coefficients obtained in Theorem~\ref{thm:asympt_expansion_detailed} have a natural geometric interpretation. The reason is that powers of the total content $K(\lambda)$ enumerate ramified coverings of the torus via Hurwitz numbers. In this section we turn the moment expansion of Section~\ref{sec:asympt_exp} into an integral representation over spaces of ramified coverings. This yields the random surface formulation announced in Theorem~\ref{thm:main}.

\subsection{Integration over Hurwitz spaces}\label{sec:Hurwitz_integration}

Let $\Sigma_g$ be a compact connected Riemann surface of genus $g\geq 0$. A \emph{ramified covering} of $\Sigma_g$ is the data of a Riemann surface $X$ together with a non-constant holomorphic map $\pi:X\to\Sigma_g$. By standard results of complex analysis, $\pi^{-1}(x)\subset X$ is discrete for any $x\in\Sigma_g$. The surface $\Sigma_g$ is called the \emph{base space}, or the \emph{target space}. Locally around a point $p\in X$, $\pi$ is conjugated (using holomorphic coordinate changes) to a monomial $z\mapsto z^k$ on $\C$, with $k\geq 1$, and $e_p=k$ is called the \emph{ramification index} of $X$ at $p$. The point $p\in X$ is said to be a \emph{ramification point} if $e_p>1$, and we denote by $R=\{\pi(p),p\in X\ \text{s.t.}\ e_p>1\}\subset \Sigma_g$ the \emph{ramification locus} of $X\to\Sigma_g$, which is also a discrete set. There is an integer $n\geq 1$ such that for any $x\in\Sigma_g\setminus R$, its preimage $\pi^{-1}(x)$ is a finite set of cardinal $n$, and the integer $n$ is denoted by $\deg(X)$ and called the \emph{degree} of the covering. Note that for any $x\in\Sigma_g$ we have the identity
\[
n=\deg(X)=\sum_{p\in\pi^{-1}(x)}e_p,
\]
so that the preimage $\pi^{-1}(x)$ has less than $n$ points for $x\in R$. The Riemann--Hurwitz formula states that
\begin{equation}\label{eq:Riemann-Hurwitz}
\chi(X)=\deg(X)\chi(\Sigma_g)-\sum_{p\in X}(e_p-1),
\end{equation}
where $\chi(X)=2-2g(X)$ is the Euler characteristic of the surface $X$.

Thanks to the monodromy map, for any fixed $x\in\Sigma_g\setminus R$, ramified coverings $\pi:X\to\Sigma_g$ of degree $n$ with ramification locus $R$ are in bijection with the set $\Hom(\pi_1(\Sigma_g\setminus R,x),S_n)$ of homomorphisms from the fundamental group of $\Sigma_g\setminus R$ based in $x$ to the permutation group $S_n.$ Equivalence classes of ramified coverings of $\Sigma_g$ are therefore in bijection with
\[
\Hom(\pi_1(\Sigma_g\setminus R,x),S_n)/S_n.
\]
We denote by $[X]$ the equivalence class of a ramified covering $X.$ Note that the degree and Euler characteristic of a ramified covering only depends on its equivalence class : we can denote it by $\deg(X) = \deg([X])$ and $\chi(X) = \chi([X])$ respectively.
If $R$ is empty, then it is the space of representations of surface groups, and it corresponds to the set of equivalence classes of \emph{unramified coverings}. Random models in this setting have been extensively studied for compact surfaces of genus $g\geq 2$ by Magee, Naud and Puder \cite{MagPud,MN,MNP}. 

In this work, we consider the case of a complex torus $\Tbb$, i.e. $g=1$. Given conjugacy classes $C_1,\ldots,C_k$ of $S_n$, the subset of ramified coverings $X\to\Tbb$ with ramification locus $R=\{x_1,\ldots,x_k\}$ and with monodromy in the class $C_i$ around $x_i$ is the set
\[
\{(\sigma_1,\sigma_2,\tau_1,\ldots,\tau_k)\in S_n^2\times C_1\times\ldots\times C_k\ \vert \ [\sigma_1,\sigma_2]\tau_1\ldots\tau_k=1\}.
\]
Note that $C_1,\ldots,C_k$ are assumed to be nontrivial conjugacy classes, otherwise there would be no ramification. We say that there is a \emph{generic ramification} over $x\in R$ if the corresponding monodromy is conjugated to a transposition. Let $\mathbb{X}_1(n,k)$ be the set of ramified coverings $X\to\Tbb$ with degree $n$ and with $k$ generic ramifications. Its quotient by the action of $S_n$ is the \emph{Hurwitz space} $\mathcal{H}_1(n,k)$, whose cardinal is the \emph{Hurwitz number} $H_1(n,k)$.

The Riemann--Hurwitz formula~\eqref{eq:Riemann-Hurwitz} gives
\[
\chi(X)=-2k,\quad \forall [X]\in\mathcal{H}_1(n,2k),
\]
therefore the Euler characteristic of $X$ does not depend on its degree, which is specific of the case where the target surface is a torus.

In general, Hurwitz numbers can be expressed in terms of irreducible representations of the symmetric group \cite{LZ}, and it is in particular the case of those we care about:
\begin{equation}\label{eq:Hurwitz_torus}
H_1(n,k)=\sum_{\alpha\vdash n} K(\alpha)^k.
\end{equation}
Using Proposition~\ref{prop:symmetry_partitions} we find that $H_1(n,k)=0$ for all odd $k$. The asymptotic behaviour of the number $H'_1(n,k)$ of \emph{connected}\footnote{For a correspondence between enumeration of connected and disconnected coverings, see \cite[\S 2.2]{EO} for instance.} ramified coverings has been investigated in the groundbreaking paper \cite{EO} by Eskin and Okounkov (recently generalized by Aggarwal \cite{Agg}), who used them to compute asymptotics of Masur--Veech volumes of strata of moduli spaces of abelian differentials, or translation surfaces.

As opposed to the aforementioned works, we do not exactly consider a uniform counting of equivalence classes of ramified coverings, because we also randomize the degree and number of ramification points of the covering: for any real parameter $t>0$, let $\rho_t$ be the measure on the countably infinite space
\[
\mathcal{R}=\bigsqcup_{n\geq 1}\bigsqcup_{k\geq 0}\mathcal{H}_1(n,2k),
\]
defined by
\[
d\rho_t([X])=\sum_{n\geq 1}q_t^n\sum_{k\geq 0}\frac{t^{2k}}{(2k)!}\sum_{[Y]\in\mathcal{H}_1(n,2k)}\delta_{[Y]}([X]).
\]

The measure $\rho_t$ can be interpreted as a randomization of the counting measure on $\mathcal{H}_1(n,2k)$, when $n$ is taken to follow a geometric distribution of parameter $1-q_t$ and $m=2k$ a Poisson distribution of parameter $t$ conditioned to take even values. It is the measure-theoretic equivalent of (a specialization of) the two-parameter generating series $\mathcal{F}_1\in\C[[q,X]]$ of Hurwitz numbers on the torus with varying degrees and generic ramifications \cite{LZ}:
\[
\mathcal{F}_1(q,X)=\sum_{n\geq 1}\sum_{k\geq 0}q^n\frac{X^{2k}}{(2k)!}H_1(n,2k).
\]
In particular, we will be interested in integrals of functions of the degree and Euler characteristic of the coverings: for any function $f:\N\times\Z\to\R$ we have the identity
\begin{equation}\label{eq:conversion_integral_Hurwitz}
\int_{\mathcal{R}}f(\deg(X),\chi(X))d\rho_t([X])=\sum_{n\geq 1}q_t^n\sum_{k\geq 0}\frac{t^{2k}}{(2k)!}f(n,-2k)H_1(n,2k).
\end{equation}

\begin{remark}
The measure $\rho_t$ has infinite mass (see Lemma~\ref{lem:infinite_mass} for more details), so it cannot be normalized into a probability measure. In that sense, it is a slight language abuse to speak about ``random surfaces". However, if we replace the counting measure on $\mathcal{H}_1(n,k)$ by the uniform measure and we apply the proper normalization to the distributions of $\deg(X)$ and $\chi(X)$ we obtain a true probability measure
\[
d\mu_t([X])=\sum_{n\geq 1}\frac{q_t^n}{1-q_t}\sum_{k\geq 0}\frac{t^{2k}}{(2k)!\cosh(t)}\sum_{[Y]\in\mathcal{H}_1(n,2k)}\frac{1}{H_1(n,2k)}\delta_{[Y]}([X]).
\]
This normalization is inessential here, and using it would rather complicate the formulas, therefore we will stick with the initial measure $\rho_t$, while keeping the terminology of random surfaces.
\end{remark}

\begin{remark}
Our definition of the measure on the space of ramified coverings is slightly different from the one we considered in \cite{LM2}. The main difference in the approach of the present paper is that the ramification locus is not specified, only the number of ramification points; this enables a simpler definition, and the measures considered are in fact equivalent, because the measure $\rho_t$ only depends on the number of ramification points, not on their positions. We believe that it avoids superfluous -- yet interesting -- topological and measure-theoretic developments. The interested readers might find such developments in the last chapter of \cite{Lev10}.
\end{remark}

As $\rho_t$ has infinite mass, any integral representation with respect to $\rho_t$ needs a priori to have a cutoff, be it on the Euler characteristic (cutoff on $k$) or on the degree (cutoff on $n$). As pointed out above, in the case of the torus, both quantities are independent, therefore we have to make a choice. We will explore both approaches in this paper, because we believe that both have their merits and demerits: imposing a cutoff on the Euler characteristic of on the degree will lead us to Theorem~\ref{thm:mainbis} and~\ref{thm:main_alt} respectively.

\subsubsection{Cutoff on  Euler characteristic}

Let us start with a formula which is close to the asymptotic expansion of Theorem~\ref{thm:asympt_expansion}: we need to introduce a few notations.
\begin{enumerate}
\item For any $([X_1],[X_2])\in \mathcal{R}^2$, any integers $N,p\geq 1$, and any real number $t>0$:
\begin{equation}\label{eq:Phi_Aprime}
\Phi_{t,N}^{p,A'}(X_1,X_2)=\sum_{k\in I_p(X_1,X_2)}\sum_{n\in\Z}q_t^{n^2}\frac{[nt(\deg(X_1)-\deg(X_2))]^{2k}}{N^{2k}(2k)!},
\end{equation}
\begin{equation}\label{eq:Phi_A}
\Phi_{t,N}^{p,A}(X_1,X_2)=\sum_{k\in I_p(X_1,X_2)}\frac{t^k(\deg(X_1)-\deg(X_2))^{2k}}{2^{k}N^{2k}k!}
\end{equation}
where $I_p(X_1,X_2)=\{0,1,\ldots, p+\frac12\chi(X_1)+\frac12\chi(X_2)\}$.\\
\item For any $[X]\in \mathcal{R}$, any integer $N\geq 1$ and any real number $t>0$:
\begin{equation}\label{eq:Phi_BD}
\Phi_{t,N}^{B}(X)=\Phi_{t,N}^{D}(X)=e^{\frac{t}{2N}\deg(X)},
\end{equation}
\begin{equation}\label{eq:Phi_C}
\Phi_{t,N}^{C}(X)=e^{-\frac{t}{2N}\deg(X)}.
\end{equation}
\end{enumerate}
We can now state the result:

\begin{theorem}[Cutoff on Euler characteristic]\label{thm:mainbis}
Let $(G_N)_{N \geq 1}$ be a sequence of compact classical groups of type $\tau\in\{A',A,B,C,D\}$, with $G_N\subset\GL_N(\C)$. For any $p\geq 1$ and any $t>0$, we have as $N\to\infty$:
\begin{enumerate}
\item If $\tau\in\{A',A\}$, 
\begin{align}\label{eq:integral_A}
\begin{split}
\Tr(e^{\frac{t}{2}\Delta})= & \int_{\mathcal{R}^2}\mathbf{1}_{\{\chi(X_1)+\chi(X_2)\geq -2p\}} \Phi_{t,N}^{2p,\tau}(X_1,X_2)\\
&\times N^{\chi(X_1)+\chi(X_2)}d\rho_t^{\otimes 2}([X_1],[X_2])+O_t(N^{-2p-2}).
\end{split}
\end{align}
\item If $\tau\in\{B,C,D\}$, 
\begin{equation}\label{eq:integral_BCD}
\Tr(e^{\frac{t}{2}\Delta})=\int_{\mathcal{R}}\mathbf{1}_{\{\chi(X)\geq -p\}}\Phi_{t,N}^{\tau}(X)N^{\chi(X)}d\rho_t([X])+O_t(N^{-p-1}).
\end{equation}
\end{enumerate}
\end{theorem}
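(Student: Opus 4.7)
The plan is to re-interpret the coefficients $a_k^\tau(t)$ produced by Theorem~\ref{thm:asympt_expansion_detailed} as integrals over Hurwitz space, using the bridge $\sum_{\alpha\vdash n}K(\alpha)^k=H_1(n,k)$ of~\eqref{eq:Hurwitz_torus}, and then to reassemble these integrals into a single Hurwitz integral of the required form. Throughout, we will implicitly enlarge $\mathcal{R}$ by adding a trivial ``empty covering'' of degree $0$ and Euler characteristic $0$ with unit weight, so that $\phi(q_t)\int_{\mathcal{R}}\mathbf{1}_{\{\chi=0\}}\,d\rho_t=1$; this absorbs the constant contributions coming from the empty partition cleanly.

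\textbf{Cases $\tau\in\{B,C,D\}$.} Starting from Corollary~\ref{cor:heat_trace_expectation}, we factor the integrand as
\[
e^{-\frac{t}{N}F^\tau(\mu)}=e^{\pm\frac{t|\mu|}{2N}}\cdot e^{\mp\frac{t}{N}K(\mu)},
\]
with signs dictated by $F^\tau(\mu)=K(\mu)\mp\tfrac12|\mu|$. The first factor already coincides with $\Phi_{t,N}^\tau(X)$ under the identification $\deg(X)=|\mu|$, so we Taylor-expand only the second factor up to order $p$. By Proposition~\ref{prop:symmetry_partitions}, only even powers $K(\mu)^{2\ell}$ with $2\ell\leq p$ survive the expectation. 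Each resulting moment $\E[K(\mu)^{2\ell}e^{\pm t|\mu|/(2N)}]=\phi(q_t)\sum_{n\geq 0}q_t^n e^{\pm tn/(2N)}H_1(n,2\ell)$ is recognised through~\eqref{eq:conversion_integral_Hurwitz} as $\phi(q_t)\frac{(2\ell)!}{t^{2\ell}}\int_{\mathcal{R}}e^{\pm t\deg(X)/(2N)}\mathbf{1}_{\{\chi(X)=-2\ell\}}\,d\rho_t$. Summing over $\ell$ from $0$ to $\lfloor p/2\rfloor$ then collapses the whole expression into the integral~\eqref{eq:integral_BCD}, the cutoff $\chi(X)\geq-p$ being equivalent to $2\ell\leq p$ since $\chi$ only takes values in $-2\N$.

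\textbf{Cases $\tau\in\{A',A\}$.} We start from the coefficients $a_{2k}^\tau(t)$ of Theorem~\ref{thm:asympt_expansion_detailed} and expand each $\E[(K(\alpha)+K(\beta)+n(|\alpha|-|\beta|))^{2k}]$ by the multinomial formula, then expand $(|\alpha|-|\beta|)^{k_3}$ by the binomial. Using independence, each term factorises as $\E[K(\alpha)^{k_1}|\alpha|^i]\cdot\E[K(\beta)^{k_2}|\beta|^{k_3-i}]\cdot\E[n^{k_3}]$; Proposition~\ref{prop:symmetry_partitions} together with the $n\to-n$ symmetry of $\Gfr(q_t)$ forces $k_1,k_2,k_3$ to all be even, so we write $k_j=2\ell_j$ with $\ell_1+\ell_2+\ell_3=k$. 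Each single-partition expectation is rewritten as in the previous case, after which re-summation of the binomial in $i$ reconstitutes $(\deg(X_1)-\deg(X_2))^{2\ell_3}$. The Gaussian moment, multiplied by the prefactor $\theta(q_t)$ of $a_{2k}^{A'}(t)$, precisely yields the inner sum $\sum_{n\in\Z}q_t^{n^2}(nt(\deg X_1-\deg X_2))^{2\ell_3}/((2\ell_3)!N^{2\ell_3})$ of~\eqref{eq:Phi_Aprime}; in the $\SU$ case the Gaussian disappears and one obtains~\eqref{eq:Phi_A} instead. The constraint $\ell_1+\ell_2+\ell_3\leq p$ translates into $\chi(X_1)+\chi(X_2)\geq-2p$ together with $\ell_3\in I_{2p}(X_1,X_2)$; the fact that $I_{2p}$ is looser than the natural $I_p$ introduces only extra terms of total order at most $N^{-2p-2}$, safely absorbed in the remainder.

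\textbf{Remainder control and main obstacle.} Taylor remainders in the $K$-factor, removal of the indicators $\mathbf{1}_{\Lambda_N^{(i)}}$, and (in type $D$) the replacement of $L_N(\mu)$ by $1$ are estimated exactly as in the proof of Theorem~\ref{thm:asympt_expansion_detailed}, using Lemma~\ref{lem:lb-casimirs}, finiteness of exponential moments under $\Ufr(q_t)$, and Lemma~\ref{lem:dev_ineq_bis}. For $\tau\in\{B,C,D\}$ one additionally needs to control the unexpanded factor $e^{\pm t\deg/(2N)}$ under $\rho_t$, which is immediate from the inequality $q_t^n e^{tn/(2N)}\leq e^{-tn/4}$ valid for $N\geq 2$, combined with the polynomial bound $H_1(n,2\ell)=O(n^{4\ell}p(n))$ that follows from~\eqref{eq:bound_content_1}. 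The main technical hurdle lies in the unitary case: the three indices $(k_1,k_2,k_3)$ have to be matched against the two Euler characteristics $\chi(X_1),\chi(X_2)$ and the inner cutoff $I_{2p}(X_1,X_2)$, and the multinomial/binomial resummations needed to recover exactly the functional $\Phi_{t,N}^{2p,\tau}$ demand delicate bookkeeping, closely paralleling the $\U(N)$ computation carried out in~\cite{LM2}.
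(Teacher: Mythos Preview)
For the unitary types $A',A$ your argument coincides with the paper's: both start from the coefficients $a_{2k}^\tau(t)$ of Theorem~\ref{thm:asympt_expansion_detailed}, apply a multinomial expansion of $(K(\alpha)+K(\beta)+n(|\alpha|-|\beta|))^{2k}$ (or its $\SU$ analogue), restrict to even exponents by parity, and reorder the sums so that the factors $H_1(n_1,2k_1)H_1(n_2,2k_2)$ become the measure $\rho_t^{\otimes 2}$ while the remaining sum over $k_3$ becomes $\Phi_{t,N}^{p,\tau}$. Your extra binomial expansion of $(|\alpha|-|\beta|)^{k_3}$ is unnecessary (the paper keeps it intact as $(n_1-n_2)^{2k_3}$), but harmless.

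For the orthogonal and symplectic types $B,C,D$ you take a genuinely different and somewhat more direct route. The paper starts from the asymptotic expansion of Theorem~\ref{thm:asympt_expansion_detailed}, splits each $(K(\mu)\mp\tfrac12|\mu|)^k$ by the binomial formula, and then has to \emph{reconstitute} the factor $e^{\pm t\deg(X)/(2N)}$ by re-summing the $|\mu|$-powers and bounding the tail of that exponential series (the estimate~\eqref{eq:bound_integral_SO}). You instead return to Corollary~\ref{cor:heat_trace_expectation}, factor $e^{-tF^\tau(\mu)/N}=\Phi_{t,N}^\tau\cdot e^{-tK(\mu)/N}$ from the outset, and Taylor-expand only the $K$-part, so nothing needs reconstituting. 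Both approaches are valid; yours saves a step, while the paper's keeps the link with the individual coefficients $a_k^\tau(t)$ visible. Two small points to tighten in your version: (i) the Taylor remainder of $e^{-tK(\mu)/N}$ must be controlled \emph{in the presence of} the unexpanded factor $e^{\pm t|\mu|/(2N)}$, and the relevant input is $2K(\mu)\geq -\ell(\mu)|\mu|$ from Proposition~\ref{prop:bound_content} rather than Lemma~\ref{lem:lb-casimirs} as you cite; (ii) your appeal to Proposition~\ref{prop:symmetry_partitions} with the non-polynomial weight $e^{\pm t|\mu|/(2N)}$ goes slightly beyond its stated hypothesis, though the proof (which only uses $\sum_{\mu\vdash n}K(\mu)^{\text{odd}}=0$) covers it immediately. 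Your ``empty covering'' device for the $n=0$ constant is a clean way to handle a boundary term the paper leaves implicit.
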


For $\tau\in\{A',A\}$, the functionals $\Phi_{t,N}^{p,\tau}$ are polynomial functions of the degrees of the ramified coverings, and converge to a holomorphic function $\Phi_{t,N}^{\tau}$ as $p\to\infty$:
\begin{equation}
\label{def:phiA'}
\Phi_{t,N}^{A'}(X_1,X_2)=\lim_{p\to\infty}\Phi_{t,N}^{p,A'}(X_1,X_2)=\sum_{n\in\Z}e^{-\frac{t}{2}n^2}\cosh\left(\frac{t}{N} n(\deg(X_1)-\deg(X_2))\right),
\end{equation}
\begin{equation}
\label{def:phiA}
\Phi_{t,N}^{A}(X_1,X_2)=\lim_{p\to\infty}\Phi_{t,N}^{p,A}(X_1,X_2)=e^{\frac{t}{2N^2}(\deg(X_1)-\deg(X_2))^2}.
\end{equation}
It would have been more satisfying to replace $\Phi_{t,N}^{p,\tau}$ by $\Phi_{t,N}^\tau$ in \eqref{eq:integral_A}, just like in \eqref{eq:integral_BCD}, but unfortunately, $\Phi_{t,N}^{A}$ and $\Phi_{t,N}^{A'}$ are not integrable with respect to $\rho_t^{\otimes 2}$ (see Proposition~\ref{prop:non_integrability_PhiA} in the Appendix). In any case, the large-$N$ behavior of $\Phi_{t,N}^{p,\tau}$ is quite trivial (and does not depend on $p$):
\[
\lim_{N\to\infty}\Phi_{t,N}^{p,A'}(X_1,X_2)=\theta(q_t),\quad \forall p\geq 1,
\]
\[
\lim_{N\to\infty}\Phi_{t,N}^{p,A}(X_1,X_2)=1,\quad \forall p\geq 1,
\]
\[
\lim_{N\to\infty}\Phi_{t,N}^{\tau}(X)=1,\quad \forall \tau\in\{B,C,D\}.
\]

\subsubsection{Cutoff on degree}

Now, let us rather consider a cutoff on the degree: from the point of view of random partitions, it means to put a restriction on the sizes of partitions. Such cutoff has already been successfully used in similar contexts: \cite{Lem,Mag,Lem3}. In particular, in \cite{Lem,Lem3}, a growth condition that depends on some power $N^\gamma$ with a small $\gamma>0$ has been allowed, leading to precise estimates of some observables of random partitions and random highest weights. We will apply the same settings here.

\begin{theorem}[Cutoff on degree]\label{thm:main_alt}
Let $(G_N)_{N\geq 1}$ be a sequence of compact classical groups of type $\tau\in\{A',A,B,C,D\}$, with $G_N\subset\GL_N(\C)$. For any $t>0$ and any $\gamma\in(0,1)$, we have as $N\to\infty$:
\begin{enumerate}
\item If $\tau\in\{A',A\},$ with $\Phi_{t,N}^\tau$ as defined in \eqref{def:phiA'} and \eqref{def:phiA} respectively,
\begin{align}
\begin{split}
\Tr(e^{\frac{t}{2}\Delta}) = & \int_{\mathcal{R}^2}\mathbf{1}_{\{\deg(X_1),\deg(X_2)\leq N^\gamma\}}\Phi_{t,N}^\tau(X_1,X_2)\\
&\times N^{\chi(X_1)+\chi(X_2)}d\rho_t^{\otimes 2}([X_1],[X_2])+O_t(e^{-\frac{t}{8}N^\gamma}),
\end{split}
\end{align}
\item If $\tau\in\{B,C,D\}$, with $\Phi_{t,N}^\tau$ as defined in \eqref{eq:Phi_BD} and \eqref{eq:Phi_C} respectively,
\begin{equation}
\Tr(e^{\frac{t}{2}\Delta})=\int_{\mathcal{R}}\mathbf{1}_{\{\deg(X)\leq N^\gamma\}}\Phi_{t,N}^\tau(X)N^{\chi(X)}d\rho_t([X])+O_t(e^{-\frac{t}{12}N^\gamma}).
\end{equation}
\end{enumerate}

\end{theorem}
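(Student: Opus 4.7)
I start, for each type $\tau\in\{A',A,B,C,D\}$, from the random-partition representation of $\Tr(e^{\frac t2\Delta_{G_N}})$ given by Corollary~\ref{cor:heat_trace_expectation}, and replace the Taylor-expansion step of Theorem~\ref{thm:asympt_expansion_detailed} by an exact summation that becomes feasible only after truncating the partitions to $\{|\alpha|,|\beta|\leq N^\gamma\}$ (resp.\ $\{|\mu|\leq N^\gamma\}$). Before truncating, one extracts the prefactor $\Phi^\tau_{t,N}$ defined in \eqref{def:phiA'}, \eqref{def:phiA}, \eqref{eq:Phi_BD}, \eqref{eq:Phi_C}. For type $A'$, completing the square in the $n$-exponent and using the symmetry $n\mapsto-n$ gives
\[
\sum_{n\in\Z}e^{-\frac t2 n^2-\frac t N n(|\alpha|-|\beta|)}=\Phi^{A'}_{t,N}(|\alpha|,|\beta|);
\]
for type $A$, the factor $\exp(\frac t{2N^2}(|\alpha|-|\beta|)^2)$ coming from the $F_2^A$ correction in Corollary~\ref{cor:heat_trace_expectation} is by definition $\Phi^A_{t,N}(|\alpha|,|\beta|)$; and for types $B,C,D$ the linear contributions $\pm\tfrac12|\mu|$ inside $F^\tau$ yield the factor $\Phi^\tau_{t,N}(X)=e^{\pm\frac t{2N}|\mu|}$ at once. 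In the $D$ case, the auxiliary sum $L_N(\mu)$ of \eqref{def:LN} equals $1$ up to $O_t(q_t^{(N-1)/4})$ uniformly in $\mu$ (as in Step 1bis of the proof of Theorem~\ref{thm:asympt_expansion_detailed}), and this remainder is absorbed in the final exponential error.

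\textbf{Cutoff and tail estimate (the main obstacle).} Split each sum as its contribution on $\{|\alpha|,|\beta|\leq N^\gamma\}$ (resp.\ $\{|\mu|\leq N^\gamma\}$) plus a remainder on the complementary event. Since $\gamma<1$, the estimate $\ell(\alpha)\leq |\alpha|\leq N^\gamma$ is smaller than each of $A_N, B_N, N, N-2$ for $N$ large enough, so the indicators $\mathbf{1}_{\Lambda_N^{(i)}}$ from Proposition~\ref{prop:dual_sets} are automatically $1$ on the truncated event and may be dropped. Controlling the remainder is the main technical point: the factor $e^{-\frac t N F^\tau(\mu)}$ can behave like $e^{c|\mu|}$ when $F^\tau$ is very negative, so the Gibbs weight $q_t^{|\mu|}=e^{-\frac t2|\mu|}$ alone is not summable against it. The key observation is that the product $q_t^{|\mu|}e^{-\frac t N F^\tau(\mu)}$ is exactly the Casimir weight $e^{-\frac t2 c_2^\tau(\lambda)}$, up in the $A'$ case to a theta-like factor in $n$ bounded uniformly in $|\alpha|-|\beta|$. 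Lemma~\ref{lem:lb-casimirs} then provides a constant $c_\tau>0$ such that $c_2^\tau(\lambda)\geq c_\tau(|\alpha|+|\beta|)$ in the unitary cases and $c_2^\tau(\lambda)\geq c_\tau|\mu|$ otherwise; summing the resulting geometric-type series past $N^\gamma$ and absorbing the subexponential partition-counting prefactor $p(n)$ yields the asserted errors $O_t(e^{-\frac t8 N^\gamma})$ for types $A,A'$ and $O_t(e^{-\frac t{12}N^\gamma})$ for types $B,C,D$, the constants being dictated by the respective lower bounds of Lemma~\ref{lem:lb-casimirs}.

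\textbf{Identification with the Hurwitz integral.} In the truncated main term, group partitions by their common size $n=|\alpha|$ (and independently $|\beta|$ in the unitary cases). Inserting the Taylor series of the exponential, using \eqref{eq:Hurwitz_torus}, and noting that odd Hurwitz numbers vanish by Proposition~\ref{prop:symmetry_partitions}, one obtains
\[
\sum_{\alpha\vdash n}e^{-\frac t N K(\alpha)}=\sum_{k\geq 0}\frac{(-t)^k H_1(n,k)}{k!\,N^k}=\sum_{k\geq 0}\frac{t^{2k}H_1(n,2k)}{(2k)!\,N^{2k}}.
\]
The Riemann--Hurwitz formula yields $\chi(X)=-2k$ for a degree-$n$ covering of $\Tbb$ with $2k$ generic ramification points, so combining the above display with the definition of $\rho_t$ and the identity \eqref{eq:conversion_integral_Hurwitz} recognises the truncated partition sum as the Hurwitz integral appearing in the statement; in the unitary cases, the factorization over the two independent partitions $\alpha$ and $\beta$ produces the integral on $\mathcal{R}^2$, while in the remaining cases only a single covering is involved. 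All exchanges of summation are legitimate by positivity of every term, which concludes the proof.
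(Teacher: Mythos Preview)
Your proposal is correct and follows essentially the same route as the paper: start from Corollary~\ref{cor:heat_trace_expectation}, truncate to $\{|\alpha|,|\beta|\leq N^\gamma\}$ (resp.\ $\{|\mu|\leq N^\gamma\}$), control the tail via the Casimir lower bounds of Lemma~\ref{lem:lb-casimirs} (including the uniform bound on the shifted theta sum for type $A'$ and the $L_N(\mu)=1+O_t(q_t^{(N-1)/4})$ reduction for type $D$), and then expand the exponentials of contents to identify the truncated sum with the Hurwitz integral. Two cosmetic remarks: for type $A'$ no completion of the square is needed, the symmetry $n\mapsto-n$ alone gives $\sum_{n\in\Z}e^{-\frac t2 n^2-\frac tN n(|\alpha|-|\beta|)}=\Phi^{A'}_{t,N}$; and your final justification ``by positivity'' should really be ``by absolute convergence'', since the intermediate series $\sum_k(-t)^kK(\alpha)^k/(k!N^k)$ is not term-by-term positive before the odd terms are discarded.
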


This time, the set of integration depends on $N$, we will see in the proof of Theorem \ref{thm:main_alt} that it is the price to pay for getting an exponentially small remainder.

\subsection{Proof of Theorems~\ref{thm:mainbis} and~\ref{thm:main_alt}}

We will now prove Theorems~\ref{thm:mainbis} and~\ref{thm:main_alt} for all groups. Let us start with Theorem~\ref{thm:mainbis}: the general outline of the proof is to start from the asymptotic expansion obtained in Theorem~\ref{thm:asympt_expansion}, and to reorder the terms appropriately. In the case of special orthogonal and symplectic groups, we will also add negligible terms to the initial asymptotic expansion in order to get the right integrands.

\begin{proof}[Proof of Theorem \ref{thm:mainbis} for $\U(N)$]

According to Theorem~\ref{thm:asympt_expansion_detailed}, we have for any $p\geq 0$
\begin{align*}
\Tr(e^{\frac{t}{2}\Delta_{\U(N)}})= & \sum_{k=0}^p\frac{t^{2k}}{N^{2k}(2k)!}\sum_{n\in\Z}q_t^{n^2}\sum_{n_1,n_2\geq 1} q_t^{n_1+n_2}\\
&\times \sum_{\alpha\vdash n_1,\beta\vdash n_2}(K(\alpha)+K(\beta)+n(n_1-n_2))^{2k}+O_t(N^{-2p-2})\\
= & \sum_{k=0}^p\left(\frac{t}{N}\right)^{2k}\sum_{k_1+k_2+k_3=2k}\frac{1}{k_1!k_2!k_3!}\sum_{n\in\Z}q_t^{n^2}n^{k_3} \sum_{n_1,n_2\geq 1}q_t^{n_1+n_2}\\
& \times (n_1-n_2)^{k_3}\sum_{\alpha\vdash n_1}K(\alpha)^{k_1}\sum_{\beta\vdash n_2}K(\beta)^{k_2}+O_t(N^{-2p-2}).
\end{align*}
Note that by symmetry, if $k_3$ is odd, $\sum_{n\in\Z}q_t^{n^2}n^{k_3}=0$ and we can restrict to even values of $k_3$. The same happens with $k_1$ and $k_2$ thanks to Proposition~\ref{prop:symmetry_partitions}. We get
\begin{align*}
\Tr(e^{\frac{t}{2}\Delta_{\U(N)}})= & \sum_{k_1+k_2+k_3\leq p}\left(\frac{t^2}{N^2}\right)^{k_1+k_2+k_3}\frac{1}{(2k_1)!(2k_2)!(2k_3)!}\sum_{n\in\Z}q_t^{n^2}n^{2k_3} \\
& \times \sum_{n_1,n_2\geq 1}q_t^{n_1+n_2} (n_1-n_2)^{2k_3}H_1(n_1,2k_1)H_1(n_2,2k_2)+O_t(N^{-2p-2}).
\end{align*}
We can put the sums over $n_1,n_2$ outside, and slice the (finite) sum over $k_1+k_2+k_3$ on a first sum over $k_1+k_2$ and a second one over $k_3$:
\begin{align*}
\Tr(e^{\frac{t}{2}\Delta_{\U(N)}})= &\sum_{n_1,n_2\geq 1}q_t^{n_1+n_2} \sum_{k_1+k_2\leq p}\left(\frac{t^2}{N^2}\right)^{k_1+k_2}\frac{H_1(n_1,2k_1)H_1(n_2,2k_2)}{(2k_1)!(2k_2)!}\\
& \times \sum_{k_3\leq p-(k_1+k_2)}\frac{1}{(2k_3)!}\sum_{n\in\Z}q_t^{n^2}\left(\frac{tn(n_1-n_2)}{N}\right)^{2k_3} +O_t(N^{-2p-2}).
\end{align*}
Thanks to \eqref{eq:conversion_integral_Hurwitz}, the sum can be rewritten as an integral over $\mathcal{R}^2$ that is exactly \eqref{eq:integral_A}, by making the following identifications: $n_1=\deg(X_1)$, $n_2=\deg(X_2)$, $2k_1=-\chi(X_1)$, $2k_2=-\chi(X_2)$.
\end{proof}

The proof of Theorem~\ref{thm:mainbis} for $\SU(N)$ is pretty much identical and shall be omitted, although the computations slightly differ. In the case of $\SO(N)$, there is an additional argument that justifies to reproduce the whole proof.

\begin{proof}[Proof of Theorem~\ref{thm:mainbis} for $\SO(N)$]
According to Theorem~\ref{thm:asympt_expansion_detailed}, we have for any $p\geq 0$
\begin{align*}
\Tr(e^{\frac{t}{2}\Delta_{\SO(N)}})= & \sum_{k=0}^p\frac{(-t)^k}{N^kk!}\sum_{n\geq 1}q_t^n\sum_{\mu\vdash n}\left(K(\mu)-\frac12n\right)^k+O_t(N^{-p-1})\\
= & \sum_{k_1+k_2\leq p}\left(\frac{-t}{N}\right)^{k_1+k_2}\frac{(-1)^{k_2}}{2^{k_2}k_1!k_2!}\sum_{n\geq 1}q_t^nn^{k_2}\sum_{\mu\vdash n}K(\mu)^{k_1}+O_t(N^{-p-1})\\
= & \sum_{2k_1+k_2\leq p}\frac{t^{2k_1+k_2}}{N^{2k_1+k_2}2^{k_2}(2k_1)!k_2!}\sum_{n\geq 1}q_t^nn^{k_2}H_1(n,2k_1)+O_t(N^{-p-1}).
\end{align*}
Let us split the sum into a sum over $2k_1\leq p$ and a sum over $k_2\leq p-2k_1$:
\begin{align*}
\Tr(e^{\frac{t}{2}\Delta_{\SO(N)}})= & \sum_{2k_1\leq p}\frac{t^{2k_1}}{N^{2k_1}(2k_1)!}\sum_{n\geq 1}q_t^nH_1(n,2k_1) \sum_{k_2=0}^{p-2k_1}\frac{(tn)^{k_2}}{(2N)^{k_2}k_2!}+O_t(N^{-p-1}).
\end{align*}
Recall that for any $x\in\R$ and any $k_0\geq 1$,
\begin{equation}\label{eq:exp_Taylor}
\exp(x)=\sum_{k=0}^{k_0}\frac{x^k}{k!}+R_{k_0}(x),
\end{equation}
where
\begin{equation}\label{eq:exp_Taylor2}
R_{k_0}(x)=\frac{x^{k_0+1}}{k_0!}\int_0^1(1-s)^{k_0}\exp(sx)ds\leq \frac{\vert x\vert^{k_0+1}\exp(\vert x\vert)}{(k_0+1)!}.
\end{equation}
We apply this to $x=\frac{tn}{2N}$ and $k_0=p-2k_1$, so that
\[
\sum_{k_2=0}^{p-2k_1}\frac{(tn)^{k_2}}{(2N)^{k_2}k_2!}=e^{\frac{tn}{2N}}-R_{p-k_1}\left(\frac{tn}{2N}\right).
\]
We put it back in the expansion, using
\begin{equation}\label{eq:bound_integral_SO}
\tilde{R}_{N}(p,t):=\sum_{2k_1\leq p}\frac{t^{2k_1}}{N^{2k_1}(2k_1)!}\sum_{n\geq 1}q_t^n H_1(n,2k_1)R_{p-k_1}\left(\frac{tn}{2N}\right)=O_t(N^{-p-1}),
\end{equation}
and we obtain \eqref{eq:integral_BCD} by appropriate identifications. It remains to prove \eqref{eq:bound_integral_SO}: by \eqref{eq:exp_Taylor2}, the RHS of \eqref{eq:bound_integral_SO} is bounded in absolute value by
\[
\vert \tilde{R}_{N}(p,t)\vert\leq \sum_{2k_1\leq p}\frac{t^{2k_1}}{N^{2k_1}(2k_1)!}\sum_{n\geq 1}q_t^n H_1(n,2k_1)\frac{(tn)^{p-2k_1+1}\exp(\frac{tn}{2N})}{(2N)^{p-2k_1+1}(p-2k_1+1)!}.
\]
Applying Lemma \ref{prop:bound_Hurwitz}, we get now
\begin{align*}
\vert \tilde{R}_{N}(p,t)\vert\leq & \sum_{2k_1\leq p}\frac{t^{p+1}}{N^{p+1}(2k_1)!(p-2k_1+1)!}\sum_{n\geq 1}q_t^{n(1-1/N)} p(n)n^{4k_1+p+1}\\
\leq & \frac{t^{p+1}}{N^{p+1}(p+1)!}\sum_{2k_1\leq p}\binom{p+1}{2k_1}\sum_{n\geq 1}n^{3p+1}e^{-\frac{t}{2}n(1-1/N)+\pi\sqrt{2n/3}},
\end{align*}
where the last equality uses the inequality $p(n)\leq e^{\pi\sqrt{2n/3}}$ for all $n$ (see \cite[Theorem 14.5]{Apo}). Now \eqref{eq:bound_integral_SO} becomes obvious.
\end{proof}

The proof for $\Sp(N)$ is exactly the same as $\SO(N)$, simply by replacing $N$ by $2N$ and $(K(\mu)-n/2)^k$ by $(K(\mu)+n/2)^k$, thus we omit it.\\

Now let us turn to Theorem~\ref{thm:main_alt}: the proof will not depend on the asymptotic expansion, but starts from Corollary~\ref{cor:heat_trace_expectation}. From there, we introduce a new cutoff on random partitions, based on their size rather than on their length. We show that the remainder is exponentially small, and that the partial sum is directly the announced integral. In this case, the proof is identical\footnote{There is only a slight additional argument for $\SO(2N)$ which is the same as for Theorem~\ref{thm:asympt_expansion_detailed}, and which is that the term $L_N$ defined in~\eqref{def:LN} is equal to 1, up to an exponentially small correction, which we already proved.} for all groups, therefore we only detail it for $\U(N)$.

\begin{proof}[Proof of Theorem~\ref{thm:main_alt} for $\U(N)$]
Let us start with the expression from Corollary~\ref{cor:heat_trace_expectation}. For $\gamma\in(0,1)$, let
\[
\tilde{\Lambda}_N^{(1)}(\gamma)=\{(\alpha,\beta,n)\in\Pfr^2\times\Z:\vert\alpha\vert,\vert\beta\vert\leq N^\gamma\}
\]
be a new subset with a cutoff on the size of partitions instead of their lengths. For $N$ large enough, $\tilde{\Lambda}_N^{(1)}(\gamma)\subset\Lambda_N^{(1)}$. In this case, \begin{align*}
\Tr(e^{\frac{t}{2}\Delta_{\U(N)}})= & \frac{\theta(q_t)}{\phi(q_t)^2}\E\left[e^{-\frac{t}{N}F^{A'}(\alpha,\beta,n)}\mathbf{1}_{\tilde{\Lambda}_N^{(1)}(\gamma)}(\alpha,\beta,n)\right]+\sum_{(\alpha,\beta,n)\in\Lambda_N^{(1)}\setminus\tilde{\Lambda}_N^{(1)}(\gamma)} e^{-\frac{t}{2}c_2(\lambda_N^{A'}(\alpha,\beta,n))}.
\end{align*}
Let us put the first term aside and control the second one. According to Lemma \ref{lem:lb-casimirs},
\begin{align*}
\sum_{(\alpha,\beta,n)\in\Lambda_N^{(1)}\setminus\tilde{\Lambda}_N^{(1)}(\gamma)} e^{-\frac{t}{2}\left(c_2(\lambda_N^{A'}(\alpha,\beta,n))\right)}
 \leq & \sum_{(\alpha,\beta,n)\in\Lambda_N^{(1)}\setminus\tilde{\Lambda}_N^{(1)}(\gamma)} q_t^{\frac12\vert\alpha\vert+\frac12\vert\beta\vert+\left(n+\frac{\vert\alpha\vert-\vert\beta\vert}{N}\right)^2}\\
\leq & \sum_{\vert\alpha\vert>N^\gamma}\sum_{\beta\in\Pfr}\sum_{n\in\Z}q_t^{\frac12\vert\alpha\vert+\frac12\vert\beta\vert+\left(n+\frac{\vert\alpha\vert-\vert\beta\vert}{N}\right)^2}\\
& + \sum_{\vert\beta\vert>N^\gamma}\sum_{\alpha\in\Pfr}\sum_{n\in\Z}q_t^{\frac12\vert\alpha\vert+\frac12\vert\beta\vert+\left(n+\frac{\vert\alpha\vert-\vert\beta\vert}{N}\right)^2}.
\end{align*}
On the one hand, for any $x\in\R$ and any $t>0$,
\[
\sum_{n\in\Z}q_t^{(n+x)^2}\leq \sup_{y\in[0,1]}\sum_{n\in\Z}q_t^{(n+y)^2}\leq C_1(t),
\]
where $C_1(t)>0$ is a constant that only depends on $t$ (due to the fact that the sum in the right defines a continuous function of $y\in[0,1]$), and on the other hand, by Lemma~\ref{lem:dev_ineq_bis},
\[
\sum_{\vert\alpha\vert>N^\gamma}q_t^{\frac12\vert\alpha\vert}=
\sum_{n > N^\gamma} p(n) q_{t/2}^{n} \leq C_2(t)e^{-\frac{t}{8}N^\gamma}.
\]
Therefore,
\[
\sum_{(\alpha,\beta,n)\in\Lambda_N^{(1)}\setminus\tilde{\Lambda}_N^{(1)}(\gamma)} e^{-\frac{t}{2}c_2(\lambda_N^{A'}(\alpha,\beta,n))}\leq 2C_1(t)\phi(q_{t/2})^{-1}C_2(t)e^{-\frac{t}{8}N^\gamma}.
\]
From there,  it is easy to get
\[
\Tr(e^{\frac{t}{2}\Delta_{\U(N)}})= \frac{\theta(q_t)}{\phi(q_t)^2}\E\left[e^{-\frac{t}{N}F^{A'}(\alpha,\beta,n)}\mathbf{1}_{\tilde{\Lambda}_N^{(1)}(\gamma)}(\alpha,\beta,n)\right]+O_t\left(e^{-\frac{t}{8}N^\gamma}\right).
\]
Rewriting the expectation as a triple sum over $\alpha$ yields
\begin{align*}
\Tr(e^{\frac{t}{2}\Delta_{\U(N)}})= & \sum_{n_1,n_2\leq N^\gamma}q_t^{n_1+n_2}\sum_{n\in\Z}q_t^{n^2}e^{-\frac{t}{N}n(n_1-n_2)} \sum_{\alpha\vdash n_1}e^{-\frac{t}{N}K(\alpha)}\sum_{\beta\vdash n_2}e^{-\frac{t}{N}K(\beta)}+O_t(e^{-\frac{t}{8}N^\gamma}),
\end{align*}
and the end of the proof consists in expanding the exponentials of contents in power series and identifying the measures.
\end{proof}

\section{Applications: spectral asymptotics and gauge/string duality}

We now present two families of applications of the asymptotic analysis that have been performed in Section \ref{sec:asympt_exp} with the help of Section \ref{sec:prelim}.
In Section \ref{sec:Cardy}, we gather some spectral consequences~: we first explain that Lemma \ref{lem:casimirs} can be reformulated as a spectral gap result for the Casimir operator and then go to  more involved application. The large-$N$ limit of the central heat trace determines a limiting counting measure for the Casimir spectrum, whose growth differs sharply from Weyl's law. Section \ref{sec:string} is more of physical and geometric flavour: the random surface representation gives a rigorous large-$N$ form of the Yang--Mills/Hurwitz duality on the torus, and the coefficient formulas of Theorem~\ref{thm:asympt_expansion_detailed} further imply a Yang--Mills/Gromov--Witten correspondence.

\subsection{Large-$N$ Casimir eigenvalues counting}\label{sec:Cardy}

Let us start with the asymptotic behavior of the Laplacian, or, equivalently, the Casimir operator $-\Delta_{G_N}$. As announced, we derive here two applications of our formulas.

\paragraph{Asymptotic spectral gap}

The first one is an estimate of the spectral gap. It is a direct reformulation of Lemma \ref{lem:casimirs}, but we believe it is worth mentioning by analogy with the current active research around spectral gaps of large random geometric objects (see for instance the recent breakthroughs \cite{HM23,AM1,AM2,HMT,HMT2} for spectral gaps of random surfaces). Recall that the lowest eigenvalue of $-\Delta_{G_N}$ is always 0 and the only highest weight that contributes to this eigenvalue is the trivial highest weight $(0,\ldots,0)$. The spectral gap is by definition the second smallest eigenvalue, that is, the smallest possible Casimir $c_2(\lambda)$ for $\lambda\neq(0,\ldots,0)$.

\begin{proposition}[Asymptotic spectral gap]\label{prop:spectral_gap}
Let $(G_N)_{N\geq 1}$ be a sequence of compact classical groups of type $\tau\in\{A',A,B,C,D\}$.
As $N$ tends to infinity, the Casimir numbers of all highest weights concentrate around integer values (or half-integer values if $G_N=\SO(2N)$). More specifically, let $\alpha,\beta,\mu$ be integer partitions and $m\in\N$ and $n\in\Z$ be integers, then for $N$ large enough,
\begin{equation}\nonumber
c_2(\lambda_N^{A'}(\alpha,\beta,n))=\vert\alpha\vert+\vert\beta\vert+n^2+O\left(\frac1N\right),
\end{equation}
\begin{equation}\nonumber
c_2(\lambda_N^{A}(\alpha,\beta))=\vert\alpha\vert+\vert\beta\vert+O\left(\frac1N\right),
\end{equation}
\begin{equation}\nonumber
c_2(\lambda_N^{B}(\mu))=\vert\mu\vert+O\left(\frac1N\right),
\end{equation}
\begin{equation}\nonumber
c_2(\lambda_N^{C}(\mu))=\vert\mu\vert+O\left(\frac1N\right),
\end{equation}
\begin{equation}\nonumber
c_2(\lambda_N^{D}(\mu,m,n))=\vert\mu\vert+\frac12m(N+m-1)+O\left(\frac1N\right).
\end{equation}
In particular, the negative Laplacian $-\Delta_{G_N}$ has an asymptotic spectral gap of 1 as $N\to\infty$.
\end{proposition}

\begin{proof}
The expressions of Casimirs follow either from Lemma~\ref{lem:casimirs}, by isolating the dependence in $N$ in each formula, or from Proposition~\ref{prop:shifted-casimir} by noticing that $\mathbf{p}_1(\alpha)=\vert\alpha\vert$ and $\mathbf{p}_2(\alpha)=2K(\alpha)$ for any $\alpha\in\Pfr.$ For the spectral gap, we need to use the bijections from Proposition~\ref{prop:dual_sets} and express the highest weights in terms of partitions, then let $N$ tend to infinity. We see that in this regime, the lowest nonzero Casimirs are given by the partitions with smallest sizes: for $\U(N)$ we have either $(\alpha,\beta,n)=((1),\varnothing,0)$ or $(\varnothing,(1),0)$, for $\SU(N)$ we have $(\alpha,\beta)=((1),\varnothing)$ or $(\varnothing,(1))$, and for the other cases we have $\mu=(1)$, with $m=n=0$ for the case of $\SO(2N)$. In all these cases, the Casimir converges to 1, so that asymptotically there is no eigenvalue of $-\Delta$ in $(0,1)$.
\end{proof}

\paragraph{Asymptotic counting law}

The second application is Theorem~\ref{thm:Cardy}, whose detailed version is stated below.

\begin{theorem}\label{thm:Cardy-detailed}
For any classical type $\tau\in\{A',A,B,C,D\}$, let $(G_N)_{N\geq1}$ be a sequence of compact classical groups of type $\tau$ with $G_N\subset\GL_N(\C)$.
 There exists a locally finite measure $\mu_\tau$ on $[0,\infty)$ such that, at every
continuity point $\Lambda$ of $\mu_\tau$,
\begin{equation}\label{eq:cv_continuitypoint}
\mu_N([0,\Lambda])\build{\longrightarrow}{N\to\infty}{} \mu_\tau([0,\Lambda]).
\end{equation}
Moreover, if we set $\mathcal{N}_\tau(\Lambda):=\mu_\tau([0,\Lambda]),$ then, as $\Lambda\to\infty$,
\begin{equation}\label{eq:Cardy1}
\mathcal{N}_{A'}(\Lambda)\sim \frac{1}{4\pi}\Lambda^{-1/2}
e^{2\pi\sqrt{\frac{\Lambda}{3}}},
\end{equation}
\begin{equation}\label{eq:Cardy2}
\mathcal{N}_A(\Lambda)\sim \frac{1}{4\pi3^{1/4}}\Lambda^{-3/4}
e^{2\pi\sqrt{\frac{\Lambda}{3}}},
\end{equation}
and
\begin{equation}\label{eq:Cardy3}
\mathcal{N}_B(\Lambda)\sim \mathcal{N}_C(\Lambda)\sim \mathcal{N}_D(\Lambda)\sim
\frac{1}{2\pi\sqrt{2}}\Lambda^{-1/2}
e^{\pi\sqrt{\frac{2\Lambda}{3}}}.
\end{equation}
\end{theorem}

The main additional input that we require is a general Tauberian estimate obtained recently by Bringmann--Jennings-Shaffer--Mahlburg \cite{BJSM23}, which refines a landmark theorem by Ingham \cite{Ing41}.

\begin{theorem}[\cite{BJSM23}, Theorem 1.1]\label{thm:BJSM}
Suppose that $B(q)=\sum_{n\geq 0}b_n q^n$ is a power series with non-negative real coefficients and radius of convergence $\geq 1$. If there exist $\lambda,\alpha,\beta,\gamma\in\R$ with $\gamma>0$ such that
\begin{equation}
B(e^{-t})\sim \lambda\log\left(\frac1t\right)^\alpha t^\beta e^{\frac{\gamma}{t}},\quad t\downarrow 0, t \in \mathbb R^+,
\end{equation}
\begin{equation}
B(e^{-z})\ll \log\left(\frac{1}{\vert z\vert}\right)^\alpha \vert z\vert^\beta e^{\frac{\gamma}{\vert z\vert}},\quad \vert z\vert\downarrow 0,
\end{equation}
with $z=x+iy$, $y>0$, in each region\footnote{These regions are often named Stolz angles, or Stolz sectors, see e.g. \cite{Dur70}.} of the form $\vert y\vert\leq \Delta x$ for $\Delta>0$, then
\begin{equation}
\sum_{n=0}^\Lambda b_n\sim \frac{\lambda\gamma^{\frac{\beta}{2}-\frac14}\log(\Lambda)^\alpha}{2^{\alpha+1}\sqrt{\pi}\Lambda^{\frac{\beta}{2}+\frac14}}e^{2\sqrt{\gamma \Lambda}},\quad \Lambda\to\infty.
\end{equation}
\end{theorem}

\begin{proof}[Proof of Theorem~\ref{thm:Cardy-detailed}]
For each $N$, set
\[
Z_N(t):=\Tr\left(e^{\frac t2\Delta_{G_N}}\right)
=\int_{[0,\infty)} e^{-tE/2}d\mu_N(E),
\qquad
\mu_N=\sum_{\lambda\in\widehat G_N}\delta_{c_2(\lambda)}.
\]
By Corollary~\ref{cor:lim_pf}, for every $t>0,$
\[
\int_{[0,\infty)} e^{-tE/2}\,d\mu_N(E)=Z_N(t)\xrightarrow[N \rightarrow\infty]{} Z_\tau(t).
\]
Therefore, by Feller's extended continuity theorem for Laplace transforms \cite[Ch. XIII, \S 1, Theorem 2a]{FellerII}, one can write 
\[
Z_\tau(t)=\int_{[0,\infty)}e^{-tE/2}\,d\mu_\tau(E),
\]
with $\mu_\tau$ a locally finite measure on $[0, \infty)$ and  the convergence of the Laplace transforms implies
\[
\mu_N([0,\Lambda])\xrightarrow[N \rightarrow\infty]{} \mu_\tau([0,\Lambda])
\]
at every continuity point $\Lambda$ of $\mu_\tau$, which is exactly~\eqref{eq:cv_continuitypoint}.\\

Moreover, from Corollary~\ref{cor:lim_pf}, we have that
\[  Z_\tau(t) = B_\tau(e^{-t/2}),\]
with 
\[ B_\tau(q) = \theta(q)^{\varepsilon_\tau}\phi(q)^{-\delta_\tau},\]
where $\varepsilon_\tau, \delta_\tau \geq 0.$

From there, one can easily check that 
\[
B_\tau(q)=\sum_{n\geq 0} a_n^{(\tau)} q^n,
\]
where  the coefficients $a_n^{(\tau)}$ are nonnegative and the radius of convergence of the series is at least~1.

We also have that the measure $\mu_\tau$ can be written
\[
\mu_\tau=\sum_{n\geq 0} a_n^{(\tau)}\delta_n,
\]
so that
\[
\mathcal{N}_\tau(\Lambda):=\mu_\tau([0,\Lambda])=\sum_{n\leq\Lambda} a_n^{(\tau)},
\]
and we can apply   Theorem~\ref{thm:BJSM} to study the asymptotics as $\Lambda$ grows to infinity.

As $s\downarrow 0$, the modular asymptotics of the Jacobi theta and Euler functions~\cite{FlaSed09} yield
\[
\theta(e^{-s})\sim \sqrt{\frac{\pi}{s}},
\qquad
\phi(e^{-s})\sim \sqrt{\frac{2\pi}{s}}e^{-\pi^2/(6s)}.
\]
Hence
\[
B_\tau(e^{-s})
\sim \lambda_\tau s^{(\delta_\tau-\varepsilon_\tau)/2}
\exp\left(\frac{\gamma_\tau}{s}\right),
\]
where
\[
\lambda_\tau=\pi^{\varepsilon_\tau/2}(2\pi)^{-\delta_\tau/2},
\qquad
\gamma_\tau=\frac{\delta_\tau\pi^2}{6}.
\]
Moreover, the same modular formulas are uniform in every Stolz angle
$z=x+iy$, $|y|\leq\Delta x$, so that
\[
B_\tau(e^{-z})\ll |z|^{(\delta_\tau-\varepsilon_\tau)/2}
\exp\left(\frac{\gamma_\tau}{|z|}\right),
\qquad |z|\downarrow 0.
\]
Therefore, Theorem~\ref{thm:BJSM}  with
\[
\alpha=0,\qquad
\beta=\frac{\delta_\tau-\varepsilon_\tau}{2},\qquad
\gamma=\gamma_\tau,\qquad
\lambda=\lambda_\tau,
\]
 gives
\[
\mathcal{N}_\tau(\Lambda)
\sim
\frac{\lambda_\tau\gamma_\tau^{\frac{\delta_\tau-\varepsilon_\tau-1}{4}}}
{2\sqrt{\pi}}
\Lambda^{-\frac{\delta_\tau-\varepsilon_\tau+1}{4}}
\exp\left(2\sqrt{\gamma_\tau\Lambda}\right).
\]
Substituting the values of $(\varepsilon_\tau,\delta_\tau)$ yields~\eqref{eq:Cardy1}, \eqref{eq:Cardy2} and~\eqref{eq:Cardy3}, which proves the theorem.
\end{proof}

\subsection{Gauge/string dualities}\label{sec:string}

We now turn to the second application of the expansion, namely its interpretation in two-dimensional Yang--Mills theory. While the preceding subsection used only the Laplace transform interpretation of the central heat trace, the next one uses the identification of the same trace with the genus-one Yang--Mills partition function.

Euclidean quantum Yang--Mills theory on a manifold $M$ endowed with a volume form corresponds to path integrals with respect to the formal measure
\begin{equation}\label{eq:wrong_YM}
d\mu_{M,G}(A)=\frac{1}{Z}e^{-\frac{1}{2T}S_{\YM}(A)}\mathcal{D}A,
\end{equation}
defined on the space $\mathcal{A}/\mathcal{G}$ of connections modulo gauge transformations of a $G$-principal bundle over $M$, where $G$ is a compact Lie group named the structure group. The group of gauge transformations is $\mathcal{G}=\mathscr{C}^\infty(\Sigma,G)$ and acts on the space of connections by
\[
g\cdot A = g^{-1}A g + g^{-1}dg,\quad \forall g\in\mathcal{G},\ \forall A\in\mathcal{A}.
\]
Let us describe the different parts of \eqref{eq:wrong_YM}:
\begin{itemize}
\item The measure $\mathcal{D}A$ is the push-forward of a formal Lebesgue measure\footnote{It is formal because there exists no such measure in the sense of Lebesgue integration.} on $\mathcal{A}$ by the quotient map $\mathcal{A}\to\mathcal{A}/\mathcal{G}$,
\item The functional $S_{\YM}:\mathcal{A}\to\R$ is a gauge-invariant action called the \emph{Yang--Mills action}, defined by
\[
S_\YM(A)=\langle F_A\wedge\star F_A\rangle,
\]
where $F_A=dA+A\wedge A$ is the curvature of the connection, 
\item $T>0$ is a real parameter called the \emph{coupling constant},
\item The quantity $Z$ is a normalization constant called the \emph{partition function}, ensuring that the total mass of $\mu_{M,G}$ is 1.
\end{itemize}
The partition function on a closed oriented surface of genus $g\geq 0$ and total area $t>0$ admits the following Fourier expansion \cite{Rus,Wit}:
\begin{equation}\label{eq:Migdal}
Z_G(g,t)=\sum_{\lambda\in\widehat{G}}d_\lambda^{2-2g}e^{-\frac{t}{2}c_2(\lambda)}.
\end{equation}
From this formula, it appears that Theorems~\ref{thm:asympt_expansion} and~\ref{thm:main} describe the asymptotic behavior of the partition function $Z_{G_N}(1,t)$ on a two-dimensional torus with structure group $G_N$. See \cite{Lem4} for a recent review of the construction of the Yang--Mills measure and the computation of the partition function.

We will show in this section that the main results of the present paper admit important consequences for the Yang--Mills partition function $Z_{G_N}(1,t)$.

\paragraph{Yang--Mills/Hurwitz duality}

Since the breakthrough work by 't Hooft \cite{Hoo74} and further developments by Gross and Taylor \cite{Gro,GT,GT2} in two dimensions, the observables of $\YM_2$ are conjectured to be related to string theory; although many descriptions have been given in physics for this duality in the last 30 years (this two-dimensional gauge/string duality remains an active topic in physics as of today \cite{PS,AKS,BenTro}), essentially formal results have been obtained. Many other gauge/string dualities are conjectured: the most famous might be the AdS/CFT correspondence proposed by Maldacena \cite{Mal}, and more recent dualities seem to be related to the geometric Langlands conjecture \cite{KW}, but most dualities in dimension higher than 2 are assuming supersymmetry (as far as we know). It is worth mentioning that the techniques of random partitions that we apply have also been helpful for a particular case of four-dimensional supersymmetric gauge theory called \emph{Seiberg--Witten theory} \cite{NO}. In the case of two-dimensional Yang--Mills theory, from the perspective of Gross and Taylor, the dual string theory corresponds to a theory of random surfaces, more specifically random ramified coverings of the base space (Hurwitz theory). Most of the literature treats the case of $\SU(N)$, but other classical groups also attracted a bit of interest \cite{Ram2}; in the case of $\SU(N)$, a coupling of random ramified coverings appears in the expressions and is interpreted as a coupling of two \emph{chiral sectors} of the theory. We shall not elaborate on this notion of chirality, but mention that the \emph{chiral partition function} is obtained by replacing the sum over couples of partitions by a sum over a single partition $\lambda$ such that $\ell(\lambda)\leq N$. The idea is that the chiral partition function is easier to understand in the large-$N$ limit, both in terms of asymptotic expansion and in terms of string representation (see \cite{Nov2024}), but it fails to capture the complete asymptotic behavior of the full partition function in general. This is why we developed the tools described in \S\ref{sec:hw_partition_corresp} in order to find the correct correspondence between partitions and highest weights.

A first step towards a mathematical proof of gauge/string duality on compact surfaces has been achieved for $g=1$ and $G=\U(N)$ by the authors in \cite[Theorem 1.2]{LM2}: they proved that $Z_{\U(N)}(1,t)$ admits an asymptotic expansion which can be described in terms of Hurwitz numbers $H_1(n,2k)$. Theorem~\ref{thm:asympt_expansion} extends this result to all compact classical groups, and in particular for $\SU(N)$ we recover the asymptotic expansion, in genus 1, corresponding to the so-called \emph{Gross--Taylor asymptotic series} \cite[Eq. (2.2)]{CMR}. A random surface interpretation in the spirit of Gross--Taylor was found in \cite[Corollary 6.2]{LM2} for $\U(N)$ for the chiral partition function, but the case of the full partition function remained unsolved -- and supposedly harder. Theorems~\ref{thm:mainbis} and~\ref{thm:main_alt} fill this gap, and not only bring an answer for the full partition with gauge group $\U(N)$, but also for the other groups. In particular, in the large-$N$ limit, their statements imply \cite[Theorem 1]{Dij}, and Equation \eqref{eq:integral_A} is a mathematically rigorous equivalent of \cite[Eq. (3.6)]{GT2} in genus 1.

In the cases of $\U(N)$ and $\SU(N)$ the couplings of random coverings from Theorems~\ref{thm:mainbis} and~\ref{thm:main_alt} correspond to the couplings of chiral sectors mentioned earlier, and the absence of coupling in the cases of $\SO(N)$ and $\Sp(N)$ can be interpreted as an intrinsic chirality of the theory for these gauge groups. The factor $N^{\chi(X)}$ in the random surface representation already appeared in the foundational paper \cite{Hoo74} and can be found in other mathematical works more or less related to gauge/string duality \cite{PPSY,CPS}.

\paragraph{Yang--Mills/Gromov--Witten duality}

As an alternative to the theory of random ramified coverings initiated by Gross and Taylor, we shall now introduce a new gauge/string duality, where the dual string model is a topological sigma-model coupled to gravity. Initially developed by Witten in the late '80s \cite{Wit88,Wit91b}, it is based on similar path integrals from conformal field theory. Given a Riemannian manifold $X$ (playing the role of spacetime), one considers path integrals over the space $\mathcal{M}_X$ of maps $\phi:\Sigma\to X$, where $\Sigma$ is a closed Riemann surface, with respect to a measure
\[
\frac{1}{\mathcal{Z}^X}e^{-S(\phi)}\mathcal{D}\phi.
\]
The measure $\mathcal{D}\phi$ is the formal Lebesgue measure on $\mathcal{M}_X$, which is ill-defined as in the case of Yang--Mills theory, and the functional $S$ is a conformally invariant Lagrangian. The partition function $\mathcal{Z}^X$ (we use a calligraphic $\mathcal{Z}$ to avoid confusion with the Yang--Mills partition function $Z$) is defined by
\[
\mathcal{Z}^X=\int_{\mathcal{M}_X}e^{-S(\phi)}\mathcal{D}\phi.
\]
This formal definition of path integrals can be made rigorous under further assumptions on $X$, with the help of Gromov--Witten invariants
\[
\langle\tau_{k_1}(\omega)\ldots\tau_{k_n}(\omega)\rangle_d^X.
\]
These invariants correspond to integrals of some cohomology classes $\tau_{k_i}(\omega)$, called $\tau$-classes, with respect to the moduli space of stable maps $\pi:C\to X$ of degree $d$. We defer to Appendix~\ref{sec:GW} the precise definition of these invariants and the associated moduli space, as we do not need them directly. The $n$-point correlation function of the physical model, which also corresponds to the partition function with $n$ fixed boundary conditions, is defined as the generating function
\begin{equation}\label{eq:GW_PF}
\mathcal{Z}_q^X(z_1,\ldots,z_n)=\sum_{d\geq 0}q^d\sum_{k_1,\ldots,k_n}\langle \tau_{k_1}(\omega)\ldots\tau_{k_n}(\omega)\rangle_d^Xz_1^{k_1+1}\ldots z_n^{k_n+1},
\end{equation}
where $q\in(0,1)$ is a coupling parameter and $z_1,\ldots,z_n$ are arbitrary coupling constants.

Okounkov and Pandharipande developed a  Gromov--Witten/Hurwitz correspondence, which can be summarized in the following statement: Gromov--Witten invariants with target a nonsingular algebraic curve can be expressed combinatorially through shifted symmetric functions $\mathbf p_k$ defined in~\eqref{eq:pk}. The precise formula of the correspondence for a curve $X$ of genus $g$ is \cite[eq. (0.25)]{OP2}:
\begin{equation}\label{eq:GWH_bracket}
\langle\tau_{k_1}(\omega)\ldots\tau_{k_n}(\omega)\rangle_d^X=\sum_{\lambda\vdash d}\left(\frac{f_\lambda}{d!}\right)^{2-2g}\frac{\mathbf{p}_{k_1+1}(\lambda)}{(k_1+1)!}\ldots \frac{\mathbf{p}_{k_n+1}(\lambda)}{(k_n+1)!},
\end{equation}
where $f_\lambda$ is the dimension of the representation of the symmetric group $S_d$ indexed by $\lambda.$

If the target is $\Tbb$, which is an elliptic curve, it has genus $g=1$ and the factor involving $f_\lambda/d!$ vanishes. Moreover, by~\eqref{eq:p_1p_2} is easy to check that the Hurwitz numbers enumerating ramified coverings on the torus with simple ramifications are particular Gromov--Witten invariants:
\[
\left\langle\tau_1(\omega)^{2m}\right\rangle_{n}^{\mathbb{T}}=H_1(n,2m).
\]

The new gauge/string duality we propose is the following: instead of writing the partition function as an integral over a Hurwitz space (Theorem~\ref{thm:main}), we show that its asymptotic expansion obtained in Theorem~\ref{thm:asympt_expansion} can be expressed in terms of the partition function $\mathcal{Z}_{q_t}^\Tbb$ defined in~\eqref{eq:GW_PF}.

\begin{corollary}[Gauge/string duality for YM on a torus]\label{cor:gauge_string}
Let $(G_N)$ be a sequence of compact classical groups of the same type $\tau\in\{A',A,B,C,D\}$ with $G_N\subset\GL_N(\C)$. For any $t>0$, there are explicit functionals $\{\Psi_{k,t}^\tau, k\geq 0\}$ of the partition function $\mathcal{Z}_{q_t}^\Tbb$ such that for any $p\geq 1$, as $N\to\infty$,
\begin{equation}
Z_{G_N}(1,t) = \sum_{k=0}^p\frac{\Psi_{k,t}^\tau(\mathcal{Z}_{q_t}^\Tbb)}{N^k}+O_t(N^{-p-1}).
\end{equation}
\end{corollary}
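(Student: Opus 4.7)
The plan is to start from the closed-form expressions for the coefficients $a_k^\tau(t)$ given in Theorem~\ref{thm:asympt_expansion_detailed} and rewrite each of them, via the Gromov--Witten/Hurwitz correspondence \eqref{eq:GWH_bracket} specialized to the torus, as an explicit functional of $\mathcal{Z}_{q_t}^\Tbb$. Recalling that by \eqref{eq:Migdal} one has $Z_{G_N}(1,t)=\Tr(e^{\frac{t}{2}\Delta_{G_N}})$, it is then enough to substitute these expressions into Theorem~\ref{thm:asympt_expansion}.

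First, I would expand the (multi)nomial expressions appearing inside each expectation, reducing $a_k^\tau(t)$ to a finite linear combination of quantities of the form
\[
S_{j,\ell}(q_t):=\sum_{n\geq 0}q_t^n\,n^{\ell}\sum_{\lambda\vdash n}K(\lambda)^{2j},
\]
possibly multiplied in pairs (one indexed by $\alpha$ and one by $\beta$ in the $\U(N)$ and $\SU(N)$ cases) and, for $\U(N)$, further multiplied by the explicit $t$-dependent constant $\theta(q_t)^{-1}\sum_{n\in\Z}q_t^{n^2}n^{2m}$ coming from the integer Gaussian factor. Proposition~\ref{prop:symmetry_partitions} ensures that only even powers of $K(\lambda)$ and of the Gaussian variable $n$ contribute, and the overall prefactor $1/\phi(q_t)$ or $1/\phi(q_t)^2$ of $a_k^\tau(t)$ cancels the normalization constant of $\Ufr(q_t)$ entering the expectations.

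Next, since for $g=1$ the dimension factor $(f_\lambda/d!)^{2-2g}$ in \eqref{eq:GWH_bracket} is trivial and $\mathbf{p}_2(\lambda)=2K(\lambda)$, we have
\[
\sum_{\lambda\vdash n}K(\lambda)^{2j}\;=\;H_1(n,2j)\;=\;\langle\tau_1^{2j}\rangle_n^\Tbb.
\]
It follows that $\sum_{n\geq 1}q^n H_1(n,2j)$ equals the coefficient of $z_1^2\cdots z_{2j}^2$ in $\mathcal{Z}_q^\Tbb(z_1,\ldots,z_{2j})$, which may be extracted as $\tfrac{1}{4^j}\partial_{z_1}^2\cdots\partial_{z_{2j}}^2\mathcal{Z}_q^\Tbb(z_1,\ldots,z_{2j})\big|_{z=0}$. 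The polynomial factor $n^\ell$ in $S_{j,\ell}(q_t)$ is then restored by applying the Euler operator $(q\,d/dq)^\ell$ to this expression before specializing at $q=q_t$. Hence each $S_{j,\ell}(q_t)$, and consequently each $a_k^\tau(t)$, is an explicit functional $\Psi_{k,t}^\tau(\mathcal{Z}_{q_t}^\Tbb)$.

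The main obstacle will be the bookkeeping in the $\U(N)$ and $\SU(N)$ cases, where the expansion of $(K(\alpha)+K(\beta)+n(|\alpha|-|\beta|))^{2k}$ (respectively of the corresponding trinomial for $\SU(N)$) produces many cross-terms: one must exploit the independence of the random variables involved ($\alpha,\beta$ for $\SU(N)$, plus $n$ for $\U(N)$) to factor each bivariate expectation into a product of univariate ones, and carefully match both the number of auxiliary variables of $\mathcal{Z}_{q_t}^\Tbb$ and the order of the $q\,d/dq$ operators to the multinomial indices. Once this is done, the stated asymptotic expansion follows, with the additional remark that for $\tau\in\{A',A\}$ the coefficients of odd powers of $1/N$ vanish, so $\Psi_{k,t}^\tau=0$ whenever $k$ is odd.
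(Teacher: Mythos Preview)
Your proposal is correct and follows essentially the same route as the paper. The paper packages your quantities $S_{j,\ell}(q_t)$ as $\mathcal{D}^\ell\mathcal{F}_{1,2j}(q_t)$ (see \eqref{eq:GF_Hurwitz}), carries out the multinomial bookkeeping you describe in Lemma~\ref{lem:coef_Hurwitz}, and then applies the identity \eqref{eq:Hurwitz_GW_gf}, which is exactly your extraction of the coefficient of $z_1^2\cdots z_{2j}^2$ from $\mathcal{Z}_q^\Tbb$; one minor point is that the prefactor $\theta(q_t)$ in $a_{2k}^{A'}(t)$ cancels the $\theta(q_t)^{-1}$ from the Gaussian expectation, leaving the moments $\sum_{n\in\Z}q_t^{n^2}n^{2m}=\mathcal{D}^m\theta(q_t)$ as explicit $t$-dependent constants.
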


The duality is summarized in the following diagram, where the horizontal arrows describe approximate correspondences up to $O_t(N^{-p-1})$:
\[
\begin{tikzcd}
Z_{G_N}(1,t) \arrow[rr, leftrightsquigarrow, "Cor.~\ref{cor:gauge_string}"] \arrow[d, equal, "\text{Fourier}"'] & & \sum_{k=0}^p\frac{\Psi_{k,t}^\tau(\mathcal{Z}_{q_t}^\Tbb)}{N^k} \arrow[d, equal, "\text{GW/H}"]\\
\Tr(e^{\frac{t}{2}\Delta_{G_N}}) \arrow[rr, leftrightsquigarrow, "Thm.~\ref{thm:asympt_expansion}"] & & \sum_{k=0}^p\frac{a_k^\tau(t)}{N^k}
\end{tikzcd}
\]

If we notice that \eqref{eq:Migdal} identifies $Z_{G_N}(1,t)$ to $\Tr(e^{\frac{t}{2}\Delta_{G_N}})$, then Corollary~\ref{cor:gauge_string} is a mere rewriting of Theorem~\ref{thm:coef_Hurwitz}, whose proof is given at the end of the subsection. We conjecture that it admits a generalization to any genus $g\geq 2$; however, it is no longer only related to the expansion of the trace of the heat kernel, but also on the expansion of the Witten zeta function $\zeta_{G_N}$ \cite{Wit}, for which only partial results have been obtained \cite{Mag,Mag2,Lem3,Dah26}.

To conclude, let us note that physicists also conjectured that this duality goes beyond the partition function, and holds for Wilson loop expectations, which are integrals of traces of holonomies along loops in the base surface with respect to the Yang--Mills measure \cite{GT,GT2,CMR}. Up to date, the best rigorous results on Wilson loops have been obtained for the continuous Yang--Mills measure in the plane \cite{PPSY}, or the discrete measure in $\Z^d$ \cite{CPS}. The gauge/string duality is much less explicit, because the string side is described in terms of ``combinatorial" surfaces (i.e. maps) rather than ramified coverings or stable maps. The correspondence between the approaches of these works and ours has yet to be explored, but it is quite unlikely that these techniques of random maps can be easily extended to the continuous Yang--Mills measure on surfaces other than the plane, mainly because of nontrivial topological effects of the compact surfaces on all observables.

A first step is to express the coefficients of the asymptotic expansion~\eqref{eq:asympt_expansion} in terms of the generating function of Hurwitz numbers of base $\Tbb$ with $k$ generic ramifications $\mathcal{F}_{1,k}$. It is defined by
\begin{equation}
\mathcal{F}_{1,k}(q)=\sum_{n\geq 0}H_1(n,k)q^n.
\end{equation}
Let $\mathcal{D}=q\frac{d}{dq}$ be a homogeneous first-order differential operator on power series in $q$. Many observables of $q$-uniform random partitions can be rewritten as derivatives of this generating function: if $\alpha$ is a $q$-uniform random partition, for any $k\geq 0$ and any $m\geq 1$,
\begin{equation}\label{eq:GF_Hurwitz}
\phi(q)^{-1}\E[K(\alpha)^k\vert\alpha\vert^m]=\sum_{n\geq 0}n^mH_1(n,k)q^n=\mathcal{D}^m\mathcal{F}_{1,k}(q).
\end{equation}
In particular, $\mathcal{F}_{1,0}(q)=\sum_{n\geq 0}p(n)q^n=\phi(q)^{-1}$. The key observation to get the gauge/string duality is that the coefficients of~\eqref{eq:asympt_expansion} can be expressed in terms of $\mathcal{F}_{1,k}$. As we shall see, the proof of Theorem~\ref{thm:coef_Hurwitz} will follow.

\begin{lemma}\label{lem:coef_Hurwitz}
For any type $\tau\in\{A',A,B,C,D\}$ and any $t>0$, the coefficients $(a_k^\tau(t))$ of the expansion of the central heat trace can be expressed in terms of generating functions of Hurwitz numbers:
\begin{enumerate}
\item For $\tau\in\{A',A\}$,
\begin{equation}
a_{2k}^{\tau}(t) = \sum_{k_1+k_2+2k_3+2k_4=2k} \kappa_{k_1,k_2,k_3,k_4}^{\tau}(t)\mathcal{D}^{k_1}\mathcal{F}_{1,2k_3}(q_t)\mathcal{D}^{k_2}\mathcal{F}_{1,2k_4}(q_t),
\end{equation}
\item For $\tau\in\{B,C,D\}$,
\begin{equation}
a_k^\tau(t)=\sum_{k_1+2k_2=k}\kappa_{k_1,k_2}^\tau(t)\mathcal{D}^{k_1}\mathcal{F}_{1,2k_2}(q_t).
\end{equation}
\end{enumerate}
The coefficients $\kappa$ are explicit:
\begin{equation}
\kappa_{k_1,k_2,k_3,k_4}^{A'}(t)= \frac{(-1)^{k_2}t^{k_1+k_2+2k_3+2k_4}\mathcal{D}^{\frac{k_1+k_2}{2}}\theta(q_t)}{2^{k_1+k_2}k_1!k_2!(2k_3)!(2k_4)!},
\end{equation}
\begin{equation}
\kappa_{k_1,k_2,k_3,k_4}^A(t)= \frac{(k_1+k_2)!(-1)^{k_2}t^{\frac{k_1+k_2}{2}+2k_3+2k_4}}{2^{\frac{k_1+k_2}{2}}\frac{k_1+k_2}{2}!k_1!k_2!(2k_3)!(2k_4)!},
\end{equation}
\begin{equation}
\kappa_{k_1,k_2}^B(t) = \kappa_{k_1,k_2}^D(t) =  \frac{t^k}{2^{k_1}k_1!(2k_2)!},
\end{equation}
\begin{equation}
\kappa_{k_1,k_2}^C(t) = \frac{(-1)^{k_1}t^k}{2^{k_1}k_1!(2k_2)!}.
\end{equation}
\end{lemma}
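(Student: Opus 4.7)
The plan is to start from the explicit expressions for $a_k^\tau(t)$ provided by Theorem~\ref{thm:asympt_expansion_detailed} and unfold the powers appearing in each expectation by straightforward polynomial expansion, at which point the identity \eqref{eq:GF_Hurwitz}, $\E[K(\alpha)^{2j}|\alpha|^m]=\phi(q_t)\,\mathcal D^m\mathcal F_{1,2j}(q_t)$, translates the surviving expectations into derivatives of generating functions of Hurwitz numbers. Throughout, parities are controlled by Proposition~\ref{prop:symmetry_partitions} (whose one-partition analogue follows from the same transposition argument, since $K(\mu')=-K(\mu)$) and, in the unitary case, by the symmetry $n\mapsto -n$ of the Gaussian-integer law $\Gfr(q_t)$.

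For the orthogonal and symplectic cases $\tau\in\{B,C,D\}$, I would expand $\bigl(K(\mu)\mp\tfrac12|\mu|\bigr)^k$ with the binomial theorem, discard terms whose $K(\mu)$-exponent is odd (they have vanishing $q_t$-uniform expectation), label the $|\mu|$-exponent as $k_1$ and the $K(\mu)$-exponent as $2k_2$ so that $k_1+2k_2=k$, apply \eqref{eq:GF_Hurwitz}, and collect prefactors; the sign arising from $(\mp)^{k_1}$ is absorbed into the $(-1)^{k_1}$ in $\kappa^C$ and disappears in $\kappa^B=\kappa^D$ because $k+k_1=2(k_1+k_2)$ is even. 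For $\tau=A$, a single binomial expansion of $(|\alpha|-|\beta|)^{2k_3'}$ inside the formula of Theorem~\ref{thm:asympt_expansion_detailed}, combined with independence of $\alpha$ and $\beta$ to factor the expectation and two uses of \eqref{eq:GF_Hurwitz}, yields the announced form once we set $k_3:=k_1'$, $k_4:=k_2'$ and let $k_1,k_2$ be the powers of $|\alpha|,|\beta|$ (so that $k_1+k_2=2k_3'$ is automatically even and the constraint $k_1'+k_2'+k_3'=k$ becomes $k_1+k_2+2k_3+2k_4=2k$); the $2^{(k_1+k_2)/2}\,((k_1+k_2)/2)!$ in $\kappa^A$ comes directly from the $2^{k_3'}k_3'!$ already present in Theorem~\ref{thm:asympt_expansion_detailed}.

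The case $\tau=A'$ carries the bulk of the combinatorial bookkeeping. I would apply the multinomial expansion to $\bigl(K(\alpha)+K(\beta)+n(|\alpha|-|\beta|)\bigr)^{2k}$, then a further binomial expansion of the factor $(|\alpha|-|\beta|)^c$, use independence of the triple $(\alpha,\beta,n)$ to factor the expectation into three pieces, and handle each: Proposition~\ref{prop:symmetry_partitions} together with \eqref{eq:GF_Hurwitz} dispose of the $\alpha$- and $\beta$-factors, while the theta-function factor is treated via the elementary identity $\mathcal D^p\theta(q)=\sum_{n\in\Z}n^{2p}q^{n^2}$ (obtained by iterating $\mathcal D=q\,d/dq$ on $\theta(q)=\sum_n q^{n^2}$), giving $\E[n^{2p}]=\theta(q_t)^{-1}\mathcal D^p\theta(q_t)$ and $\E[n^{\mathrm{odd}}]=0$. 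The main obstacle is to track how the multinomial coefficient $\binom{2k}{a,b,c}$ combined with the nested binomial $\binom{c}{d}$ simplifies to $(2k)!/((2k_3)!(2k_4)!\,k_1!\,k_2!)$ under the parametrization $a=2k_3$, $b=2k_4$, $d=k_1$, $c-d=k_2$, and to verify that the resulting prefactor matches the announced $\kappa^{A'}_{k_1,k_2,k_3,k_4}(t)$; the constraint $a+b+c=2k$ yields $k_1+k_2+2k_3+2k_4=2k$ and, since $2k_3+2k_4$ is even, automatically forces $k_1+k_2$ to be even so that $\mathcal D^{(k_1+k_2)/2}\theta(q_t)$ makes sense.
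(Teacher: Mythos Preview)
Your proposal is correct and follows essentially the same approach as the paper: expand the expectations in Theorem~\ref{thm:asympt_expansion_detailed} via the binomial/multinomial theorem, exploit parity (Proposition~\ref{prop:symmetry_partitions}, its one-partition analogue, and the symmetry of $\Gfr(q_t)$), factor by independence, and convert the surviving moments using~\eqref{eq:GF_Hurwitz} and the theta-function identity $\mathcal D^p\theta(q)=\sum_{n\in\Z}n^{2p}q^{n^2}$. The paper only spells out case~$B$ and leaves the rest to the reader, so your write-up is in fact more detailed than the original; the bookkeeping you outline for $A'$ (the multinomial coefficient $(2k)!/(a!\,b!\,c!)$ together with $\binom{c}{d}$ collapsing to $(2k)!/((2k_3)!(2k_4)!k_1!k_2!)$) is exactly what is required.
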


\begin{proof}
The result follows from the expression of the coefficients obtained in the proof of Theorem~\ref{thm:asympt_expansion_detailed}, combined with \eqref{eq:GF_Hurwitz}. Let us detail the proof for $a_k^B(t)$ and leave the others to the reader. By the binomial theorem,
\begin{align*}
a_k^B(t)= & \sum_{k_1+k_2=k}\frac{(-t)^{k}}{2^kk_1!k_2!}\sum_{n=0}^\infty q_t^n(-n)^{k_1}\sum_{\mu\vdash n}(2K(\mu))^{k_2}\\
= & \sum_{k_1+k_2=k}\frac{(-1)^{k_2}t^{k}}{2^{k_1}k_1!k_2!}\sum_{n=0}^\infty q_t^n n^{k_1}H_1(n,k_2).
\end{align*}
As Hurwitz numbers $H_1(n,k_2)$ vanish for odd values of $k_2$, we can even restrict to $k_1+2k_2=k$ and replace $k_2$ by $2k_2$ in the expression. Then, using \eqref{eq:GF_Hurwitz}, we finally obtain
\[
a_k^B(t)= \sum_{k_1+2k_2=k}\frac{t^{k}}{2^{k_1}k_1!(2k_2)!}\mathcal{D}^{k_1}\mathcal{F}_{1,2k_2}(q_t),
\]
which is the expected formula.
\end{proof}

\begin{proof}[Proof of Theorem~\ref{thm:coef_Hurwitz}]
By Lemma~\ref{lem:coef_Hurwitz} we know that the coefficients are explicit functionals of generating functions of Hurwitz numbers. We conclude by using the fact that the generating function $\mathcal{F}_{1,k}$ can be easily expressed in terms of $\mathcal{Z}_q^\Tbb$:
\begin{equation}\label{eq:Hurwitz_GW_gf}
\mathcal{F}_{1,k}(q) = \frac{1}{2^{k}}\left(\frac{d}{dz_1}\right)^2\ldots \left(\frac{d}{dz_{k}}\right)^2\mathcal{Z}_q^\Tbb(z_1,\ldots,z_{k})\vert_{z_1=\ldots=z_{k}=0}.
\end{equation}
\end{proof}

\begin{remark}
All the results of this subsection can also be expressed in terms of another generating function, which appears in several papers about random partitions \cite{BO,EO,Oko} under the name $n$-point function. The interested reader can find a discussion in Appendix~\ref{sec:GF_appendix} on how this other generating function is related to $\mathcal{Z}_q^\Tbb$.
\end{remark}

\appendix

\section{Estimates on Hurwitz numbers and applications}

In this appendix, we give simple yet useful bounds on Hurwitz numbers $H_1(n,k)$ and their consequences in terms of integration, in particular the fact that $\rho_t$ has infinite mass and that the functions $\Phi_{t,N}^A$ and $\Phi_{t,N}^{A'}$ are not integrable with respect to $\rho_t$.

\begin{lemma}\label{prop:bound_Hurwitz}
For any $n\geq 1$ and $k\geq 0$,
\begin{equation}\label{eq:bound_Hurwitz}
4^{-k}(n-1)^{4k}\leq H_1(n,2k)\leq p(n)n^{4k}.
\end{equation}
\end{lemma}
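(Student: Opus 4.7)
The plan is to exploit the representation-theoretic formula for Hurwitz numbers on the torus established in~\eqref{eq:Hurwitz_torus},
\[
H_1(n,2k)=\sum_{\alpha\vdash n}K(\alpha)^{2k},
\]
so that both bounds reduce to simple estimates on the total content $K(\alpha)$. Since the summands are nonnegative even powers, the task splits neatly into a uniform upper bound on $|K(\alpha)|$ and the selection of a single partition that already saturates the lower bound.

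For the upper bound I would invoke Proposition~\ref{prop:bound_content}, which gives $|K(\alpha)|\leq |\alpha|^2=n^2$ for every $\alpha\vdash n$. Raising to the power $2k$ yields $K(\alpha)^{2k}\leq n^{4k}$, and summing over the $p(n)$ partitions of $n$ gives $H_1(n,2k)\leq p(n)n^{4k}$.

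For the lower bound I would drop all but a single term in the sum. The natural candidate is the one-row partition $\alpha=(n)$, whose cells have contents $0,1,\ldots,n-1$, so that
\[
K((n))=\sum_{j=1}^n (j-1)=\frac{n(n-1)}{2}.
\]
Since all other terms $K(\beta)^{2k}$ are $\geq 0$, this yields
\[
H_1(n,2k)\geq \left(\frac{n(n-1)}{2}\right)^{2k}=\frac{n^{2k}(n-1)^{2k}}{4^k}\geq \frac{(n-1)^{4k}}{4^k},
\]
the last inequality being simply $n\geq n-1$.

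There is no real obstacle here; the lemma is a one-line consequence of~\eqref{eq:Hurwitz_torus} together with Proposition~\ref{prop:bound_content} on one side and the explicit computation of $K((n))$ on the other. The only tiny choice to make is that of a partition realizing the lower bound, for which the single-row partition (equivalently, one could use its transpose $(1^n)$, which has $K=-\tfrac{n(n-1)}{2}$ and therefore the same $2k$-th power) is the simplest available.
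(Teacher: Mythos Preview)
Your proof is correct and follows essentially the same argument as the paper: both use \eqref{eq:Hurwitz_torus} together with the bound $|K(\alpha)|\leq n^2$ from Proposition~\ref{prop:bound_content} for the upper bound, and the single-row partition $(n)$ with $K((n))=\tfrac{n(n-1)}{2}$ for the lower bound. The only addition on your side is the explicit step $n^{2k}(n-1)^{2k}\geq (n-1)^{4k}$ and the remark about the transpose, which the paper leaves implicit.
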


\begin{proof}
Let us fix $n\geq 1$. For any $k\geq 0$,
\[
H_1(n,2k)=\sum_{\alpha\vdash n}K(\alpha)^{2k},
\]
and we apply \eqref{eq:bound_content_1} to get the upper bound. If we denote by $(n)$ the partition corresponding to a Young diagram with only one row of length $n$, we have $K((n))=n(n-1)/2,$ and we find that
\[
H_1(n,2k)\geq K((n))^{2k}=\left(\frac{n(n-1)}{2}\right)^{2k},
\]
which implies the lower bound.
\end{proof}

The following lemma states that the measure $\rho_t$ has infinite mass:

\begin{lemma}\label{lem:infinite_mass}
For any $p\geq 1$ and any $t>0$, the following sums are finite:
\[
\sum_{n=1}^p q_t^n\sum_{k=0}^\infty \frac{t^{2k}}{(2k)!}H_1(n,2k),\quad \sum_{k=0}^p \frac{t^{2k}}{(2k)!} \sum_{n=1}^\infty q_t^nH_1(n,2k).
\]
However, the double series
\[
\sum_{n\geq 1,k\geq 0}q_t^n\frac{t^{2k}}{(2k)!}H_1(n,2k)
\]
diverges.
\end{lemma}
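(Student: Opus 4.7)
The plan is to reduce all three assertions to Lemma~\ref{prop:bound_Hurwitz}. Since every summand is non-negative, Tonelli's theorem allows arbitrary reordering, so each question of convergence will become an elementary comparison test. The upper bound $H_1(n,2k)\le p(n)n^{4k}$ will handle the two finite partial sums, and the lower bound $H_1(n,2k)\ge 4^{-k}(n-1)^{4k}$ will drive the divergence of the full double series.

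For the first sum, I would fix $n\in\{1,\ldots,p\}$ and bound the inner $k$-series by $p(n)\sum_{k\ge 0}(tn^2)^{2k}/(2k)!=p(n)\cosh(tn^2)$, which is finite; summing over the $p$ values of $n$ preserves finiteness. For the second sum, I would fix $k\in\{0,\ldots,p\}$ and estimate the inner $n$-series by $\sum_{n\ge 1}p(n)n^{4k}q_t^n$. Using the classical inequality $p(n)\le e^{\pi\sqrt{2n/3}}$ already invoked in the proof of Theorem~\ref{thm:mainbis} for $\SO(N)$, the general term will decay like $n^{4k}\exp(-tn/2+\pi\sqrt{2n/3})$, which is summable because $q_t=e^{-t/2}<1$ dominates the sub-exponential growth $e^{\pi\sqrt{2n/3}}$; the outer sum over $k$ only has $p+1$ terms.

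For the divergence, I would apply the lower bound term-by-term to obtain, for each $n\ge 2$, $\sum_{k\ge 0}\frac{t^{2k}}{(2k)!}H_1(n,2k)\ge\sum_{k\ge 0}\frac{(t(n-1)^2/2)^{2k}}{(2k)!}=\cosh(t(n-1)^2/2)$. Consequently the $n$-th contribution to the double series will be at least $\tfrac12\exp(t((n-1)^2-n)/2)$, and since $(n-1)^2-n=n^2-3n+1\to+\infty$ this quantity tends to $+\infty$ as $n\to\infty$; the general term of the outer $n$-series thus fails to tend to zero and the double series diverges. There is no real obstacle here: the key input is the two-sided estimate from Lemma~\ref{prop:bound_Hurwitz}, which cleanly matches each side of the finite-vs-infinite dichotomy, and the only condition to verify is that the quadratic growth $(n-1)^2/2$ eventually dominates the linear weight $tn/2$ coming from $q_t^n$, which is immediate.
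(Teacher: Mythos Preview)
Your proposal is correct and follows essentially the same approach as the paper: both arguments use the upper bound $H_1(n,2k)\le p(n)n^{4k}$ from Lemma~\ref{prop:bound_Hurwitz} together with $p(n)\le e^{\pi\sqrt{2n/3}}$ for the two finite sums, and the lower bound $H_1(n,2k)\ge 4^{-k}(n-1)^{4k}$ to show that $q_t^n\sum_k \frac{t^{2k}}{(2k)!}H_1(n,2k)$ is bounded below by a quantity of order $\exp\bigl(\tfrac{t}{2}((n-1)^2-n)\bigr)$, which does not tend to zero. Your arithmetic in the divergence step is in fact slightly cleaner than the paper's (the paper writes $\cosh(\tfrac{t}{4}(n-1)^2)$ where $\tfrac{t}{2}(n-1)^2$ is the correct exponent, though this is harmless for the conclusion).
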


\begin{proof}
The first assertion boils down to prove that if we fix $n$ or $k$, the sum over the other variable converges. For any $n\geq 1$, by the upper bound of Lemma \ref{prop:bound_Hurwitz},
\[
A_n:=\sum_{k\geq 0}\frac{t^{2k}}{(2k)!}H_1(n,2k)\leq p(n)\sum_{k\geq 0}\frac{(tn^2)^{2k}}{(2k)!}=p(n)\cosh(tn^2)<\infty.
\]
In fact, by using \eqref{eq:Hurwitz_torus} and Fubini's theorem, we can even compute the exact value of $A_n$:
\[
A_n=\sum_{k=0}^\infty\sum_{\lambda\vdash n}\frac{t^{2k}}{(2k)!}K(\lambda)^{2k}=\sum_{\lambda\vdash n}\cosh(tK(\lambda)).
\]
Likewise, if we fix $k\geq 0$, then
\[
B_k:=\sum_{n\geq 1}q_t^nH_1(n,2k)\leq \sum_{n\geq 1}e^{-\frac{t}{2}n}p(n)n^{4k},
\]
and the RHS also converges because $p(n)\leq e^{\pi\sqrt{2n/3}}$ for $n$ large enough (see \cite{Apo} for instance).

However, as $n$ tends to infinity, we see that $q_t^nA_n$ diverges. By the lower bound of Proposition \ref{prop:bound_Hurwitz},
\[
A_n = \sum_{k\geq 0}\frac{t^{2k}}{(2k)!}H_1(n,2k)\geq \sum_{k\geq 0}\frac{1}{(2k)!}\left(\frac{t}{4}\left(n-1\right)^2\right)^{2k}=\cosh\left(\frac{t}{4}(n-1)^2\right),
\]
and
\[
e^{-\frac{t}{2}n}A_n \geq \frac12 e^{-\frac{t}{2}n}\left(e^{\frac{t}{4}(n-1)^2}+e^{-\frac{t}{4}(n-1)^2}\right)\geq \frac12 e^{\frac{t}{4}(n^2-4n+1)}.
\]
The RHS does not converge to zero (it even diverges as $n\to\infty$), therefore it is trivially not summable.
\end{proof}

We now wonder about integrability the coupling functions appearing in the random surface representation in the unitary and special unitary cases:

\begin{proposition}[Non-integrability of $\Phi_{t,N}^{A'}$ and $\Phi_{t,N}^A$]\label{prop:non_integrability_PhiA}
For any $p,N\geq 1$ and any $\tau\in\{A',A\}$, with $\Phi_{t,N}^{\tau}$ defined in \eqref{def:phiA'}  and \eqref{def:phiA}, the following integral diverges:
\begin{align*}
I_p^\tau(N)=\int_{\mathcal{R}^2}\mathbf{1}_{\{\chi(X_1)+\chi(X_2)\geq -2p\}} \Phi_{t,N}^{\tau}(X_1,X_2)N^{\chi(X_1)+\chi(X_2)}d\rho_t^{\otimes 2}([X_1],[X_2]).
\end{align*}
\end{proposition}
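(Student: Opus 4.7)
The plan is to reduce the claim to a single divergent term of the expansion and exploit the superexponential growth of $\Phi_{t,N}^\tau$ in the degree difference. Unfolding the definition of $\rho_t$, one rewrites
\[
I_p^\tau(N)=\sum_{\substack{k_1,k_2\geq 0\\ k_1+k_2\leq p}}\frac{t^{2k_1+2k_2}}{(2k_1)!(2k_2)!\,N^{2k_1+2k_2}}\sum_{n_1,n_2\geq 1}q_t^{n_1+n_2}H_1(n_1,2k_1)H_1(n_2,2k_2)\Phi_{t,N}^\tau(n_1,n_2),
\]
where $\Phi_{t,N}^\tau(n_1,n_2)$ denotes the value of $\Phi_{t,N}^\tau(X_1,X_2)$ at any pair with $\deg(X_i)=n_i$. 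Every factor is non-negative, so it suffices to show that a single term diverges. I would isolate $k_1=k_2=0$, use $H_1(n,0)=p(n)$ and $p(1)=1$, and then fix $n_2=1$ so that the task reduces to proving
\[
\sum_{n_1\geq 1}q_t^{n_1}p(n_1)\Phi_{t,N}^\tau(n_1,1)=+\infty.
\]

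For $\tau=A$, one simply has $\Phi_{t,N}^A(n_1,1)=e^{\frac{t}{2N^2}(n_1-1)^2}$, so the general term of the last series is bounded below by $\exp\bigl(-\tfrac{t}{2}n_1+\tfrac{t}{2N^2}(n_1-1)^2\bigr)$, whose exponent grows quadratically in $n_1$; divergence is immediate. For $\tau=A'$ the main step is to rewrite $\Phi_{t,N}^{A'}$ in a form that exhibits the same quadratic factor. Writing $d=n_1-n_2$ and using the symmetry $n\mapsto -n$ in the definition, one obtains
\[
\Phi_{t,N}^{A'}(n_1,n_2)=\sum_{n\in\Z}e^{-\frac{t}{2}n^2+\frac{t}{N}nd}=e^{\frac{t}{2N^2}d^2}\sum_{n\in\Z}e^{-\frac{t}{2}\left(n-\frac{d}{N}\right)^2}.
\]
The remaining theta-type sum is a continuous, $1$-periodic, strictly positive function of $d/N$, hence is bounded below by a constant $c_t>0$ uniformly in $d\in\Z$ and $N\geq 1$. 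This reduces the $A'$ case to the $A$ case and yields $\Phi_{t,N}^{A'}(n_1,1)\geq c_t\,e^{\frac{t}{2N^2}(n_1-1)^2}$, from which the same argument applies.

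There is no serious obstacle here: the only non-mechanical step is the theta-function identity used to extract the dominant factor $e^{\frac{t}{2N^2}d^2}$ from $\Phi_{t,N}^{A'}$, but it is just a completion of the square after symmetrizing the cosine. One small care point worth mentioning is that the constraint $\chi(X_1)+\chi(X_2)\geq -2p$ truncates only the sum over $(k_1,k_2)$, not the sum over degrees, so the divergence cannot be cured by enlarging $p$; in fact the argument shows that $I_p^\tau(N)=+\infty$ already on the subset $\{\chi(X_1)=\chi(X_2)=0\}$, which is contained in every domain of integration considered in the statement.
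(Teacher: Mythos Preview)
Your proof is correct. The treatment of the case $\tau=A$ is essentially the same as in the paper: both arguments reduce by non-negativity to a subsum and observe that the quadratic exponent $\frac{t}{2N^2}(n_1-n_2)^2$ dominates the linear damping $-\frac{t}{2}(n_1+n_2)$. Your reduction is a bit more explicit (fixing $k_1=k_2=0$ and $n_2=1$), but the substance is the same.

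For $\tau=A'$ you take a genuinely different, and arguably cleaner, route. The paper bounds $\Phi_{t,N}^{A'}$ from below by picking a single term $n=N$ in the sum over $\Z$, which produces a linear growth factor $e^{t(n_1-n_2)}$; combined with $e^{-\frac{t}{2}(n_1+n_2)}$ this already diverges in $n_1$. You instead complete the square to factor out $e^{\frac{t}{2N^2}d^2}$ and bound the residual theta-type sum below by a positive constant using its periodicity and continuity. This has the advantage of reducing the $A'$ case exactly to the $A$ case and extracting the stronger (quadratic) growth, at the modest cost of invoking the compactness argument for the uniform lower bound $c_t$. The paper's approach avoids that small analytic step but treats the two cases asymmetrically. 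Both are elementary; neither offers a real advantage beyond taste.

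Your closing remark that the divergence already occurs on $\{\chi(X_1)=\chi(X_2)=0\}$ is a nice observation and makes transparent why the cutoff in $p$ cannot help.
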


\begin{proof}
{\bf Case of $\U(N)$.}  Let us fix $p\geq 1$. Using \eqref{eq:bound_Hurwitz}, we get
\begin{align*}
I_p^{A'}(N)= & \frac{1}{2}\sum_{k_1+k_2\leq p} \sum_{n_1,n_2\geq 1}e^{-\frac{t}{2}(n_1+n_2)} \sum_{n \in \Z} e^{-\frac{t}{2}n^2}\left(e^{ \frac{tn}{N}(n_1-n_2)} + e^{-\frac{tn}{N}(n_1-n_2)}\right)\\
& \times N^{-2k_1-2k_2}H_1(n_1,2k_1)H_1(n_2,2k_2)\\
\geq & \sum_{k_1+k_2\leq p} \sum_{n_1 \geq n_2\geq 1}e^{-\frac{t}{2}(n_1+n_2)}e^{-\frac{tN^2}{2} + t(n_1 - n_2)} N^{-2k_1-2k_2}4^{-k_1-k_2}(n_1-1)^{4k_1}(n_2-1)^{4k_2}.
\end{align*}
and it is clear that for any fixed $k_1,k_2,N$,
\[ \sum_{n_1 \geq n_2\geq 1}e^{-\frac{t}{2}(n_1+n_2)}e^{-\frac{tN^2}{2} + t(n_1 - n_2)}  \times N^{-2k_1-2k_2}4^{-k_1-k_2}(n_1-1)^{4k_1}(n_2-1)^{4k_2}\]
diverges, so that it is obviously the case for $I_p^{A'}(N)$.

{\bf Case of $\SU(N)$.} Let us fix $p\geq 1$. Similarly, using \eqref{eq:bound_Hurwitz}, we get
\begin{align*}
I_p^A(N)= & \sum_{k_1+k_2\leq p} \sum_{n_1,n_2\geq 1}e^{-\frac{t}{2}(n_1+n_2)+\frac{t^2}{2N^2}(n_1-n_2)^2} N^{-2k_1-2k_2}H_1(n_1,2k_1)H_1(n_2,2k_2)\\
\geq & \sum_{k_1+k_2\leq p} \sum_{n_1,n_2\geq 1}e^{-\frac{t}{2}(n_1+n_2)+\frac{t^2}{2N^2}(n_1-n_2)^2} N^{-2k_1-2k_2}4^{-k_1-k_2}(n_1-1)^{4k_1}(n_2-1)^{4k_2}.
\end{align*}
However, it is clear that for any fixed $k_1,k_2,N$,
\begin{align*}
\sum_{n_1,n_2\geq 1}e^{-\frac{t}{2}(n_1+n_2)+\frac{t^2}{2N^2}(n_1-n_2)^2}(n_1-1)^{4k_1}(n_2-1)^{4k_2}
\end{align*}
diverges, so that it is obviously the case for $I_p^A(N)$.
\end{proof}

\section{Gromov--Witten invariants}\label{sec:GW}

This appendix is aimed to provide a self-contained definition of Gromov--Witten invariants, which are well-known objects for algebraic geometers but not necessarily to other readers who might be interested in the paper. We will keep it short, and hope that it could serve as a reading guide for the already overwhelming literature on the subject.

\begin{definition}
Let $X$ be a smooth projective curve, and $g,n\geq 0$ and $d>0$ be fixed integers. The \emph{moduli space of stable maps} $\overline{\mathcal{M}}_{g,n}(X,d)$ is the set of isomorphism classes of pointed algebraic maps $\pi:C\to X$ of degree $d$, where $C$ is a nonsingular curve of genus $g$ with $n$ distinct marked points $p_1,\ldots,p_n$.
\end{definition}

This moduli space has been introduced by Kontsevich \cite{Kon95}, and we refer to \cite{HKKPRV} for a general overview in the context of mirror symmetry. In the setting we care about, $X$ is a compact Riemann surface, $\pi:C\to X$ is a ramified covering of $X$ of degree $d$ with ramification locus $\{\pi(p_1),\ldots,\pi(p_n)\}$, and the isomorphism is described by conjugation of monodromies around the points $\pi(p_i)$. We will denote the stable map by $(C,p_1,\ldots,p_n,\pi)$. There are natural morphisms
\[
\ev_i:\left\lbrace\begin{array}{ccc}
\overline{\mathcal{M}}_{g,n}(X,d) & \longrightarrow & X\\
(C,p_1,\ldots,p_n,\pi) & \longmapsto & \pi(p_i),
\end{array}\right.
\]
which are the evaluations of the map at the marked points. In the setting of ramified coverings, they correspond to the projections onto the points of the ramification locus of $X$. The cotangent line bundles $L_1,\ldots,L_n\to\overline{\mathcal{M}}_{g,n}(X,d)$ are line bundles over the moduli space where the fiber of $L_i$ above the point $(C,p_1,\ldots,p_n,\pi)$ is the cotangent line $T_{p_i}C$. The most important cohomology classes in Gromov--Witten theory are $\psi$-classes and $\tau$-classes:
\begin{itemize}
\item we denote the $\psi$-classes by $\psi_i\in H^2(\overline{\mathcal{M}}_{g,n}(X,d),\Q)$ the first Chern class of the cotangent line bundle $L_i\to\overline{\mathcal{M}}_{g,n}(X,d)$:
\[
\psi_i = c_1(L_i).
\]
\item we denote the $\tau$-classes by $\tau_{k_1},\ldots,\tau_{k_n}$ defined by  $\tau_{k_i}=\psi_i^{k_i}\cup\ev_i^*(\gamma)$, for any $\gamma\in H^*(X,\Q)$, where $\cup$ denotes the cup product of cohomology classes. 
\end{itemize}
For the sake of simplicity, we shall rather write cup products by simple products in the sequel.

\begin{definition}
Let $X$ be a smooth projective curve, and $g,n\geq 0$ and $d>0$ be fixed integers. For any $k_1,\ldots,k_n\geq 0$, the \emph{Gromov--Witten invariant} $\langle\tau_{k_1}(\omega)\ldots\tau_{k_n}(\omega)\rangle_d^X$ is the integral\footnote{The integration is done with respect to the virtual fundamental class $[\overline{\mathcal{M}}_{g,n}(X,d)]^{\mathrm{vir}}$, which is the canonical equivalent of the fundamental class of a manifold. See \cite[\S 26.1]{HKKPRV} for further details.}
\begin{equation}\label{eq:stationary_GW}
\left\langle \tau_{k_1}(\omega)\ldots\tau_{k_n}(\omega)\right\rangle_d^X := \int_{\left[\overline{\mathcal{M}}_{g,n}(X,d)\right]^{\mathrm{vir}}}\prod_{i=1}^n \psi_i^{k_i}\ev_i^*(\omega),
\end{equation}
where $\omega\in H^2(X,\Q)$ is the Poincar\'e dual of the class of a point.
\end{definition}

\begin{remark}
In our definitions, we do not require that the curve $C$ is connected. See \cite{OP2} for a discussion about the connected vs disconnected Gromov--Witten theories.
\end{remark}

\section{Generating functions of integer partitions and the Bloch--Okounkov theorem}\label{sec:GF_appendix}

For any formal parameter $q$, let $F_q(t_1,\ldots,t_n)$ be the following generating function, called $n$-\emph{point function}:
\begin{equation}\label{eq:npoint}
F_q(t_1,\ldots,t_n)=\sum_{n\geq 1}q^n\sum_{\lambda\vdash n}\prod_{k=1}^n\sum_{i=1}^\infty t_k^{\lambda_i+\frac12-i}.
\end{equation}
It converges if $1<\vert t_i\vert$ and $\vert t_1\ldots t_n\vert<\frac1q$, and admits a meromorphic extension on $\C^n$. A first easy result is a correspondence between $F_q$ and $\mathcal{Z}_q^\Tbb$.

\begin{proposition}\label{prop:corresp_GF}
For any $q\in(0,1)$ and $z_1,\ldots,z_n\in\R$,
\begin{equation}
F_q(e^{z_1},\ldots,e^{z_n})=\mathcal{Z}_q^\Tbb(z_1,\ldots,z_n).
\end{equation}
\end{proposition}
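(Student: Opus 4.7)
The proposition reduces to a direct comparison of generating functions using the Okounkov--Pandharipande Gromov--Witten/Hurwitz correspondence \eqref{eq:GWH_bracket} specialized to the elliptic curve $\Tbb$. The plan is to recognize the inner sum in $F_q$ as the exponential generating function $\mathbf{e}(\lambda,\cdot)$ of the shifted power sums $\mathbf{p}_j(\lambda)$, expand it term by term in each $z_k$, and then feed the resulting content into \eqref{eq:GWH_bracket}.

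Setting $t_k = e^{z_k}$ in \eqref{eq:npoint}, each inner sum is
\[
\sum_{i=1}^\infty e^{z_k(\lambda_i+\tfrac12-i)} = \mathbf{e}(\lambda, z_k)
\]
by definition \eqref{eq:exp_pk}. Expanding $\mathbf{e}(\lambda, z_k) = \sum_{j_k\geq 0}\mathbf{p}_{j_k}(\lambda)z_k^{j_k}/j_k!$ using \eqref{eq:exp_pk} and interchanging summations --- justified on the convergence region $\{|t_i|>1,\ |t_1\cdots t_n|<q^{-1}\}$ by Fubini, then extended to the meromorphic continuation by uniqueness --- yields
\[
F_q(e^{z_1},\ldots,e^{z_n}) = \sum_{d\geq 1}q^d \sum_{j_1,\ldots,j_n\geq 0}\prod_{k=1}^n\frac{z_k^{j_k}}{j_k!}\sum_{\lambda\vdash d}\prod_{k=1}^n\mathbf{p}_{j_k}(\lambda).
\]

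Two observations now finish the argument. First, $\mathbf{p}_0(\lambda)=0$ for every partition $\lambda$, since both the telescoping contribution $\sum_i[(\lambda_i+\tfrac12-i)^0-(\tfrac12-i)^0]$ and the regularization constant $(1-2^0)\zeta(0)$ vanish; this allows me to restrict the multi-index to $j_k\geq 1$ with no loss. Second, on $X=\Tbb$ the genus $g=1$ trivializes the prefactor $(f_\lambda/d!)^{2-2g}$ in \eqref{eq:GWH_bracket}, which thus specializes to
\[
\sum_{\lambda\vdash d}\prod_{k=1}^n\frac{\mathbf{p}_{\ell_k+1}(\lambda)}{(\ell_k+1)!} = \langle\tau_{\ell_1}(\omega)\cdots\tau_{\ell_n}(\omega)\rangle_d^\Tbb.
\]
Substituting this identity after the change of indices $j_k=\ell_k+1$ recovers precisely the defining expansion \eqref{eq:GW_PF} of $\mathcal{Z}_q^\Tbb(z_1,\ldots,z_n)$ term by term.

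The only genuinely delicate step is the convergence/interchange bookkeeping in the first display, which is routine on the stated domain and extends to $\C^n$ by meromorphic continuation. All the conceptual content is absorbed in the black-box application of the Gromov--Witten/Hurwitz correspondence \eqref{eq:GWH_bracket}, so beyond this analytic verification no real obstacle is expected.
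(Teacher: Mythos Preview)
Your proof is correct and follows essentially the same route as the paper's: both recognize the inner sum in $F_q$ as $\mathbf{e}(\lambda,z_k)$ and then invoke the Gromov--Witten/Hurwitz correspondence \eqref{eq:GWH_bracket} in genus $1$ together with the generating-function identity \eqref{eq:exp_pk}. Your version is more explicit --- in particular, the observation that $\mathbf{p}_0(\lambda)=0$ to align the index shift $j_k=\ell_k+1$, and the convergence/interchange remark --- but these are exactly the details the paper's two-line proof leaves implicit.
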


\begin{proof}
Fix $q\in(0,1)$ and $z_1,\ldots,z_n\in\R$. We have
\[
F_{q_t}(e^{z_1},\ldots,e^{z_n})=\sum_{d\geq 0}q_t^d\sum_{\lambda\vdash d}\mathbf{e}(\lambda,z_1)\ldots\mathbf{e}(\lambda,z_n).
\]
For any $d\geq 0$, combining \eqref{eq:GWH_bracket} and \eqref{eq:exp_pk} gives
\[
\sum_{k_1,\ldots,k_n}\langle\tau_{k_1}(\omega)\ldots\tau_{k_n}(\omega)\rangle_d^\Tbb z_1^{k_1+1}\ldots z_n^{k_n+1}=\sum_{\lambda\vdash d}\mathbf{e}(\lambda,z_1)\ldots\mathbf{e}(\lambda,z_n),
\]
and the result follows.
\end{proof}
The generating function $\mathcal{F}_{1,k}$, which is expressed in terms of $\mathcal{Z}_q^\Tbb$, should also have an expression in terms of $F_q$. In fact, according to \cite[0.18]{BO}, we have
\begin{equation}
\mathbf{p}_k(\lambda)-(2^{-k}-1)\zeta(-k)=\left(t\frac{d}{dt}\right)^k\left.\left(\sum_{i=1}^\infty t^{\lambda_i+\frac12-i}-\frac{1}{\log t}\right)\right\vert_{t=1}
\end{equation}
In particular,
\[
\mathbf{p}_2(\lambda)=\left(t\frac{d}{dt}\right)^2\left.\left(\sum_{i=1}^\infty t^{\lambda_i+\frac12-i}-\frac{1}{\log t}\right)\right\vert_{t=1},
\]
because $-2k$ is a trivial zero of the Riemann zeta function, and we have
\begin{align*}
H_1(n,k)= & 2^{-k}\sum_{\lambda\vdash n}\mathbf{p}_{2}(\lambda)\ldots \mathbf{p}_{2}(\lambda)\\
= & 2^{-k}\left(t_1\frac{d}{dt_1}\right)^2\ldots\left(t_m\frac{d}{dt_k}\right)^2\left.\sum_{\lambda\vdash n}\left(\prod_{\ell=1}^k\sum_{i=1}^\infty t_\ell^{\lambda_i+\frac12-i}-\frac{1}{\log t_\ell}\right)\right\vert_{t_1=\ldots=t_k=1}.
\end{align*}
Expanding the product over $\ell\in\{1,\ldots,k\}$ and putting the result in the generating function $\mathcal{F}_{1,k}(q)$ clearly produces an expression of the generating function of Hurwitz numbers in terms of $F_q$, but the exact formula is rather unpleasant compared to~\eqref{eq:Hurwitz_GW_gf}. There is, however, an important merit of using $F_q$ instead of $\mathcal{Z}_q^\Tbb$: according to the Bloch--Okounkov theorem \cite[Theorem 6.1]{BO}, there is an explicit determinantal formula to compute $F_q(t_1,\ldots,t_k)$ in terms of derivatives of the Theta function
\[
\Theta(x;q)=\eta^{-3}(q)\sum_{n\in\Z}(-1)^nq^{\frac{(n+1/2)^2}{2}}x^{n+1/2}.
\]
It ensures that we are able to compute recursively any coefficient of the asymptotic expansion of Theorem~\ref{thm:asympt_expansion}. It also has an interpretation in terms of quasimodular forms, see \cite{Dij,OP2}.

\bibliographystyle{plain}
\bibliography{heat_trace}

\end{document}